\documentclass[journal,onecolumn,draftclsnofoot,]{IEEEtran}
\usepackage{etoolbox}
\usepackage{enumerate}
\usepackage[colorlinks=true]{hyperref}
 \hypersetup{
     colorlinks=true,
     linkcolor=blue,
     filecolor=blue,
     citecolor = Magenta,      
     urlcolor=cyan,
     }
\usepackage[dvipsnames]{xcolor}
\usepackage{graphicx}
\usepackage{bm}
\usepackage{mathtools}
\usepackage{graphicx}
\usepackage{epstopdf}
\usepackage{csquotes}
\usepackage{subfig}
\usepackage{mathtools}
\usepackage{amssymb, amsmath,amsthm, amsfonts}
\usepackage{ifthen}
\usepackage{tikz}
\usepackage{cite}
\usepackage{enumerate}
\usepackage{verbatim}
\usepackage{pifont}
\usepackage{bm}  
\usepackage{stackengine}
\usepackage{bbm}

\allowdisplaybreaks
\usepackage{accents}
\patchcmd{\subsubsection}{\itshape}{\bfseries}{}{}

\usepackage{xpatch}
\makeatletter
\AtBeginDocument{\xpatchcmd{\@thm}{\thm@headpunct{.}}{\thm@headpunct{}}{}{}}
\makeatother

\vfuzz2pt 
\hfuzz2pt 
\DeclareMathOperator*{\argmax}{arg\,max}

\DeclareSymbolFont{bbold}{U}{bbold}{m}{n}
\DeclareSymbolFontAlphabet{\mathbbold}{bbold}

\newtheorem{theorem}{Theorem}
\newtheorem{cor}[theorem]{Corollary}
\newtheorem{lemma}[theorem]{Lemma}
\newtheorem{prop}[theorem]{Proposition}

\newtheorem{remark}[theorem]{Remark}
\newtheorem{definition}{Definition}

\newcommand\Item[1][]{%
  \ifx\relax#1\relax  \item \else \item[#1] \fi
  \abovedisplayskip=0pt\abovedisplayshortskip=0pt~\vspace*{-\baselineskip}}

\newcommand{\norm}[1]{\left\Vert#1\right\Vert}
\newcommand{\abs}[1]{\left\vert#1\right\vert}

\newcommand{\bra}[1]{\langle#1|}
\newcommand{\ket}[1]{|#1\rangle}

\newcommand{\cA}{\mathcal{A}}

\newcommand{\cC}{\mathcal{C}}

\newcommand{\cE}{\mathcal{E}}

\newcommand{\cH}{\mathcal{H}}
\newcommand{\cI}{\mathcal{I}}

\newcommand{\cL}{\mathcal{L}}
\newcommand{\cM}{\mathcal{M}}

\newcommand{\cP}{\mathcal{P}}

\newcommand{\cS}{\mathcal{S}}
\newcommand{\cT}{\mathcal{T}}
\newcommand{\cU}{\mathcal{U}}

\newcommand{\cX}{\mathcal{X}}

\newcommand{\EE}{\mathbb{E}}

\newcommand{\HH}{\mathbb{H}}

\newcommand{\NN}{\mathbb{N}}

\newcommand{\PP}{\mathbb{P}}

\newcommand{\RR}{\mathbb{R}}

\newcommand{\tr}[1]{\operatorname{Tr}{\left[#1\right]}}

\newcommand{\ind}{\mathbbm 1}

\newcommand{\X}{\mathcal{X}}
\newcommand{\M}{\mathcal{M}}

\newcommand{\spec}[1]{\mathsf{spec}(#1)}

\newcommand{\qrel}[2]{\mathsf{D}\left(#1\middle\|#2\right)}
\newcommand{\mrel}[2]{\mathsf{D}_\mathcal{M}\left(#1\middle\|#2\right)}

\newcommand{\sanddiv}[3]{\tilde{\mathsf{D}}_{#3}\left(#1\middle\|#2\right)}
\newcommand{\sanddivvar}[4]{\tilde{\mathsf{D}}_{#3}\left(#1\middle\|#2;#4\right)}
\newcommand{\petzdiv}[3]{\bar{\mathsf{D}}_{#3}\left(#1\middle\|#2\right)}

\newcommand{\trightarrow}[1]{\overset{#1}{\longrightarrow}}

\newcommand\blfootnote[1]{%
	\begingroup
	\renewcommand\thefootnote{}\footnote{#1}%
	\addtocounter{footnote}{-1}%
	\endgroup
}

\definecolor{cblue}{rgb}{0.16, 0.32, 0.75}


\def\h2{\tilde h}

\def\hm1{\hat h_{-1}}


\title{Limit Distribution Theory for Quantum Divergences}
\author{Sreejith Sreekumar and Mario Berta}

\begin{document}
\maketitle
\vspace{-1 cm}
\begin{abstract}
Estimation of quantum relative entropy and its R\'{e}nyi generalizations is a fundamental statistical task in quantum information theory, physics, and beyond. While several estimators of these divergences have been proposed in the literature  along with their computational complexities explored, a limit distribution theory which characterizes the asymptotic fluctuations of the estimation error is still premature. As our main contribution, we characterize these   asymptotic distributions  in terms of Fr\'{e}chet derivatives of elementary operator-valued functions. We achieve this by leveraging an operator version of Taylor's theorem and identifying the regularity conditions needed.  
As an application of our results, we consider an estimator of quantum relative entropy based on  Pauli tomography of quantum states and show that the resulting asymptotic distribution is a centered normal, with its variance characterized in terms of the Pauli operators and states. We  utilize the knowledge of the aforementioned limit distribution to obtain asymptotic performance guarantees for a multi-hypothesis testing problem. 
\end{abstract}
\begin{IEEEkeywords}
Quantum divergences, limit distribution, divergence estimation, Fr\'{e}chet derivative, hypothesis testing
\end{IEEEkeywords}
\section{Introduction}
\blfootnote{S. Sreekumar (email: sreekumar@physik.rwth-aachen.de) and M. Berta are with the  Institute for Quantum Information at RWTH Aachen University, Germany. }
Estimation of a quantum state, also known as quantum state tomography, is an important problem  in quantum information theory, physics, and quantum machine learning, see e.g.,  \cite{Smithey-1993,Blume-Kohout_2010,Cramer-2010,Christandl-2011,HHJWY16,wright2016learn,OW16,Yuen2023improvedsample}.    In several applications, however, the quantity of interest may not be the entire state, but only a functional of it. Quantum divergences such as  quantum relative entropy \cite{Umegaki-62} 
and its R\'{e}nyi generalizations \cite{Renyi-60,Petz1985Quasi-entropiesAlgebra,Petz1986Quasi-entropiesSystems,muller2013quantum,wilde2014strong} form an important class of such functionals. They play a central role  in quantum information theory both in terms of characterizing fundamental limits as well as  applications, e.g.,  see the books   \cite{wilde2017quantum, tomamichel2015quantum, Hayashi-book-2016}. For instance, the quantum relative entropy characterizes the error-exponent in asymmetric binary quantum hypothesis testing \cite{Hiai-Petz-1991} and the Petz-R\'{e}nyi  divergence quantifies the exponent in quantum Chernoff bounds\cite{Nussbaum-Szkola-2006,Audenaert-2008}.   Owing to their significance, several estimators  of these measures have been proposed recently in the literature  and their performance investigated in terms of benchmarks such as copy and query complexity (see \textit{Related work} section below). However, a limit distribution theory which characterizes the asymptotic distribution of estimation error  is largely unexplored.

Here, we seek a limit distribution theory for the aforementioned  quantum divergences. Given   two quantum states $\rho$ and $\sigma$ with corresponding estimators  $\rho_n$ and $\sigma_n$, respectively, and a divergence $\mathsf{D}(\rho,\sigma)$, we want to identify the scaling  rate $r_n$ (or convergence rate $r_n^{-1}$) and the limiting variable $Z$ such that the following convergence in distribution (weak convergence) holds: 
\begin{align}
    r_n\big(\mathsf{D}(\rho_n,\sigma_n)-\mathsf{D}(\rho,\sigma)\big) \trightarrow{w} Z. \notag
\end{align}
Of interest is also the scenario where only one state, say $\rho$ or $\sigma$, is estimated and the other is known. Characterization of such limit distributions  have several  potential applications in quantum statistics and machine learning such as constructing confidence intervals for quantum hypothesis testing, asymptotic analysis of quantum algorithms, and quantum statistics (see \cite{Okano-2024, SGK-IT-2023,Pietrzak-2015} for some classical applications).

While limit distributions fully quantify the asymptotic performance, deriving such results for  estimators of quantum divergences are  challenging on account of two reasons. Firstly,   limit distributions need not always exist, as is well-known for relative entropy in the classical setting. Secondly, the non-commutative framework of quantum theory makes the analysis more involved. For tackling the first challenge, we use an operator version of Taylor's expansion with remainder and ascertain primitive  conditions for the existence of limits. The technical core of our contribution entails determining conditions that allow interchange of limiting operations on trace functionals of Fr\'{e}chet derivatives that appear in such an expansion. For handling  issues arising due to non-commutativity, we use appropriate integral expressions for operator functions and dual formulations for divergences.

Applying the aforementioned method  to quantum relative entropy, we  establish the following convergence in distribution (Theorem \ref{Thm:quantrelent-limdist}) when $r_n(\rho_n-\rho) \trightarrow{w} L_1$ and $r_n(\sigma_n-\sigma) \trightarrow{w} L_2$  for $\rho \neq \sigma$: 
\begin{align}
    r_n \big(\qrel{\rho_n}{\sigma_n}-\qrel{\rho}{\sigma}\big) \trightarrow{w}  \tr{L_1 (\log \rho-\log \sigma )-\rho D[\log \sigma](L_2)}. \notag
\end{align}
Here\footnote{Throughout, we consider logarithms to the base $e$.}, $D[f(A)](B)$ denotes the first-order Fr\'{e}chet derivative (see Section \ref{Sec:Prelims} for precise definitions) of an operator-valued function $f$ at operator $A$ in the direction of operator $B$, and $L_1$ (resp. $L_2$) denotes the weak limit of the estimator $\rho_n$ (resp. $\sigma_n$), appropriately centered and scaled. Analogous to the classical case, a faster convergence rate  is achieved when $\rho=\sigma$ with the limit characterized in terms of second-order derivatives. 
We then consider two prominent  quantum R\'{e}nyi divergences, namely, the Petz-R\'{e}nyi \cite{Petz1985Quasi-entropiesAlgebra} and the minimal or sandwiched R\'{e}nyi divergences  of order $\alpha$ \cite{muller2013quantum,wilde2014strong}. 
While the former admits  an analysis similar to Theorem \ref{Thm:quantrelent-limdist},   for the latter, we  consider a variational form based on dual expressions  and derive the limits by applying Taylor's theorem to an intermediate quantity which is easier to analyze.  
We also use a similar approach to derive the limit distribution for measured relative entropy estimation under a certain assumption on the uniqueness of the optimal measurement. 

As an application of our limit distribution results, we  characterize the asymptotic distribution of an estimator of quantum relative entropy based on Pauli tomography of  quantum states  $\rho,\sigma$. Specifically, we show  that 
\begin{align}
    \sqrt{n} \big(\qrel{\hat \rho_n}{\hat \sigma_n}-\qrel{\rho}{\sigma}\mspace{-2 mu}\big) \trightarrow{w}  W , \notag
\end{align}
where  $W$ is  a  centered Gaussian variable with a variance that depends on the states and Pauli operators. We then use this result to obtain performance guarantees for a multi-hypothesis testing problem for determining the quantum relative entropy between an unknown state $\rho$ and a known state $\sigma$. Assuming that identical copies of the unknown state are available for measurement, we first perform tomography  to obtain an estimate of $\rho$ and 
 then use the knowledge of the Gaussian limit  to design a test statistic (decision rule) that achieves any desired error level for appropriately chosen thresholds. The test statistic achieves the same performance even when the number of hypotheses scales at a sufficiently slow rate with the number of measurements. Such tests have potential applications to  auditing of  quantum differential privacy \cite{Hirche-2023}, as considered in \cite{SGK-ISIT-2023,SGK-IT-2023} for the classical case.

\subsection{Related Work}
Statistical analysis of estimators of classical information measures and divergences has been an active area of research over the past few decades. The relevant literature pertains broadly to showing consistency, quantifying convergence rates  of estimators (or equivalently sample complexity), and characterizing their  limiting distributions. Consistency and/or convergence rates for various estimators of $f$-divergences, which subsumes entropy and mutual information as special cases, have been studied in \cite{Paninski-2003,Wang-2005,Leonenko-2008,Perez-2008,kandasamy2015nonparametric, Singh-Poczos-2016,Noshad-2017,belghazi2018,Moon-2018,Nguyen-2010,berrett2019efficient,Yanjun-2020,SS-2021-aistats,sreekumar2021neural}. 
Limit distributions for several $f$-divergence estimators  such as those based on kernel density estimates, k-nearest neighbour methods, and plug-in methods have been established recently  \cite{Antos-Ioannis-2001,Zhang-2012,kandasamy2015nonparametric,Ioannis-Skoularidou-2016,Moon-2014,Goldfeld-Kato-2020,SGK-ISIT-2023,SGK-IT-2023}, while corresponding results  for R\'{e}nyi divergences have been studied in \cite{Pietrzak-2015}. Limit distribution theory has also been explored extensively in the optimal transport literature for the class of Wasserstein distances \cite{Barrio-1999,DBGU-2005,Sommerfeld2016Inference,del-barrio-loubes-2019,Tameling-2019,delBarrio-2022,MBWW-2022,HKSM-22}, as well as their regularized versions  \cite{Bigot-2019,Mena-2019,Klatt-2020,HLP-20,Goldfeld2020limit_wass,GX-21,delBarrio-2022,Gonzalez-2022,Sadhu-2022,GKNR-smooth-p-2022,GKRS-2022-entropicOT,GKRS-22}.

In the quantum setting,  computational complexities of various estimators of quantum information measures have been investigated under different input models \cite{Li2017QuantumEstimation,Subasi2019EntanglementCircuit,Jayadev-2020, Subramanian2019QuantumStates,Gilyen2019DistributionalWorld,Gur2021SublinearEntropy,chen2021variational,rethinasamy2021estimating,Wang2022QuantumEntropies,Wang2022NewDistances,Gilyen2022ImprovedEstimation,Wang2022QuantumSystems,
 Wang2023QuantumEstimation,GPSW-2024,quek2022multivariate}. Specifically,
 \cite{Jayadev-2020} established   copy complexity bounds characterizing the optimal dimension dependence for quantum R\'{e}nyi entropy estimation when independent copies of the state are available for measurement. \cite{Subramanian2019QuantumStates,Gur2021SublinearEntropy} considered entropy estimation under a quantum query model, which assumes  access to an oracle that prepares the input quantum state. 
 For limit distributional results in the quantum setting, the
asymptotic distribution for spectrum estimation of a quantum state based on the empirical Young's diagram (EYD) algorithm \cite{Alicki-1988,Keyl-Werner-2001} was determined in  \cite{Tracy-widom-1999,Houdre-Xu-2013}. 
 However, to the best of our knowledge, a limit distribution theory  for quantum divergences has  not been explored before.  
Here,  we study  this aspect focusing mostly on finite dimensional quantum systems (except in Section \ref{Sec:infdim} where we treat quantum relative entropy between density operators on an infinite-dimensional separable Hilbert space).

\subsection{Paper Organization}
The rest of the paper is organized as follows. Section \ref{Sec:Prelims} introduces the notation and preliminary concepts required for stating our results. The main results on limit distributions of quantum divergences are  presented in Section \ref{Sec:Mainres}. The applications, namely limit distributions of quantum relative entropy for  tomographic estimators and performance guarantees for a multi-hypothesis testing problem,  are discussed in Section \ref{Sec:applic}. This is followed by concluding remarks with avenues for future research in Section \ref{Sec:conclusion}. The proofs of the main results and applications are furnished in Section \ref{Sec:Proof} while those of the auxiliary lemmas are provided in the Appendix. 
\section{Preliminaries}\label{Sec:Prelims}
\subsection{Notation} \label{Sec:notation}
For most part, we consider a finite dimensional complex Hilbert space $\HH_d$ of dimension $d$. Denote  the set of linear operators from $\HH_d$ to $\HH_d$ by $\cL(\HH_d)$. Without loss of generality, we  identify $\HH_d$ and $\cL(\HH_d)$ with $\mathbb{C}^d$ and $\mathbb{C}^{d \times d}$, respectively. Denote the set of all $d \times d$  Hermitian, positive semi-definite, positive definite, and unitary operators by   $\cH_d$,   $\cP_d$, $\cP^+_d$ and $\cU_d$, respectively. Let $\cS_d $ denote the set of density operators, i.e., the set of elements of $ \cP_d$ with unit trace, and $\cS^+_d$ be its subset with strictly positive eigenvalues. We use $[A,B]:=AB-BA$ to represent the commutator of two operators $A$ and $B$.   $\tr{\cdot}$ and $\norm{\cdot}_p$ for $p \geq 1$ signifies the trace operation and Schatten $p$-norm, respectively. The notation $\leq$ denotes the L\"{o}wner partial order in the context of operators, i.e., for $A,B \in \cH_d$, $A \leq B$ means that $B-A \in \cP_d$. $\ind_{\cX}$ denotes indicator of a set $\cX$ and $I$ denotes the identity operator on $\mathbb{H}_d$. For linear operators $A,B$, $A \ll B$ designates that the support of $A$ is contained in that of $B$, and $A \ll \gg B$ means that $B \ll A \ll B$. 
$A^{-1}$ stands for the generalized (Moore–Penrose) inverse of an operator $A$. For $a,b \in \RR$, $a \vee b :=\max\{a,b\}$ and $a \wedge b :=\min\{a,b\}$. Lastly, we use $a \lesssim_{x} b$ to denote that $a \leq c_x b$ for some constant $c_x>0$ which depends only on $x$. 
\subsection{Weak convergence of Random Density Operators}\label{Sec:weakconv}
Let $(\Omega, \cA,\PP)$ be a sufficiently rich probability space on which all random  elements are defined. A sequence of random elements $(X_n)_{n \in \NN}$ taking values in a  topological space $\mathfrak{S}$  converges weakly to a random element $X$ (taking values in $\mathfrak{S}$) if $\EE[f(X_n)] \rightarrow \EE[f(X)]$ for all bounded continuous functions $f:\mathfrak{S} \rightarrow \RR$. This is denoted by $X_n\trightarrow{w} X$. Here, the random element of interest is a random density operator (or operators), which  is a Borel-measurable mapping from $\Omega$ to the space of density operators, $\cS_d$. The weak limit of a random density operator is unique if it exists (see e.g. \cite{AVDV-book}). Since density operators have unit trace, the appropriate space $\mathfrak{S}$ to consider weak convergence for our purposes is the space of trace-class operators, i.e., the space of operators with finite trace. In finite dimensions, we may take $\mathfrak{S}=\cL(\HH_d)$ equipped with any norm since all norms are equivalent.
\subsection{Fr\'{e}chet Differentiability}
\begin{definition}[Fr\'{e}chet differentiability, see e.g. \cite{Bhatia-book}]\label{Def:Frdiff}
For an open set $\cX \subseteq \RR$, let  $f: \cX \rightarrow \bar \RR $ and $A \in \cH_d$ with its spectrum  $\spec{A} \subseteq \cX$. Then, $f$ is called (Fr\'{e}chet) differentiable at $A$ if there exists a linear map $D[f(A)]:\cH_d \rightarrow \cH_d$  such that for all $H \in \cH_d$ such that $\spec{A+H} \subseteq \cX$,
\begin{align}
    \norm{f(A+H)-f(A)-D[f(A)](H)}=o(\norm{H}). \label{eq:frechetderdef}
\end{align}
Then, $D[f(A)]$ is called the (Fr\'{e}chet) derivative of $f$ at $A$ and $D[f(A)](H)$ is the directional derivative of $f$ at $A$ in the direction $H$.  The derivative of $f$ induces a map from $\cH_d$ into $\cL(\cH_d)$ given by $D[f]: A \rightarrow D[f(A)]$. If this map is also differentiable at $A$, then $f$ is said to be twice differentiable at $A$ with the corresponding second-order derivative given by a bilinear map $D^2[f(A)]:\cH_d \times \cH_d \rightarrow \cH_d$.
\end{definition}
If $f$ is differentiable at $A$, then
\begin{align}
    D[f(A)](H)=\frac{d}{dt} f(A+tH) \big|_{t=0},~\forall ~H \in \cH_d. \notag
\end{align}
The chain rule holds: the composition of two differentiable maps $f$ and $g$ is differentiable and $D [g \circ f(A)]=D[g\big(f(A)\big)]  D[f(A)]$. Also, we have the product rule: for two  differentiable maps $f$ and $g$ and  $h=fg$,
\begin{align}
    D[h(A)](H)=f(A)D[g(A)](H)+D[f(A)](H)g(A).\notag
\end{align}
Finally, we will frequently use that
\begin{align}
    D[A^{-1}](H)=-A^{-1}HA^{-1}. \label{eq:derinvop}
\end{align}
\subsection{Bochner Integrability}\label{Sec:Bochnerint}
We need the concept of Bochner-integrability \cite{Bochner-1933} in the proofs of our main results, which we briefly mention. 
Let $(\mathfrak{X},\Sigma,\mu)$ be a measure space and $\mathfrak{B}$ be a Banach space. A function $f:\mathfrak{X} \rightarrow \mathfrak{B}$ is said to be integrable (in the sense of Bochner)  if there exists a sequence of simple functions $g_n$ such that $g_n \rightarrow f$, $\mu$-a.e., and 
\begin{align}
    \lim_{n \rightarrow \infty}\int_{\mathfrak{X}} \norm{f-g_n}_{\mathfrak{B}} d \mu =0, \notag
\end{align}
where $\norm{\cdot}_{\mathfrak{B}}$ denotes the Banach space norm. A  Bochner-measurable function $f$ is integrable if and only if 
\begin{align}
\int_{\mathfrak{X}}\norm{f}_{\mathfrak{B}} d \mu <\infty.\label{eq:Bochnerintegcond}    
\end{align}
 Moreover, if $f$ is integrable, then 
\begin{align}
\norm{\int_{\mathfrak{X}}f d \mu}_{\mathfrak{B}} \leq  \int_{\mathfrak{X}}\norm{f}_{\mathfrak{B}} d \mu. \label{eq:Bochnerintegcor}
\end{align}
\subsection{Quantum Information Measures} \label{Sec:quantinfmeas}
The von Neumann entropy of a density operator $\rho \in \cS_d$  is
\begin{align}
    \mathsf{H}(\rho):=-\tr{\rho \log \rho},\notag
\end{align}
For density operators $\rho,\sigma \in \cS_d $, the quantum relative entropy \cite{Umegaki-62} is 
\begin{align}
\qrel{\rho}{\sigma} :=\begin{cases}
    \tr{\rho \big(\log \rho-\log \sigma\big)}, &\quad \mbox{if } \rho \ll \sigma,\\
    \infty, & \quad \mbox{otherwise}.
\end{cases} \notag 
\end{align}
From the above two definitions, it follows that 
\begin{align}
    \mathsf{H}(\rho)= \log d - \qrel{\rho}{\pi_d}, \label{eq:ent-qrent-rel}
\end{align}
where $\pi_d=I/d$ is the maximally mixed state. 
By some abuse of notation, the classical relative entropy or Kullback-Leibler (KL) divergence \cite{Kullback1951OnSufficiency} between two distributions $P,Q $  on a discrete alphabet $\cX$ is 
\begin{align}
\qrel{P}{Q}:=\sum_{x \in \X} P(x) \log \frac{P(x)}{Q(x)},\notag    
\end{align}
if $P \ll Q$, and $\infty$ otherwise.

For $\alpha \in (0,1) \cup (1,\infty)$, the Petz-R\'{e}nyi divergence \cite{Petz1985Quasi-entropiesAlgebra} between $\rho,\sigma \in \cS_d$ is 
\begin{align} 
\petzdiv{\rho}{\sigma}{\alpha}:=\begin{cases}
    \frac{1}{\alpha-1} \log \tr{\rho^{\alpha}\sigma^{1-\alpha}}, & \mbox{ if } \rho \ll \sigma \mbox{ or }  \rho \not\perp \sigma \mbox{ for } \alpha \in (0,1),\\
\infty, &\mbox{ otherwise}.    
\end{cases}\label{eq:Petz-Renyi-div}
\end{align}
$\petzdiv{\rho}{\sigma}{\alpha}$ satisfies the data-processing inequality for $\alpha \in (0,2]$. 
For $\alpha \in (0,1) \cup (1,\infty)$, the sandwiched R\'{e}nyi divergence \cite{muller2013quantum,wilde2014strong} between $\rho,\sigma \in \cS_d$ is
\begin{align}
    \sanddiv{\rho}{\sigma}{\alpha}:=\begin{cases} \frac{\alpha}{\alpha-1} \log \norm{\rho^{\frac 12} \sigma^{\frac{\bar \alpha}{\alpha}} \rho^{\frac 12}}_{\alpha}, & \mbox{ if } \rho \ll \sigma \mbox{ or }  \rho \not\perp \sigma \mbox{ for } \alpha \in (0,1),\\
\infty, &\mbox{ otherwise}.
    \end{cases} \label{eq:sand-Renyi-div}
\end{align}
$\sanddiv{\rho}{\sigma}{\alpha}$ satisfies data-processing inequality for $\alpha \geq 1/2$. Also, note that  $ \sanddiv{\rho}{\sigma}{1/2}=-\log F(\rho,\sigma)$, where  $F(\rho,\sigma):=\norm{\sqrt{\rho}\sqrt{\sigma}}_1^2=\big(\tr{\abs{\sqrt{\rho}\sqrt{\sigma}}}\big)^2$ denotes the fidelity \cite{Uhl76,Jozsa-1994} between $\rho$ and $\sigma$. The max-divergence between states $\rho$ and $\sigma$ is
\begin{align}
    \mathsf{D}_{\max}(\rho\|\sigma):=\lim_{\alpha \rightarrow \infty} \sanddiv{\rho}{\sigma}{\alpha}=\inf\big\{\lambda: \rho \leq e^{\lambda} \sigma \big\}. \label{eq:rendivinft} 
\end{align}
This divergence is the unique quantum generalization of the classical R\'{e}nyi divergence of infinite order that satisfies the data-processing inequality. 
For further details about the aforementioned information measures, see the books \cite{wilde2017quantum,tomamichel2015quantum}.

In the next section, we derive limit distribution for estimators of the aforementioned information measures.

\section{Main Results}\label{Sec:Mainres}
Let $\rho_n$ and $\sigma_n$ be random density operators such that $\rho_n \trightarrow{w} \rho$ and $\sigma_n \trightarrow{w} \sigma$ in \textit{trace norm} (see Section \ref{Sec:weakconv} for definitions). Since all norms are equivalent in finite dimensions, the choice of trace norm does not incur any loss of generality.  Further, let $(r_n)_{n \in \NN}$ denote a diverging positive sequence. In the following,  
 \textit{null} and \textit{alternative} refers to the scenarios $\rho=\sigma$ and    $\rho \neq \sigma$, respectively, while \textit{two-sample} signifies that both  $\rho$ and $\sigma$ are estimated. 

\subsection{Quantum Relative Entropy}

\begin{theorem}
[Limit distribution for quantum relative entropy]\label{Thm:quantrelent-limdist}
 Let   
 $\rho_n \ll \sigma_n  \ll \sigma$ and $\rho_n \ll \rho \ll  \sigma$.  The following hold: 
\begin{enumerate}[(i)]
 \item (Two-sample alternative)
 If $\big(r_n(\rho_n-\rho), r_n(\sigma_n-\sigma)\big) \trightarrow{w} (L_1,L_2)$, then
 \begin{align}
    r_n \big(\qrel{\rho_n}{\sigma_n}-\qrel{\rho}{\sigma}\big) \trightarrow{w}  \tr{L_1 (\log \rho-\log \sigma )-\rho D[\log \sigma](L_2)}.\label{eq:qrel-twosample-alt}
\end{align} 

\item (Two-sample null) 
If $\rho=\sigma$ and 
 $\big(r_n(\rho_n-\rho),r_n(\sigma_n-\rho)\big) \trightarrow{w} (L_1,L_2)$,  then 
    \begin{align}
    r_n^2 \qrel{\rho_n}{\sigma_n} &\trightarrow{w}   \tr{L_1 D[\log \rho](L_1-L_2)}  +\tr{\frac{\rho}{2}D^2[\log \rho](L_1-L_2,L_1-L_2)}.\label{eq:qrel-twosample-null}
\end{align}
\end{enumerate}
\end{theorem}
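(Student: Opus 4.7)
The plan is to derive both limits by applying a Fr\'echet--Taylor expansion to the map $(A,B) \mapsto \tr{A(\log A - \log B)}$, restricted to the subspace $\supp(\sigma) \subseteq \HH_d$, followed by the continuous mapping theorem and a Slutsky-type step to absorb the Taylor remainder. Introduce $E_{n,1} := r_n(\rho_n - \rho)$ and $E_{n,2} := r_n(\sigma_n - \sigma)$, so $(E_{n,1}, E_{n,2}) \trightarrow{w} (L_1, L_2)$ in trace norm by hypothesis. Since the trace functional is continuous and $\tr{E_{n,i}}=0$ for every $n$, we have $\tr{L_i}=0$ almost surely. The support conditions $\rho_n \ll \rho \ll \sigma$ and $\rho_n \ll \sigma_n \ll \sigma$ allow us to pass to $\supp(\sigma)$, where $\rho$ and $\sigma$ are positive definite and $\log$ is Fr\'echet-differentiable to all orders.

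For part (i), I would expand each logarithm to first order, e.g.\
\[
\log(\sigma + r_n^{-1} E_{n,2}) = \log\sigma + r_n^{-1} D[\log\sigma](E_{n,2}) + R_{n,2},\qquad \|R_{n,2}\|_1 = o_P(r_n^{-1}),
\]
and symmetrically for $\log\rho_n$. Substituting into $\qrel{\rho_n}{\sigma_n} = \tr{\rho_n(\log \rho_n - \log \sigma_n)}$ and collecting powers of $r_n^{-1}$, the $O(1)$ piece rebuilds $\qrel{\rho}{\sigma}$, while the $O(r_n^{-1})$ piece is
\[
r_n^{-1}\bigl(\tr{E_{n,1}(\log\rho-\log\sigma)} + \tr{\rho\, D[\log\rho](E_{n,1})} - \tr{\rho\, D[\log\sigma](E_{n,2})}\bigr).
\]
The identity $\tr{\rho\, D[\log\rho](H)} = \tr{H}$ whenever $H \ll \rho$, an immediate consequence of the resolvent representation $D[\log\rho](H)=\int_0^\infty (s+\rho)^{-1} H (s+\rho)^{-1}\, ds$ together with $\int_0^\infty \rho(s+\rho)^{-2}\, ds = I_{\supp(\rho)}$, collapses the middle term since $\tr{E_{n,1}}=0$ and $E_{n,1} \ll \rho$. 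Multiplying through by $r_n$ and invoking the continuous mapping theorem on the continuous linear functional $(E_1, E_2) \mapsto \tr{E_1(\log\rho - \log\sigma) - \rho\, D[\log\sigma](E_2)}$ yields (i).

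For part (ii), evaluating the first-order functional from (i) at $(L_1, L_2)$ with $\sigma = \rho$ reduces to $-\tr{\rho\, D[\log\rho](L_2)} = -\tr{L_2} = 0$ almost surely, so the first-order contribution is degenerate and a faster scaling $r_n^2$ is required. I would then push the Taylor expansion one order further,
\[
\log(\rho + r_n^{-1} E_{n,i}) = \log\rho + r_n^{-1} D[\log\rho](E_{n,i}) + \tfrac{r_n^{-2}}{2} D^2[\log\rho](E_{n,i},E_{n,i}) + R_{n,i}^{(2)},\quad \|R_{n,i}^{(2)}\|_1 = o_P(r_n^{-2}),
\]
substitute into $\qrel{\rho_n}{\sigma_n}$, use the trace identity of part (i) to cancel the $O(r_n^{-1})$ contribution, and isolate the $O(r_n^{-2})$ terms. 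The resulting expression is a continuous bilinear form in $(E_{n,1}, E_{n,2})$, so a second application of the continuous mapping theorem yields the stated quadratic limit in $(L_1, L_2)$.

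The principal technical obstacle is the probabilistic control of the Fr\'echet--Taylor remainder: differentiability alone supplies only a deterministic $o(\|H\|^k)$ bound on the order-$k$ remainder, but we require $r_n^k \cdot o(\|r_n^{-1} E_{n,i}\|^k) = o_P(1)$ for $k=1$ in part (i) and $k=2$ in part (ii). I would handle this via the integral form of Taylor's theorem together with tightness: since $(E_{n,1},E_{n,2})$ converges weakly in the finite-dimensional Banach space $\cL(\HH_d)$, Prohorov's theorem gives tightness, so on the event $\{\|E_{n,1}\|_1 \vee \|E_{n,2}\|_1 \le M\}$, whose probability can be made arbitrarily close to one, the Bochner integrand $s \mapsto D^{k+1}[\log(\rho + s r_n^{-1} E_{n,i})](E_{n,i},\dots,E_{n,i})$ is uniformly trace-norm bounded for $n$ large. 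This uniformity relies on continuity of $A \mapsto D^k[\log A]$ on $\cP^+_d$ in operator norm, which is transparent from the iterated resolvent integral representations of these derivatives and justifies the interchange of trace and Bochner integral (Section~\ref{Sec:Bochnerint}). A standard Slutsky argument combining this remainder control with the continuous mapping step then closes both parts.
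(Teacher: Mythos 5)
Your strategy---Fr\'echet--Taylor expansion with resolvent integral representations of $D^k[\log]$, continuous mapping on the linear/bilinear limit functional, and a Slutsky step to discard the remainder---is the same family of argument as the paper's, with two organizational differences that are both legitimate. First, you expand the univariate map $\log$ around $\rho$ and $\sigma$ separately and then multiply by $\rho_n$, whereas the paper expands the bivariate map $f(A_1,A_2)=A_1(\log A_1-\log A_2)$ with all mixed partial Fr\'echet derivatives; these are equivalent bookkeepings. Second, your remainder control via Prohorov tightness (work on the event $\norm{E_{n,1}}_1\vee\norm{E_{n,2}}_1\le M$, where the integral-form order-$k$ remainder is $O(M^{k+1}r_n^{-(k+1)})$ by local uniform boundedness of $A\mapsto D^{k+1}[\log A]$ near a positive definite base point) is a genuine alternative to the paper's route, which extracts a.s.-convergent subsequences via Skorokhod's representation theorem and verifies uniform integrability of each Bochner integrand; in finite dimensions your version is arguably cleaner. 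Your identity $\tr{\rho\,D[\log\rho](H)}=\tr{H}$ for $H\ll\rho$ is exactly the paper's mechanism (cf.\ \eqref{eq:firstordtmzero}) for killing the first-order term.

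The genuine gap is the sentence ``pass to $\supp(\sigma)$, where $\rho$ and $\sigma$ are positive definite.'' Restriction to $\supp(\sigma)$ makes $\sigma$ positive definite, but $\rho\ll\sigma$ does not make $\rho$ full-rank there (take $\sigma=I/2$ on $\CC^2$ and $\rho=\ket{0}\bra{0}$). When $\rho$ is singular on $\supp(\sigma)$, $\log$ is not Fr\'echet differentiable at $\rho$, your expansion of $\log\rho_n$ about $\log\rho$ is undefined, and the uniform bound on $D^{k+1}[\log A]$ near $A=\rho$ fails because these derivatives diverge at the boundary of $\cP_d$. The repair is the two-level restriction the paper carries out at some length: use $\rho_n\ll\rho$ to reduce $\tr{\rho_n\log\rho_n}$ to $\supp(\rho)$ (where $\rho$ is positive definite) via Lemma \ref{lem:multprojtr}$(i)$, keep the $\log\sigma_n$ expansion on $\supp(\sigma)$, and check via Portmanteau that $L_1\ll\rho$ and $L_1,L_2\ll\sigma$ so the limit functional is unchanged by these projections. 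You should state this explicitly; the rest of your tightness argument then goes through (and, as a bonus, removes the need for the paper's $\epsilon$-regularization of $\rho_n,\sigma_n$, since on the tightness event $\rho_n\ge\tfrac12\lambda_{\min,+}(\rho)$ on $\supp(\rho)$ for large $n$). Separately, in part (ii) you should actually carry out the algebra rather than gesture at it: your expansion yields $\tfrac12\tr{(E_{n,1}-E_{n,2})D[\log\rho](E_{n,1}-E_{n,2})}+o_P(1)$, which agrees with the explicit integral limit at the end of the paper's null-case proof and with the commutative formula \eqref{eq:qrel-twosample-null-comm}; reconciling it with the displayed form \eqref{eq:qrel-twosample-null} requires the identity $\tr{\rho\,D^2[\log\rho](H,K)}=-\tr{H\,D[\log\rho](K)}$ (differentiate $\tr{\rho\,D[\log\rho](H)}=\tr{H}$ in $\rho$), and that reconciliation leaves a residual term $\tr{L_2\,D[\log\rho](L_1-L_2)}$ that does not vanish in general---so this bookkeeping is not optional and deserves a line of its own.
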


The proof of Theorem \ref{Thm:quantrelent-limdist} is presented in Section \ref{Thm:quantrelent-limdist-proof}. The key idea relies on applying an operator version of Taylor's theorem to the function,  $(x,y) \mapsto x (\log x-\log y)$, and showing that the  remainder terms (e.g., second and higher order terms in the alternative case) vanish under the conditions stated in the theorem. At a technical level, the arguments use uniform integrability (see Section \ref{Sec:Proof})  of the remainder terms (guaranteed under the assumptions) to justify interchange of limits, trace, and integrals. We note that the 
regularity conditions in Theorem \ref{Thm:quantrelent-limdist} are same as that of \cite[Theorem 1]{SGK-IT-2023} specialized to the discrete setting. 
Also, observe  that analogous to the classical case, the limits depend on whether $\rho=\sigma$ (null)  or $\rho \neq \sigma$ (alternative), and that the convergence rate is faster in the former.

\begin{remark}[One-sample null and alternative]
The one-sample case refers to the setting when  $\sigma_n=\rho$ (null) or $\sigma_n=\sigma$ (alternative) for all $n \in \NN$, i.e., when only $\rho$ is approximated by $\rho_n$. 
In this case, 
the respective limits can be obtained by letting $L_2=0$ in   \eqref{eq:qrel-twosample-alt} and \eqref{eq:qrel-twosample-null}.
\end{remark}
Simplified expressions for the limit variables in Theorem \ref{Thm:quantrelent-limdist} exist  when all relevant density operators commute, as stated in the following corollary.
\begin{cor}[Commutative case] \label{Cor:qrel-comm}
 If all operators in Theorem \ref{Thm:quantrelent-limdist} commute, then
\begin{align}
        r_n \big(\qrel{\rho_n}{\sigma_n}-\qrel{\rho}{\sigma}\big) \trightarrow{w}  \tr{L_1 (\log \rho-\log \sigma )-L_2\rho \sigma^{-1}}.\label{eq:qrel-twosample-alt-comm}
\end{align}
Additionally, when $\rho=\sigma$, then 
 \begin{align}
    r_n^2 \qrel{\rho_n}{\sigma_n} \trightarrow{w}   \frac 12  \tr{\big(L_1-L_2\big)^2 \rho^{-1}}.\label{eq:qrel-twosample-null-comm}
\end{align}
\end{cor}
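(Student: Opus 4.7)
The plan is to obtain the corollary as a direct algebraic specialization of Theorem~\ref{Thm:quantrelent-limdist}, exploiting the fact that in the commutative setting the Fr\'echet derivatives of the matrix logarithm collapse to elementary operator expressions. The first key step is to establish these reductions: from the integral representation $D[\log A](H) = \int_0^\infty (A+tI)^{-1} H (A+tI)^{-1}\,dt$ that underlies Definition~\ref{Def:Frdiff} for $\log$, when $A$ commutes with $H$ the integrand simplifies to $(A+tI)^{-2}H$, and the resulting Bochner integral evaluates to $A^{-1}H$. Differentiating once more (formally applying \eqref{eq:derinvop} inside the integrand and using commutativity throughout) yields $D^2[\log A](H,H) = -A^{-2}H^2$ whenever $A$ and $H$ commute. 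These two identities are the only ``analytic'' content needed; everything afterwards is trace algebra.

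For the alternative case, I would substitute $D[\log \sigma](L_2) = \sigma^{-1}L_2$ directly into \eqref{eq:qrel-twosample-alt}. The cyclicity of the trace then rewrites $\tr{\rho\,\sigma^{-1} L_2}$ as $\tr{L_2\,\rho\,\sigma^{-1}}$, which yields \eqref{eq:qrel-twosample-alt-comm} in a single line; the leading term $\tr{L_1(\log\rho-\log\sigma)}$ is already of the desired form.

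For the null case, I would substitute $D[\log \rho](L_1-L_2) = \rho^{-1}(L_1-L_2)$ and $D^2[\log\rho](L_1-L_2, L_1-L_2) = -\rho^{-2}(L_1-L_2)^2$ into \eqref{eq:qrel-twosample-null}. Using trace cyclicity together with the commutativity of $L_1,L_2$ with $\rho$ (and hence with $\rho^{-1}$), I would then collect terms and reorganize the resulting expression into the claimed form $\tfrac{1}{2}\tr{(L_1-L_2)^2\rho^{-1}}$.

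The only real obstacle I foresee is making the commutative reduction of $D^2[\log\rho]$ fully rigorous, since this requires interchanging a derivative with the Bochner integral defining $D[\log A]$, a step that relies on the integrability facts collected in Section~\ref{Sec:Bochnerint}. Once that technical point is dispatched, the remaining manipulations are routine, and the two parts of the corollary follow by direct substitution into the respective clauses of Theorem~\ref{Thm:quantrelent-limdist}.
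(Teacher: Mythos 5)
Your treatment of the alternative case is correct and is exactly the paper's route: reduce $D[\log\sigma](L_2)$ to $\sigma^{-1}L_2$ via the integral representation and cycle the trace. The commutative identities you state, $D[\log A](H)=A^{-1}H$ and $D^2[\log A](H,H)=-A^{-2}H^2$ for $[A,H]=0$, are also both correct (they follow from $\int_0^\infty(\tau I+A)^{-2}d\tau=A^{-1}$ and $\int_0^\infty(\tau I+A)^{-3}d\tau=\tfrac12 A^{-2}$). One small point you gloss over: the hypothesis is about the states commuting, so you still need to argue that $L_1,L_2$ commute with $\rho,\sigma$ and each other; the paper does this by passing to an a.s.\ convergent subsequence and using continuity of the commutator. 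That is easily repaired.

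The genuine gap is in the null case. Carrying out your substitution literally in \eqref{eq:qrel-twosample-null} gives
\begin{align}
\tr{L_1\rho^{-1}(L_1-L_2)}-\tfrac12\tr{\rho^{-1}(L_1-L_2)^2}=\tfrac12\tr{(L_1^2-L_2^2)\rho^{-1}},
\end{align}
which differs from the claimed $\tfrac12\tr{(L_1-L_2)^2\rho^{-1}}$ by $\tr{(L_2^2-L_1L_2)\rho^{-1}}$; no amount of trace cyclicity or commutativity can ``reorganize'' one into the other, so the final step of your plan fails. The source of the mismatch is that the second-order term displayed in \eqref{eq:qrel-twosample-null} does not agree with what the Taylor remainder actually produces: tracking the paper's proof of Theorem~\ref{Thm:quantrelent-limdist}(ii), the limit of the remainder is $\tr{L_1 D[\log\rho](L_1-L_2)}+\tr{\tfrac{\rho}{2}\big(D^2[\log\rho](L_1,L_1)-D^2[\log\rho](L_2,L_2)\big)}$ --- the quadratic part is a \emph{difference} of diagonal second derivatives, not the second derivative evaluated at $L_1-L_2$ (which would introduce cross terms $D^2[\log\rho](L_1,L_2)$ that are absent from the four integral terms in \eqref{eq:limittwosampnull}). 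With the correct form, the commutative reduction gives $\tr{L_1\rho^{-1}(L_1-L_2)}-\tfrac12\tr{(L_1^2-L_2^2)\rho^{-1}}=\tfrac12\tr{(L_1-L_2)^2\rho^{-1}}$ as claimed, and this is also what the paper obtains by evaluating the four integrals $\tr{L_1^2\rho^{-1}}-\tfrac12\tr{L_1^2\rho^{-1}}+\tfrac12\tr{L_2^2\rho^{-1}}-\tr{L_1L_2\rho^{-1}}$ directly. So to close the gap you should either work from the integral form of the limit (as the paper does) or first correct the quadratic term in the theorem before substituting; as written, your derivation would produce a result inconsistent with both the corollary and the classical benchmark of \cite{SGK-IT-2023}.
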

Equations \eqref{eq:qrel-twosample-alt-comm} and \eqref{eq:qrel-twosample-null-comm} recovers \cite[Theorem 2]{SGK-IT-2023} specialized to the discrete setting with finite support, which is the classical analogue of Theorem \ref{Thm:quantrelent-limdist} pertaining to KL divergence. The RHS of \eqref{eq:qrel-twosample-null-comm} is reminiscent of the expression for $\chi^2$ divergence and can be interepreted as a weighted $L^2$ norm between the limits $L_1$ and $L_2$  (see e.g., \cite{Gao-Rouze}).

A class of divergences intermediate between the classical and quantum  relative entropy are the  measured relative entropies \cite{Donald1986,Hiai-Petz-1991,Bennett-1993,Rains-2001}, which equals the largest KL divergence between the output probability distributions induced by a set of measurements (quantum to classical channel) on two quantum states. In Appendix \ref{Sec:limdist-mrel}, we characterize the  distributional limits for an estimator of this quantity under a uniqueness assumption on the optimal measurement. 

Specializing Theorem \ref{Thm:quantrelent-limdist} to  von Neumann entropy leads to the following result.
\begin{cor}[Limit distribution for von Neumann entropy]
    \label{Thm:quant-vnent-limdist}
  Let $\rho_n \ll \rho$.  If $r_n(\rho_n-\rho) \trightarrow{w} L$, then
 \begin{align}
    r_n \big(\mathsf{H}(\rho_n)-\mathsf{H}(\rho)\big) \trightarrow{w} -\tr{L \log \rho}. \label{eq:vnent-onesample-alt}
\end{align}
\end{cor}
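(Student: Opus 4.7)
The plan is to deduce Corollary \ref{Thm:quant-vnent-limdist} directly from Theorem \ref{Thm:quantrelent-limdist} via the identity \eqref{eq:ent-qrent-rel}, which expresses the von Neumann entropy in terms of quantum relative entropy with the maximally mixed state: $\mathsf{H}(\rho) = \log d - \qrel{\rho}{\pi_d}$. This immediately rewrites the quantity of interest as
\begin{align}
    r_n\big(\mathsf{H}(\rho_n) - \mathsf{H}(\rho)\big) = -r_n\big(\qrel{\rho_n}{\pi_d} - \qrel{\rho}{\pi_d}\big), \notag
\end{align}
reducing the problem to a one-sample limit distribution for quantum relative entropy with the second argument deterministically fixed at $\pi_d$, i.e., $\sigma_n = \sigma = \pi_d$ and $L_2 = 0$ in the notation of Theorem \ref{Thm:quantrelent-limdist}.

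A preliminary observation is that, since $\rho_n$ and $\rho$ are density operators, $\tr{\rho_n - \rho} = 0$ for every $n$; hence $r_n \tr{\rho_n - \rho} = 0$, and by the continuous mapping theorem applied to the trace (which is continuous on $\cL(\HH_d)$), the weak limit $L$ must satisfy $\tr{L} = 0$ almost surely. This fact is essential because $\log \pi_d = -(\log d)\, I$, so any term of the form $\tr{L \log \pi_d}$ collapses to $-(\log d)\tr{L} = 0$ and will not pollute the final expression.

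I would then split into cases depending on whether $\rho = \pi_d$. If $\rho \neq \pi_d$, applying the one-sample alternative version of Theorem \ref{Thm:quantrelent-limdist}(i) (with $L_2 = 0$, which makes the Fr\'echet-derivative term $\rho D[\log \pi_d](L_2)$ vanish by linearity) yields
\begin{align}
    r_n\big(\qrel{\rho_n}{\pi_d} - \qrel{\rho}{\pi_d}\big) \trightarrow{w} \tr{L(\log \rho - \log \pi_d)} = \tr{L \log \rho} + (\log d)\tr{L} = \tr{L \log \rho}, \notag
\end{align}
where the last equality uses $\tr{L} = 0$. Multiplying by $-1$ delivers the stated conclusion. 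If instead $\rho = \pi_d$, the claimed limit $-\tr{L \log \rho} = (\log d)\tr{L} = 0$ is trivially zero, and I would verify convergence directly via Theorem \ref{Thm:quantrelent-limdist}(ii) (again with $L_2 = 0$), which gives weak convergence of $r_n^2\qrel{\rho_n}{\pi_d}$ to a finite limit; Slutsky's theorem, upon multiplying by $r_n^{-1} \to 0$, then delivers $r_n\qrel{\rho_n}{\pi_d} \trightarrow{w} 0$.

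The main obstacle, such as it is, is a minor bookkeeping issue: Theorem \ref{Thm:quantrelent-limdist}(i) is formally stated only for the alternative regime $\rho \neq \sigma$, so the degenerate case $\rho = \pi_d$ must be dispatched separately through the null statement. Beyond that, the proof is a direct substitution, with the trace-annihilation identity $\tr{L} = 0$ being the key algebraic ingredient that produces the clean final expression.
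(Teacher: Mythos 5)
Your proposal is correct and follows essentially the same route as the paper: rewrite $\mathsf{H}$ via $\mathsf{H}(\rho)=\log d-\qrel{\rho}{\pi_d}$ and invoke Theorem \ref{Thm:quantrelent-limdist} with $\sigma_n=\sigma=\pi_d$ and $L_2=0$. Your version is in fact slightly more careful than the paper's one-line proof, since you make explicit the needed identity $\tr{L}=0$ and separately dispatch the degenerate case $\rho=\pi_d$ (where the alternative-regime statement does not formally apply) via the null part and Slutsky's theorem.
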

\begin{proof}
The claim follows from \eqref{eq:ent-qrent-rel} and \eqref{eq:qrel-twosample-alt} with $L_1=L$ and $L_2=0$ by noting that the regularity conditions in Part $(ii)$ of Theorem \ref{Thm:quantrelent-limdist} are satisfied with $\sigma_n=\sigma=\pi_d$ for all $n \in \NN$.
\end{proof}
It is well-known that $\mathsf{H}(\rho)=\mathsf{H}(\lambda)$, where  $\lambda \in \cH_d$ denotes the  diagonal operator comprising of the eigenvalues of $\rho$ arranged in non-increasing order. In other words,  $\mathsf{H}(\rho)$ equals the Shannon entropy of the probability distribution composed of the eigenvalues of $\rho$. An unbiased estimator of the spectrum of a quantum state is given by the EYD algorithm \cite{Alicki-1988,Keyl-Werner-2001} that outputs a Young's diagram as its estimate. In \cite[Theorem 3.1]{Houdre-Xu-2013}, the limit distribution of this estimator with a scaling rate $n^{1/2}$ is characterized in terms of a $d$-dimensional Brownian functional $B=(B_1,\ldots, B_d)$. From  Corollary \ref{Thm:quant-vnent-limdist} and the discussion above, it then follows that the asymptotic distribution of the EYD algorithm based estimator of $\mathsf{H}(\rho)$ is governed by \eqref{eq:vnent-onesample-alt}  with $r_n=n^{1/2}$, $L=\mathrm{diag}(B)$ and $\rho=\lambda$, where $\mathrm{diag}(\cdot)$ denotes the operation of representing a vector as the diagonal elements of a matrix.

\subsection{Quantum R\'{e}nyi Divergence} 
In contrast to the classical case, there is no single definition of quantum  R\'{e}nyi divergence known that satisfies all natural properties desired of a quantum information measure for all $\alpha$ (see \cite{tomamichel2015quantum}). Among the infinite possibilities, the most  important ones with direct operational significance  are the Petz-R\'{e}nyi and sandwiched R\'{e}nyi divergences, which we consider below.

\noindent
\textbf{Petz-R\'{e}nyi divergence:} We first derive distributional limits for Petz-R\'{e}nyi divergence estimators.
\begin{theorem}[Limit distribution for Petz-R\'{e}nyi divergence]\label{Thm:Petzrelent-limdist}
Let $\alpha \in (0,1) \cup (1,2]$ and $\bar \alpha=1-\alpha$. Suppose   
 $\rho_n \ll \sigma_n  \ll \sigma$ and $\rho_n \ll \rho \ll  \sigma$.  
Then, the following hold:
\begin{enumerate}[(i)]

 \item  (Two-sample alternative)
If $\big(r_n(\rho_n-\rho), r_n(\sigma_n-\sigma)\big) \trightarrow{w} (L_1,L_2)$, then 
 \begin{align}
    r_n \big(\petzdiv{\rho_n}{\sigma_n}{\alpha}-\petzdiv{\rho}{\sigma}{\alpha}\big) \trightarrow{w}  \frac{\tr{\sigma^{\bar \alpha} D[\rho^{\alpha}](L_1)}+\tr{\rho^{\alpha}D[\sigma^{\bar \alpha}](L_2)}}{(\alpha-1)\tr{\rho^{\alpha}\sigma^{\bar \alpha}}}. \label{eq:petzrelent-twosample-alt}
\end{align}
  \item  (Two-sample null) If $\rho=\sigma$ and  $\big(r_n(\rho_n-\rho), r_n(\sigma_n-\rho)\big) \trightarrow{w} (L_1,L_2)$, then 
 \begin{align}
    r_n^2 \petzdiv{\rho_n}{\sigma_n}{\alpha} \trightarrow{w}  \frac{\tr{\rho^{\bar \alpha} D^2[\rho^{\alpha}](L_1,L_1)+\rho^{\alpha}D^2[\rho^{\bar \alpha}](L_2,L_2)+2 D[\rho^{\alpha}](L_1)D[\rho^{\bar \alpha}](L_2)}}{2(\alpha-1)}.\label{eq:petzrelent-twosample-null}
\end{align}
   
\end{enumerate}
\end{theorem}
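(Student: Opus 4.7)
The plan is to apply an operator Taylor expansion to the scalar function
$\psi(A,B):=\tr{A^{\alpha}B^{\bar\alpha}}$ at $(\rho,\sigma)$ and then use the
chain rule, since $\petzdiv{\rho}{\sigma}{\alpha}=(\alpha-1)^{-1}\log\psi(\rho,\sigma)$.
The support hypotheses $\rho\ll\sigma$ and $\rho_n\ll\sigma_n\ll\sigma$ ensure
$\psi(\rho,\sigma)>0$ so that $\log$ is smooth at that point, and the maps
$A\mapsto A^{\alpha}$ and $B\mapsto B^{\bar\alpha}$ are twice Fr\'{e}chet
differentiable in neighbourhoods of $\rho$ and $\sigma$ restricted to their
supports, with Taylor remainders controlled through the Daleckii--Krein
representation that already underlies Theorem~\ref{Thm:quantrelent-limdist}.

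For part (i), I would perform a first-order operator Taylor expansion of
$\psi$ around $(\rho,\sigma)$ in the directions $K_{1,n}:=r_n(\rho_n-\rho)$
and $K_{2,n}:=r_n(\sigma_n-\sigma)$, obtaining
$r_n(\psi(\rho_n,\sigma_n)-\psi(\rho,\sigma))=
\tr{\sigma^{\bar\alpha}D[\rho^{\alpha}](K_{1,n})}+
\tr{\rho^{\alpha}D[\sigma^{\bar\alpha}](K_{2,n})}+r_n R_n$, where $R_n$ is a
second-order Taylor remainder along the segment between $(\rho,\sigma)$ and
$(\rho_n,\sigma_n)$. The nested support conditions provide uniform boundedness
of the relevant operator powers and their derivatives along this segment, so
the uniform-integrability argument from Theorem~\ref{Thm:quantrelent-limdist}
delivers $r_n R_n\trightarrow{w}0$. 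Combining this with the linearization
$\log\psi(\rho_n,\sigma_n)-\log\psi(\rho,\sigma)=
\psi(\rho,\sigma)^{-1}(\psi(\rho_n,\sigma_n)-\psi(\rho,\sigma))+o(1/r_n)$ and
the continuous mapping theorem yields \eqref{eq:petzrelent-twosample-alt}.

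For part (ii) with $\rho=\sigma$, we have $\psi(\rho,\rho)=\tr{\rho}=1$, so the
logarithm must be expanded to second order. The crucial observation is that the
first-order term of $\log\psi$ vanishes: via the Daleckii--Krein representation
of $D[A^{\alpha}]$, only the diagonal part of the divided-difference kernel
survives after tracing against $A^{\bar\alpha}$, yielding the identities
$\tr{A^{\bar\alpha}D[A^{\alpha}](H)}=\alpha\tr{H}$ and
$\tr{A^{\alpha}D[A^{\bar\alpha}](H)}=\bar\alpha\tr{H}$. Applied with $A=\rho$
and $H=K_{i,n}$, the linear piece reduces to
$\alpha\tr{K_{1,n}}+\bar\alpha\tr{K_{2,n}}=0$, which holds exactly because
$\rho_n,\sigma_n,\rho$ are all density operators. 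Expanding
$\log(1+u)=u-u^2/2+o(u^2)$ with $u=\psi(\rho_n,\sigma_n)-1$ then reduces the
entire analysis to the second-order term of $\psi$; setting
$\phi(t):=\psi(\rho+tL_1,\rho+tL_2)$ and computing $\phi''(0)$ via the product
rule and linearity of trace produces precisely the numerator of
\eqref{eq:petzrelent-twosample-null}, with the mixed term
$2D[\rho^{\alpha}](L_1)D[\rho^{\bar\alpha}](L_2)$ arising from the cross
derivative of the bilinear form, and division by $\alpha-1$ completing the
identification.

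The hard part, as in Theorem~\ref{Thm:quantrelent-limdist}, will be justifying
the interchange of the weak limit with the trace and the Bochner integrals used
in the Taylor remainder: the alternative case requires uniform integrability of
the second-order remainder, and the null case of the third-order remainder.
Both should follow from the nested support hypotheses by the same
dominated-convergence style argument already developed for quantum relative
entropy, after which the continuous mapping theorem closes the argument.
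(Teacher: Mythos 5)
Your proposal is correct and follows essentially the same route as the paper: Taylor-expand $Q_\alpha(\rho,\sigma)=\tr{\rho^{\alpha}\sigma^{\bar\alpha}}$ to first (alternative) or second (null) order, observe that the first-order trace vanishes under the null because $\tr{\rho_n-\rho}=\tr{\sigma_n-\rho}=0$, and pass through the logarithm by the delta method, with the remainder handled by the same uniform-integrability machinery as for quantum relative entropy. The only notable difference is that you compute and control the derivatives via the Daleckii--Krein kernel whereas the paper uses Carlen's integral representations, which is precisely where the paper must treat $\alpha\in(1,2]$ with a separate representation and its Lemma \ref{lem:multprojtr}$(ii)$ in the remainder bounds --- a point your sketch defers but which your finite-dimensional, support-restricted setting can absorb.
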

The proof of Theorem \ref{Thm:Petzrelent-limdist} is given in Section \ref{Thm:Petzrelent-limdist-proof}, and utilizes a similar approach as Theorem \ref{Thm:quantrelent-limdist}. As  in the case of quantum relative entropy, the limits and the scaling rate differ in the null and alternative. Observe that we consider Petz-R\'{e}nyi divergence of order  less than two since it does not satisfy the data-processing inequality above this value. 
In the commutative case, the expressions in \eqref{eq:petzrelent-twosample-null} and \eqref{eq:petzrelent-twosample-alt} simplify further, and in particular, we deduce the  limit distributions for estimators of classical R\'{e}nyi divergences as stated below (see \cite{Pietrzak-2015} for a more general result when the dimension scales with $n$).  
\begin{cor}[Commutative case]
  If all operators in Theorem \ref{Thm:Petzrelent-limdist} commute, then 
\begin{align}
    r_n \big(\petzdiv{\rho_n}{\sigma_n}{\alpha}-\petzdiv{\rho}{\sigma}{\alpha}\mspace{-3 mu}\big) \trightarrow{w} \frac{\alpha\tr{L_1\sigma^{\bar \alpha} \rho^{-\bar \alpha}}+\bar \alpha \tr{L_2\rho^{\alpha}\sigma^{-\alpha}}}{(\alpha-1)\tr{\rho^{\alpha}\sigma^{\bar \alpha}}} . \label{eq:petzrelent-twosample-alt-comm}
\end{align}
Moreover, if  $\rho=\sigma$, then
\begin{align}
       r_n^2 \petzdiv{\rho_n}{\sigma_n}{\alpha} \trightarrow{w}  \frac{\tr{(L_1-L_2)^2\rho^{-1}} }{2(\alpha-1)} .\label{eq:petzrelent-twosample-null-comm} 
\end{align}
\end{cor}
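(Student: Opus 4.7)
The plan is to deduce both identities directly from Theorem \ref{Thm:Petzrelent-limdist} by collapsing the Fr\'echet derivatives to elementary spectral calculus. Since the corollary assumes that $\rho,\sigma,L_1,L_2$ pairwise commute, these operators share a common orthonormal eigenbasis, and in that basis the maps $A \mapsto A^\beta$ act coordinate-wise. This lets me compute $D[A^\beta]$ and $D^2[A^\beta]$ at commuting directions by reducing to the scalar derivatives of $x \mapsto x^\beta$. The main steps are then to derive closed forms for the derivatives appearing in \eqref{eq:petzrelent-twosample-alt} and \eqref{eq:petzrelent-twosample-null}, substitute them in, and collect like terms via the cyclic property of the trace.

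For the alternative case, I would first record the identity $D[A^\beta](H) = \beta A^{\beta-1} H$ valid whenever $A$ and $H$ commute, which follows from Definition \ref{Def:Frdiff} after simultaneously diagonalizing $A$ and $H$. Applied to $(A,\beta,H)=(\rho,\alpha,L_1)$ and $(A,\beta,H)=(\sigma,\bar\alpha,L_2)$, this gives $D[\rho^\alpha](L_1)=\alpha \rho^{\alpha-1}L_1$ and $D[\sigma^{\bar\alpha}](L_2)=\bar\alpha \sigma^{\bar\alpha-1}L_2$. Substituting into the numerator of \eqref{eq:petzrelent-twosample-alt}, rewriting $\rho^{\alpha-1} = \rho^{-\bar\alpha}$ and $\sigma^{\bar\alpha-1} = \sigma^{-\alpha}$, and cycling factors under the trace yields \eqref{eq:petzrelent-twosample-alt-comm}.

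For the null case, I would differentiate the identity above once more to obtain $D^2[A^\beta](H,H) = \beta(\beta-1) A^{\beta-2} H^2$ for commuting $A$ and $H$. Substituting this into \eqref{eq:petzrelent-twosample-null} at $\rho = \sigma$ with $(\beta,H) \in \{(\alpha,L_1),(\bar\alpha,L_2)\}$ produces three trace terms, each of which carries the common factor $\rho^{-1}$ after the ambient $\rho^{\bar\alpha}$ or $\rho^{\alpha}$ is absorbed into $\rho^{\alpha-2}$, $\rho^{\bar\alpha-2}$, or $\rho^{\alpha-1}\rho^{\bar\alpha-1}$. Using that $\bar\alpha(\bar\alpha-1) = \alpha(\alpha-1)$ and $2\alpha\bar\alpha = -2\alpha(\alpha-1)$, together with $L_1L_2 = L_2L_1$ under the trace, these terms combine into a multiple of $\tr{\rho^{-1}(L_1-L_2)^2}$, from which \eqref{eq:petzrelent-twosample-null-comm} follows after dividing by the prefactor $2(\alpha-1)$.

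No serious obstacle is anticipated: commutativity reduces the whole computation to scalar calculus, and the only bookkeeping is the sign and exponent shuffling driven by $\bar\alpha = 1-\alpha$. The work beyond what Theorem \ref{Thm:Petzrelent-limdist} already supplies is purely algebraic, and the weak-convergence step is inherited from the parent theorem via the continuous mapping theorem applied to these polynomial-in-$(L_1,L_2)$ functionals.
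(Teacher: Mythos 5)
Your strategy is the same one the paper uses: in the commuting case the Fr\'echet derivatives collapse to scalar functional calculus, and you substitute the resulting closed forms into \eqref{eq:petzrelent-twosample-alt} and \eqref{eq:petzrelent-twosample-null}. The alternative-case computation is correct: $D[\rho^\alpha](L_1)=\alpha\rho^{\alpha-1}L_1=\alpha\rho^{-\bar\alpha}L_1$ and $D[\sigma^{\bar\alpha}](L_2)=\bar\alpha\sigma^{-\alpha}L_2$ do recover \eqref{eq:petzrelent-twosample-alt-comm} after cycling under the trace.

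The null case, however, does not land where you claim it does. Your (correct) identities give $\rho^{\bar\alpha}D^2[\rho^\alpha](L_1,L_1)=\alpha(\alpha-1)\rho^{-1}L_1^2$, $\rho^{\alpha}D^2[\rho^{\bar\alpha}](L_2,L_2)=\bar\alpha(\bar\alpha-1)\rho^{-1}L_2^2=\alpha(\alpha-1)\rho^{-1}L_2^2$, and $2D[\rho^\alpha](L_1)D[\rho^{\bar\alpha}](L_2)=2\alpha\bar\alpha\,\rho^{-1}L_1L_2=-2\alpha(\alpha-1)\rho^{-1}L_1L_2$, so the numerator of \eqref{eq:petzrelent-twosample-null} equals $\alpha(\alpha-1)\tr{(L_1-L_2)^2\rho^{-1}}$ and division by $2(\alpha-1)$ yields $\tfrac{\alpha}{2}\tr{(L_1-L_2)^2\rho^{-1}}$, not $\tfrac{1}{2(\alpha-1)}\tr{(L_1-L_2)^2\rho^{-1}}$. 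Your closing sentence asserts that \eqref{eq:petzrelent-twosample-null-comm} ``follows,'' but the prefactor $\alpha(\alpha-1)$ does not cancel against $2(\alpha-1)$ to produce $\tfrac{1}{2(\alpha-1)}$; for that you would need the three second-order terms to carry no $\alpha$-dependent constant at all. The sanity check $\alpha\to 1$, where the Petz divergence tends to the relative entropy and the limit must match \eqref{eq:qrel-twosample-null-comm}, i.e.\ $\tfrac12\tr{(L_1-L_2)^2\rho^{-1}}$, supports your constant $\alpha/2$ and rules out the printed $\tfrac{1}{2(\alpha-1)}$, which diverges there. So the gap is not in your method but in the final identification: as written, your argument proves a statement differing from \eqref{eq:petzrelent-twosample-null-comm} by the factor $\alpha(\alpha-1)$, and you must either exhibit the (nonexistent) cancellation or flag the discrepancy and state the corrected constant. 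For reference, the paper's own evaluation of these derivatives via the integral representations drops exactly the $\alpha(\alpha-1)$ and $\alpha\bar\alpha$ prefactors (and the sign on the cross term), which is how it arrives at the printed formula; your scalar-calculus route, carried through honestly, exposes that.
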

\noindent
\textbf{Sandwiched R\'{e}nyi divergence:}
We next consider the minimal or sandwiched R\'{e}nyi divergence of order $\alpha$.  
\begin{theorem}[Limit distribution for sandwiched R\'{e}nyi divergence]\label{Thm:sandrelent-limdist}
Let $\alpha \in [0.5,1) \cup (1,\infty)$ and  $\bar \alpha=1-\alpha$. Suppose   
 $\rho_n \ll \sigma_n  \ll \sigma$ and $\rho_n \ll \rho \ll  \sigma$.  
 If $\big(r_n(\rho_n-\rho), r_n(\sigma_n-\sigma)\big) \trightarrow{w} (L_1,L_2)$, then 
 \begin{align}
   & r_n \big(\sanddiv{\rho_n}{\sigma_n}{\alpha}-\sanddiv{\rho}{\sigma}{\alpha}\big) \notag \\
    &\trightarrow{w} \frac{\alpha}{\alpha-1} \frac{\tr{\Big(D[\rho^{\frac 12}](L_1) \sigma^{\frac{\bar \alpha}{\alpha}} \rho^{\frac 12}+ \rho^{\frac 12} \sigma^{\frac{\bar \alpha}{\alpha}} D[\rho^{\frac 12}](L_1)  +\rho^{\frac 12} D[\sigma^{\frac{\bar \alpha}{\alpha}}](L_2) \rho^{\frac 12}\Big) \big(\rho^{\frac 12} \sigma^{\frac{\bar \alpha}{\alpha}}\rho^{\frac 12} \big)^{\alpha-1}}}{\big\|\rho^{\frac 12 }\sigma^{\frac{\bar \alpha}{\alpha}}\rho^{\frac 12 }\big\|_{\alpha}^{\alpha}}.\label{eq:sandrelent-twosample-alt}
\end{align}
If all operators  above commute, then \eqref{eq:sandrelent-twosample-alt} simplifies to \eqref{eq:petzrelent-twosample-alt-comm}.
\end{theorem}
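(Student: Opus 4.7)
My plan is to follow the Taylor-expansion-plus-chain-rule strategy already used in the proofs of Theorems~\ref{Thm:quantrelent-limdist} and~\ref{Thm:Petzrelent-limdist}, but applied to the composition of three smooth maps defining the sandwiched divergence. The key observation is that, because $T(\rho,\sigma):=\rho^{\frac12}\sigma^{\bar\alpha/\alpha}\rho^{\frac12}$ is positive semidefinite,
\begin{align*}
\sanddiv{\rho}{\sigma}{\alpha}\;=\;\frac{1}{\alpha-1}\log\tr{T(\rho,\sigma)^{\alpha}},
\end{align*}
so $\sanddiv{\cdot}{\cdot}{\alpha}$ is the composition $(\rho,\sigma)\mapsto T\mapsto \tr{T^{\alpha}}\mapsto (\alpha-1)^{-1}\log(\cdot)$. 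I would linearize each layer separately and then assemble via the chain rule.

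\textbf{Steps.} First, verify that $(\rho,\sigma)\mapsto T(\rho,\sigma)$ is Fr\'echet differentiable at the population values under the hypotheses $\rho_n\ll\rho\ll\sigma$ and $\sigma_n\ll\sigma$; by the product rule this reduces to the differentiability of $x\mapsto x^{\frac12}$ and of $x\mapsto x^{\bar\alpha/\alpha}$ on the positive cone of the support of $\sigma$ (with $x^{\bar\alpha/\alpha}$ interpreted via the Moore--Penrose inverse when $\alpha>1$), and yields a Fr\'echet derivative in direction $(L_1,L_2)$ equal to the parenthesized expression in the numerator of~\eqref{eq:sandrelent-twosample-alt}. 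Second, use the identity $\tfrac{d}{dt}\tr{(T+tH)^{\alpha}}\big|_{t=0}=\alpha\tr{T^{\alpha-1}H}$---which can be obtained from a Dyson/Duhamel expansion or from an integral representation of $x^{\alpha}$---to handle the power-trace map, and apply the scalar Taylor expansion of $\log$ around $\tr{T^{\alpha}}>0$ for the outer map. Third, apply an operator Taylor theorem with remainder to $T(\rho_n,\sigma_n)-T(\rho,\sigma)$, substitute into the two outer expansions, and collect first-order terms. The chain rule and the identity $\norm{\rho^{\frac12}\sigma^{\bar\alpha/\alpha}\rho^{\frac12}}_{\alpha}^{\alpha}=\tr{T^{\alpha}}$ yield exactly the right-hand side of~\eqref{eq:sandrelent-twosample-alt}, and Slutsky's lemma together with the continuous mapping theorem deliver the stated weak convergence.

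\textbf{Main obstacle.} The chief difficulty is showing that the three layers of Taylor remainders are $o_P(r_n^{-1})$, so that they vanish after multiplication by $r_n$. This is parallel to the remainder analysis in the proof of Theorem~\ref{Thm:quantrelent-limdist}, but the map $T\mapsto\tr{T^{\alpha}}$ introduces an additional subtlety when $\alpha\in[1/2,1)$: the operator $T^{\alpha-1}$ is then a negative power, so the remainder bound requires $T(\rho_n,\sigma_n)$ to stay uniformly bounded below on a fixed subspace. I would handle this by exploiting the almost-sure support containments to confine all operators to the support of $\sigma$, combined with H\"older bounds of the form $|\tr{T^{\alpha-1}H}|\le\norm{T^{\alpha-1}}_{\alpha/(\alpha-1)}\norm{H}_{\alpha}$ and a uniform integrability argument in the style used earlier in the paper to interchange limits, traces, and integrals.

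\textbf{Commutative specialization.} In the commutative case, $T=\rho\sigma^{\bar\alpha/\alpha}$ and $T^{\alpha-1}=\rho^{\alpha-1}\sigma^{\bar\alpha(\alpha-1)/\alpha}$. Substituting the closed forms $D[\rho^{\frac12}](L_1)=\tfrac{1}{2}\rho^{-\frac12}L_1$ and $D[\sigma^{\bar\alpha/\alpha}](L_2)=(\bar\alpha/\alpha)\sigma^{\bar\alpha/\alpha-1}L_2$ into~\eqref{eq:sandrelent-twosample-alt} and repeatedly using cyclicity of the trace collapses the three summands in the numerator to $\alpha\tr{L_1\sigma^{\bar\alpha}\rho^{-\bar\alpha}}+\bar\alpha\tr{L_2\rho^{\alpha}\sigma^{-\alpha}}$. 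Dividing by $(\alpha-1)\tr{\rho^{\alpha}\sigma^{\bar\alpha}}$ then reproduces~\eqref{eq:petzrelent-twosample-alt-comm}, consistent with the fact that the sandwiched and Petz--R\'enyi divergences agree in the commutative setting.
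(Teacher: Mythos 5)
Your route is genuinely different from the paper's. You differentiate the full composition $(\rho,\sigma)\mapsto T=\rho^{\frac12}\sigma^{\bar\alpha/\alpha}\rho^{\frac12}\mapsto\tr{T^{\alpha}}\mapsto(\alpha-1)^{-1}\log(\cdot)$ directly, using $\tfrac{d}{dt}\tr{(T+tH)^{\alpha}}\big|_{t=0}=\alpha\tr{T^{\alpha-1}H}$. The paper instead deliberately avoids ever differentiating $T\mapsto\tr{T^{\alpha}}$: it writes $\norm{T}_{\alpha}$ via the Schatten-norm duality $\max_{\eta\ge 0,\,\norm{\eta}_{\alpha/(\alpha-1)}\le 1}\tr{T\eta}$, so that for each fixed $\eta$ the functional is \emph{linear} in $T$ and only the operator Taylor expansion of $T(\rho_n,\sigma_n)$ around $T(\rho,\sigma)$ is needed; the price is a sandwiching argument between the plug-ins $\eta_n^{\star}$ and $\eta^{\star}$, whose upper and lower limits must be shown to coincide (which is also why the paper's method fails to give the null-case rate). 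Your approach buys a cleaner one-pass chain-rule derivation and no optimizer-convergence step, but it pushes all the difficulty into the second-order remainder of the fractional power-trace map: that remainder involves integrals of $(\tau I+T_t)^{-1}$ to the third power against $\tau^{\alpha}$, i.e.\ effectively $T_t^{\alpha-2}$ along the interpolants $T_t=(1-t)T+tT_n$, which is a negative power for \emph{every} $\alpha<2$, not only for $\alpha\in[0.5,1)$ as you suggest; you therefore need the interpolants bounded below on the support of $\rho$ (obtainable along a.s.\ convergent subsequences exactly as in the paper, given $\rho,\sigma>0$ after restriction to supports). Both limits agree with \eqref{eq:sandrelent-twosample-alt} since $\norm{\rho^{\frac12}\sigma^{\bar\alpha/\alpha}\rho^{\frac12}}_{\alpha}^{\alpha}=\tr{T^{\alpha}}$, and your commutative reduction is correct.

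One concrete repair: the bound $\abs{\tr{T^{\alpha-1}H}}\le\norm{T^{\alpha-1}}_{\alpha/(\alpha-1)}\norm{H}_{\alpha}$ is only a legitimate Schatten--H\"older inequality when $\alpha>1$; for $\alpha\in[0.5,1)$ the exponents $\alpha$ and $\alpha/(\alpha-1)$ are not conjugate indices in $[1,\infty]$ (indeed $\norm{\cdot}_{\alpha}$ is only a quasi-norm there). Replace it by $\abs{\tr{T^{\alpha-1}H}}\le\norm{T^{\alpha-1}}_{\infty}\norm{H}_1$ on the support of $\rho$, which is what the rest of your uniform-integrability argument actually needs.
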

The proof of Theorem \ref{Thm:sandrelent-limdist} is given in Section \ref{Thm:sandrelent-limdist-proof}. Different from the approach used in the previous results, 
we compute this limit by recasting the term  $\|\rho^{ 1/2} \sigma^{\frac{\bar \alpha}{\alpha}} \rho^{ 1/2}\|_{\alpha}$ in \eqref{eq:sand-Renyi-div} as a maximization using dual expressions for Schatten norms.  However,  establishing the desired limit with the new expression is more involved on account of the additional maximization involved. To this end, we consider upper and lower bounds (without the maximization) and show that they coincide  with the expression in \eqref{eq:sandrelent-twosample-alt} asymptotically, thus establishing  the claim.    We mention here that the right-hand side (RHS) of \eqref{eq:sandrelent-twosample-alt} vanishes when $\rho=\sigma$, showing that correct scaling rate in the null setting for a non-degenerate limit should be $r_n^2$. However, the above technique does not lead to a proof of this claim due to non-matching upper and lower limits.

As a corollary of Theorem \ref{Thm:sandrelent-limdist}, we characterize the limit distributions for  fidelity and max-divergence.
\begin{cor}[Fidelity and max-divergence]\label{cor:limdistfidel}
Let $\rho_n \ll \sigma_n  \ll \sigma$ and $\rho_n \ll \rho \ll  \sigma$. If $\big(r_n(\rho_n-\rho), r_n(\sigma_n-\sigma)\big) \trightarrow{w} (L_1,L_2)$, then
  \begin{align}
   &  r_n \big(F(\rho_n,\sigma_n)-F(\rho,\sigma)\big) \notag \\
    &\trightarrow{w} \big(F(\rho,\sigma)\big)^{\frac 12}\tr{\Big(D[\rho^{\frac 12}](L_1) \sigma \rho^{\frac 12}+ \rho^{\frac 12} \sigma D[\rho^{\frac 12}](L_1)  +\rho^{\frac 12} L_2 \rho^{\frac 12}\Big) \big(\rho^{\frac 12} \sigma\rho^{\frac 12} \big)^{-\frac 12}}, \notag 
  \end{align}
  and 
  \begin{align}
& r_n \big(\mathsf{D}_{\max}(\rho_n\|\sigma_n)-\mathsf{D}_{\max}(\rho\|\sigma)\big) \notag \\
    &\trightarrow{w} e^{-\mathsf{D}_{\max}(\rho\|\sigma)} \tr{\Big(D[\rho^{\frac 12}](L_1) \sigma^{-1} \rho^{\frac 12}+ \rho^{\frac 12} \sigma^{-1} D[\rho^{\frac 12}](L_1)  -\rho^{\frac 12} \sigma^{-1}L_2\sigma^{-1} \rho^{\frac 12}\Big) \Pi_{\max}},\notag
\end{align}
where $\Pi_{\max}$  denotes the  eigenprojection corresponding to  the maximal eigenvalue of $\rho^{\frac 12}\sigma^{-1}\rho^{\frac 12}$.
\end{cor}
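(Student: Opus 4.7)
The plan is to obtain both claims via the delta method, applied either to Theorem \ref{Thm:sandrelent-limdist} or to the spectral formula for $\mathsf{D}_{\max}$, composed with a suitable $C^1$ post-processing.

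For the fidelity, I would exploit the identity $F(\rho,\sigma)=e^{-\sanddiv{\rho}{\sigma}{1/2}}$ and specialize Theorem \ref{Thm:sandrelent-limdist} to $\alpha=1/2$. At this value, $\bar\alpha/\alpha=1$, $\alpha/(\alpha-1)=-1$, and $\|\rho^{1/2}\sigma\rho^{1/2}\|_{1/2}=F(\rho,\sigma)$, so the theorem yields
\begin{align}
r_n\big(\sanddiv{\rho_n}{\sigma_n}{1/2}-\sanddiv{\rho}{\sigma}{1/2}\big)\trightarrow{w} -F(\rho,\sigma)^{-1/2}\,T(L_1,L_2), \notag
\end{align}
where $T(L_1,L_2):=\tr{\big(D[\rho^{1/2}](L_1)\sigma\rho^{1/2}+\rho^{1/2}\sigma\,D[\rho^{1/2}](L_1)+\rho^{1/2}L_2\rho^{1/2}\big)\big(\rho^{1/2}\sigma\rho^{1/2}\big)^{-1/2}}$. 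Composing with the $C^1$ function $x\mapsto e^{-x}$ via the standard delta method (continuous mapping of the weakly convergent sequence plus linearization of the exponential), I get $r_n\big(F(\rho_n,\sigma_n)-F(\rho,\sigma)\big)\trightarrow{w}-F(\rho,\sigma)\cdot\big(-F(\rho,\sigma)^{-1/2}\,T(L_1,L_2)\big)=F(\rho,\sigma)^{1/2}\,T(L_1,L_2)$, matching the stated formula.

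For the max-divergence, one cannot simply pass $\alpha\to\infty$ inside Theorem \ref{Thm:sandrelent-limdist}, so I would instead use the spectral representation $\mathsf{D}_{\max}(\rho\|\sigma)=\log\lambda_{\max}\big(X(\rho,\sigma)\big)$, where $X(\rho,\sigma):=\rho^{1/2}\sigma^{-1}\rho^{1/2}$. Applying the product rule together with \eqref{eq:derinvop}, the Fr\'echet derivative of $X$ at $(\rho,\sigma)$ in direction $(L_1,L_2)$ is
\begin{align}
DX(\rho,\sigma)(L_1,L_2)=D[\rho^{1/2}](L_1)\sigma^{-1}\rho^{1/2}+\rho^{1/2}\sigma^{-1}D[\rho^{1/2}](L_1)-\rho^{1/2}\sigma^{-1}L_2\sigma^{-1}\rho^{1/2}. \notag
\end{align}
Under the assumption that the maximal eigenvalue of $X(\rho,\sigma)$ is simple, $\lambda_{\max}$ is Fr\'echet differentiable at $X(\rho,\sigma)$ with derivative $H\mapsto\tr{\Pi_{\max}H}$. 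Chaining with the derivative of $\log$ evaluated at $\lambda_{\max}\big(X(\rho,\sigma)\big)=e^{\mathsf{D}_{\max}(\rho\|\sigma)}$ gives $D\mathsf{D}_{\max}(\rho\|\sigma)(L_1,L_2)=e^{-\mathsf{D}_{\max}(\rho\|\sigma)}\tr{\Pi_{\max}\,DX(\rho,\sigma)(L_1,L_2)}$, and the delta method combined with the joint weak convergence $(r_n(\rho_n-\rho),r_n(\sigma_n-\sigma))\trightarrow{w}(L_1,L_2)$ then delivers the stated limit.

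The main obstacle is the max-divergence case: the Fr\'echet differentiability of $\lambda_{\max}$ at $X(\rho,\sigma)$ truly requires its top eigenvalue to be simple. If it is degenerate, $\lambda_{\max}$ is only directionally differentiable (with directional derivative equal to the maximal eigenvalue of the compressed perturbation $\Pi_{\max}\,DX(\rho,\sigma)(L_1,L_2)\,\Pi_{\max}$ on the top eigenspace), so the limiting variable becomes a genuinely nonlinear functional of $(L_1,L_2)$ and the closed-form expression in the corollary would need to be read as a standing non-degeneracy hypothesis. For the fidelity, by contrast, no such issue arises since $x\mapsto e^{-x}$ is smooth and $(\rho^{1/2}\sigma\rho^{1/2})^{-1/2}$ is well-defined as a generalized inverse on the support of $\rho$ under the standing assumption $\rho\ll\sigma$.
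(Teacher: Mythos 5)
Your fidelity argument is exactly the paper's: specialize Theorem \ref{Thm:sandrelent-limdist} to $\alpha=1/2$ (noting $\bar\alpha/\alpha=1$, $\alpha/(\alpha-1)=-1$, $\|\rho^{1/2}\sigma\rho^{1/2}\|_{1/2}^{1/2}=\sqrt{F(\rho,\sigma)}$) and push through the delta method for $x\mapsto e^{-x}$; the signs and the prefactor $F(\rho,\sigma)^{1/2}$ check out. For the max-divergence your route is genuinely different. The paper does \emph{not} differentiate $\lambda_{\max}$ directly; it reuses the sandwiching machinery from the proof of Theorem \ref{Thm:sandrelent-limdist}, writing $\mathsf{D}_{\max}(\rho\|\sigma)=\max_{\eta\ge 0,\,\|\eta\|_1\le 1}\log\tr{\rho^{1/2}\sigma^{-1}\rho^{1/2}\eta}$ with optimizer $\eta^\star=\Pi_{\max}$, bounding $r_n(\mathsf{D}_{\max}(\rho_n\|\sigma_n)-\mathsf{D}_{\max}(\rho\|\sigma))$ above and below by the corresponding linearized quantities evaluated at $\Pi_{n,\max}$ and $\Pi_{\max}$ respectively, showing $\Pi_{n,\max}\to\Pi_{\max}$ a.s.\ in trace norm, and then applying the delta method to $x\mapsto\log x$ at $x=e^{\mathsf{D}_{\max}(\rho\|\sigma)}$. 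Your approach instead invokes first-order eigenvalue perturbation theory: $\lambda_{\max}$ is Fr\'echet differentiable at $X(\rho,\sigma)=\rho^{1/2}\sigma^{-1}\rho^{1/2}$ with derivative $H\mapsto\tr{\Pi_{\max}H}$ when the top eigenvalue is simple, and your computation of $DX(\rho,\sigma)(L_1,L_2)$ via the product rule and \eqref{eq:derinvop} reproduces the operator appearing in the corollary. Both routes yield the same limit; yours is shorter and makes the key non-degeneracy hypothesis explicit, while the paper's keeps the presentation uniform with the sandwiched-R\'enyi proof. Your caveat about a degenerate top eigenvalue is well taken, and it applies equally to the paper's own argument: if the maximal eigenvalue has multiplicity $m>1$ then $\|\Pi_{\max}\|_1=m>1$, so $\Pi_{\max}$ is not even feasible for the constraint $\|\eta\|_1\le 1$, and the convergence $\Pi_{n,\max}\to\Pi_{\max}$ can fail; so the corollary should be read under a standing simplicity assumption in either treatment. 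The only minor omission on your side is the reduction to $\rho,\sigma>0$ (needed for Fr\'echet differentiability of $\rho\mapsto\rho^{1/2}$ and for $\sigma^{-1}$), which the paper handles by its standard restriction-to-support argument and which you could import verbatim.
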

\subsection{Generalization to Infinite-dimensional Quantum Systems}\label{Sec:infdim}
Here, we consider a generalization of Theorem \ref{Thm:quantrelent-limdist} to  infinite-dimensional quantum systems when the underlying Hilbert space is separable.  
The  appropriate notion of  Fr\'{e}chet differentiability relevant for our purposes is that of an operator-valued function on the space of Hermitian operators with bounded trace-norm, i.e., $A$ and $H$ in Definition \ref{Def:Frdiff} are required to have finite trace-norm, and $o(\norm{H})$ is replaced by $o(\norm{H}_1)$ in \eqref{eq:frechetderdef}. Defining the Fr\'{e}tchet derivative in this manner, the following result (see Section \ref{Thm:quantrelent-limdist-inf-proof} for proof) provides sufficient conditions under which   limit distribution for quantum relative entropy  exists. 
\begin{theorem}
[Quantum relative entropy: Infinite dimensional case]\label{Thm:quantrelent-limdist-inf} 
 Let   
 $\rho_n \ll \sigma_n  \ll \sigma$ and $\rho_n \ll \rho \ll  \sigma$ be such that $\qrel{\rho_n}{\sigma_n}<\infty$,  $\qrel{\rho}{\sigma}<\infty$, and  there exists a constant $c$ satisfying  $\PP\big(\norm{\rho_n \sigma_n^{-1}}_{\infty}>c\big) \rightarrow 0$. 
Then, the  following hold:
\begin{enumerate}[(i)]
   \item (Two-sample alternative)
 If $\big(r_n(\rho_n-\rho), r_n(\sigma_n-\sigma)\big) \trightarrow{w} (L_1,L_2)$ in trace norm, then \eqref{eq:qrel-twosample-alt} holds.
   \item (Two-sample null)
If $\rho=\sigma$ and 
 $\big(r_n(\rho_n-\rho),r_n(\sigma_n-\rho)\big) \trightarrow{w} (L_1,L_2)$ in trace norm,  then
  \eqref{eq:qrel-twosample-null} holds.
\end{enumerate}

\end{theorem}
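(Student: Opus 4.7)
The plan is to adapt the finite-dimensional proof of Theorem \ref{Thm:quantrelent-limdist} to the separable infinite-dimensional setting, where the hypothesis $\PP(\|\rho_n\sigma_n^{-1}\|_\infty > c) \to 0$ plays the role that equivalence of norms and automatic uniform integrability played before. The backbone of the argument is the operator Taylor expansion of $\Phi(A,B) := \tr{A(\log A - \log B)}$ around $(\rho,\sigma)$, made tractable by the integral representations
\begin{equation}
\log A = \int_0^\infty \left(\frac{1}{1+t}I - \frac{1}{A+tI}\right)\!dt, \qquad D[\log A](H) = \int_0^\infty \frac{1}{A+tI}\,H\,\frac{1}{A+tI}\,dt,\notag
\end{equation}
which lift all manipulations into the Banach space of trace-class operators.

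For the alternative case (i), I would apply a first-order operator Taylor expansion to $\Phi$ around $(\rho,\sigma)$, producing a linear principal term equal to $\tr{(\rho_n-\rho)(\log\rho-\log\sigma)} - \tr{\rho\, D[\log\sigma](\sigma_n - \sigma)}$ after using $\tr{\rho_n}=\tr{\rho}$ to eliminate the linear contribution of $\log\rho_n - \log\rho$, plus a quadratic remainder $R_n$. Since $\rho\ll\sigma$ makes $\log\rho-\log\sigma$ bounded on $\supp(\rho)$, and $D[\log\sigma]$ acts boundedly on trace-class increments, the principal term is a continuous linear functional of $(\rho_n-\rho, \sigma_n-\sigma)$ in trace norm. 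The continuous mapping theorem combined with the joint weak convergence $\big(r_n(\rho_n-\rho), r_n(\sigma_n-\sigma)\big)\trightarrow{w}(L_1,L_2)$ then produces the RHS of \eqref{eq:qrel-twosample-alt}, conditional on showing $r_n R_n \to 0$ in probability.

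The main obstacle is this remainder estimate. In integral form, $R_n$ involves terms such as $\int_0^\infty \tr{\rho_n (\sigma_n + tI)^{-1}(\sigma_n-\sigma)(\sigma+tI)^{-1}(\sigma_n-\sigma)(\sigma+tI)^{-1}}\,dt$ together with its counterpart arising from the expansion of $\log\rho_n$ around $\log\rho$. In finite dimensions these integrands are uniformly bounded by a constant depending on the smallest eigenvalue; in infinite dimensions this bound must instead be extracted from the hypothesis. Conditioning on the high-probability event $\cE_n := \{\|\rho_n\sigma_n^{-1}\|_\infty \le c\}$, cycling $\rho_n$ through the trace, and using the resulting operator inequality $\rho_n \le c\,\sigma_n$ to dominate the integrand reduces the above to a quantity bounded by $C_c\|\sigma_n-\sigma\|_1^2$; a parallel argument using $\rho_n\ll\rho$ handles the $\log\rho_n$ piece. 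Since $\PP(\cE_n^c)\to 0$ and $r_n(\rho_n-\rho)$, $r_n(\sigma_n-\sigma)$ are tight in trace norm (by Prokhorov, applied to the assumed weak limit), this yields $r_n R_n = o_p(1)$, and Slutsky's lemma gives \eqref{eq:qrel-twosample-alt}.

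For the null case (ii), the first-order term vanishes identically because $\log\rho = \log\sigma$ and $\tr{\rho\, D[\log\rho](\sigma_n - \rho)} = \tr{\sigma_n - \rho} = 0$, so one carries the Taylor expansion to second order. The principal contribution yields the two summands on the RHS of \eqref{eq:qrel-twosample-null} at the faster rate $r_n^{-2}$, and the third-order remainder is bounded on $\cE_n$ by a constant times $\|\rho_n-\rho\|_1^3 + \|\sigma_n-\sigma\|_1^3$, which is $o_p(r_n^{-2})$ under the assumed weak convergence at rate $r_n$. A second application of Slutsky then closes the argument.
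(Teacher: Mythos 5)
Your overall strategy matches the paper's: both proofs run the operator Taylor expansion of $\tr{A(\log A-\log B)}$ through the resolvent integral representation of $\log$, and you correctly locate where the hypothesis $\PP\big(\norm{\rho_n\sigma_n^{-1}}_{\infty}>c\big)\to 0$ enters, namely in dominating the cross terms in which $(1-t)\rho+t\rho_n$ sits between resolvents built from $\sigma_n$ (the paper's bound $v^{1/2}\big((1-t)\rho+t\rho_n\big)v^{1/2}\le I+\rho_n\sigma_n^{-1}$ is your ``$\rho_n\le c\,\sigma_n$ domination'' in disguise). The probabilistic packaging differs: the paper passes to an a.s.-convergent subsequence via Skorokhod, upgrades the event $\{\norm{\rho_n\sigma_n^{-1}}_{\infty}\le c\}$ to an a.s. statement along a further subsequence via Borel--Cantelli, and interchanges limits with the Bochner integrals via uniform integrability; you condition on the high-probability event, extract $\norm{\rho_n-\rho}_1=O_p(r_n^{-1})$ from tightness, and close with Slutsky. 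For the alternative case (i), where the second-order remainder only has to vanish, your route is equivalent and arguably lighter.

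Two points need repair. First, in the null case (ii) you deviate from the paper by expanding to third order and asserting that the cubic remainder is $O_p\big(\norm{\rho_n-\rho}_1^3+\norm{\sigma_n-\sigma}_1^3\big)$ on $\cE_n$. The paper instead keeps the integral-form second-order remainder---which \emph{is} the limit in the null case---and passes to the limit inside it. Your route additionally requires (a) that the second-order form $D^2\Phi(\rho,\rho)$ be a continuous bilinear functional in trace norm, so that the continuous mapping theorem applies to $\big(r_n(\rho_n-\rho),r_n(\sigma_n-\rho)\big)$, and (b) an explicit computation and bound for the third Fr\'{e}chet derivative along the segment; you establish neither, and both steps confront the same convergence-of-resolvent-integral issues that the second-order analysis already forces, so the extra order of expansion buys nothing while adding unverified claims. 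Second, your justification that the linear term is trace-norm continuous ``since $\rho\ll\sigma$ makes $\log\rho-\log\sigma$ bounded on the support of $\rho$'' is false as stated: support containment, and even finiteness of $\qrel{\rho}{\sigma}$, does not imply $\norm{\log\rho-\log\sigma}_{\infty}<\infty$ in infinite dimensions (take commuting $\rho,\sigma$ with eigenvalue ratios tending to infinity slowly enough that the relative entropy stays finite). The paper makes essentially the same boundedness assertion without proof, so this is a shared weak point rather than one specific to your argument, but $\rho\ll\sigma$ is not the reason and should not be offered as one.
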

We briefly discuss the regularity assumptions in the above theorem. In the infinite dimensional case, $\qrel{\rho}{\sigma}$ can be unbounded even if the support conditions $\rho \ll \sigma$ are satisfied. This necessitates the finiteness assumption on the quantum relative entropies above. The condition $\PP\big(\norm{\rho_n \sigma_n^{-1}}_{\infty}>c\big) \rightarrow 0$ imposes a stochastic boundedness assumption on the  operator $\rho_n \sigma_n^{-1}$ and is a natural condition for the existence of distributional limits even in the classical case (see \cite[Theorem 2 and Remark 1]{SGK-IT-2023}).  To see this, employing the bra-ket notation from quantum theory, take $\rho=\sigma=\ket{0}\bra{0}$, and
\begin{align}
    \rho_n&=\big(1-n^{-1}\big)\ket{0}\bra{0}+n^{-1}\ket{n}\bra{n}, \notag \\
    \sigma_n&=\big(1-e^{-n^{2}}\big)\ket{0}\bra{0}+e^{-n^{2}}\ket{n}\bra{n}, \notag
\end{align}
so that $\norm{\rho_n \sigma_n^{-1}}_{\infty}$ diverges.
Observe that  $\sqrt{n}(\rho_n- \rho)\trightarrow{w} 0$ and $\sqrt{n}(\sigma_n- \sigma)\trightarrow{w} 0$, where $0$ denotes the zero operator. However, it is easily seen by a straightforward  computation that  $ \qrel{\rho_n}{\sigma_n}$ diverges. Hence, the limit $\sqrt{n}\qrel{\rho_n}{\sigma_n}$ does not exist and Theorem \ref{Thm:quantrelent-limdist-inf} does not hold. 
\section{Application}\label{Sec:applic}
Limit theorems for classical divergences have several applications in statistics, computational science and biology such as constructing confidence intervals for hypothesis testing\cite{Okano-2024},  auditing of differential privacy\cite{SGK-IT-2023}, and biological data analysis \cite{Pietrzak-2015}. Here, we consider an application of Theorem \ref{Thm:quantrelent-limdist} in establishing performance guarantees for the problem of testing for the quantum relative entropy between unknown states\footnote{Since the states under the hypotheses are unknown, the problem is different from  testing with known states and computing the quantum relative entropy to decide on the hypotheses.}.   
The relevant multi-hypothesis testing problem  can be  formulated as\footnote{Note that here $(\rho_i,\sigma_i)$, $1 \leq i \leq m $, denote pairs of quantum states and not random density operators as was used until now.}
  \begin{align}
    H_{i}&: \epsilon_{i} < \qrel{\rho_i}{\sigma_i} 
 \leq  \epsilon_{i+1}, \label{eq:hyptest} 
\end{align}   
where $\epsilon_i \geq 0$ satisfy $\epsilon_{i+1} > \epsilon_i$ for $i \in \cI = \{0,\ldots,m-1\} $.  
 We are interested in the setting where approximately $nd^2$ identical copies of the unknown states are available for the tester. The goal then is to design a test  $\cT_n=\{M_i^{(n)}, i \in \cI\}$ with $M_i^{(n)} \geq 0$ for all $i$, and $\sum_{i \in \cI} M_i^{(n)}=I$ that achieves a specified performance, i.e., an $m$-outcome positive operator-valued measure (POVM)  with index set $\cI$ (see Appendix \ref{Sec:limdist-mrel} for further details). Denoting the original hypothesis by $H$ and the test outcome by $\hat H$, the performance of $\cT_n$ is quantified by the  error probabilities
 \begin{align}
    \alpha_{i,n}(\cT_n,\rho_i^{\otimes n},\sigma_i^{\otimes n}) &:= \PP(\hat H \neq i|H=i)=\operatorname{Tr}\Bigg[(\rho_i^{\otimes n}\otimes \sigma_i^{\otimes n})\sum_{j \neq i}M_j^{(n)}\Bigg]. \notag
 \end{align}
 A test $\cT_n$ is said to achieve level $\tau$ if $\alpha_{i,n}(\cT_n,\rho_i^{\otimes n},\sigma_i^{\otimes n}) \leq \tau$ for every $i \in \cI$. A sequence of tests $(\cT_n)_{n \in \NN}$ is asymptotically said to achieve level $\tau$ if $\limsup_{n \rightarrow \infty} \alpha_{i,n}(\cT_n,\rho_i^{\otimes n},\sigma_i^{\otimes n}) \leq \tau$ for every $i \in \cI$.

A pertinent approach to realize a hypothesis test  is to first perform tomography of the states to obtain estimates, $\hat \rho_n, \hat \sigma_n$,  
and then compute the relative entropy between them.  
A standard class of tests (motivated from the Neyman-Pearson theorem) then decides in favor of $H_i$ if $t_{i,n} < \qrel{\hat \rho_n}{\hat \sigma_n} \leq t_{i+1,n}$, where $t_{i,n}$ for $0 \leq i \leq m$ are critical values chosen according to the desired level $\tau_i \in (0,1]$ for $i^{th}$ error probability. Each such test (statistic) $T_n$ induces a POVM indexed by $\cI$, denoted by $\cT_n^{\mathrm{tom}}\big(\{t_i\}_{i \in \cI}\big)$, for which we will use the shorthand $\cT_n^{\mathrm{tom}}$. Let $\alpha_{i,n}\big(\cT_n^{\mathrm{tom}},\rho_i^{\otimes n},\sigma_i^{\otimes n}\big)=\PP(T_n \neq i| H= i)$ denote the error probability for the test statistic $T_n$ given the $i^{th}$ hypothesis is true.

 To obtain concrete performance guarantees for the aforementioned  hypothesis test, 
 we  consider a specific tomographic estimator for density operators based on Pauli measurements.  This can be considered as a quantum analogue of the classical plug-in estimator based on empirical probability distributions. The choice of Pauli measurements is mainly due to simplicity of presentation, and our approach will extend to other tomographic schemes that rely on estimating the coefficients in an operator basis expansion using measurements on independent copies of quantum states. In the following, we first describe the estimator and characterize its limiting distribution, which will then be used to construct the test statistic for \eqref{eq:hyptest}. 
\subsection{Tomographic Estimator of Quantum States}
Let $d=2^N$ for some integer $N$, and $\{\gamma_j\}_{j=0}^{d^2-1}$ denote the set of  
multi-qubit ($N$-qubit)  Pauli operators constructed as the $N$-fold tensor product of standard Pauli operators acting on a qubit. Specifically,  $ \gamma_j=\bigotimes_{i=1}^N\gamma_{j,i}$ with $\gamma_{j,i} \in \{R_k\}_{k=0}^3$, where $\{R_k\}_{k=0}^3$ denotes the  single-qubit Pauli basis with the following representations in the standard basis: 
 \begin{align}
R_0=\Bigg[ {\begin{array}{cc}
   1 & 0 \\[2 pt]
   0 & 1 \\
  \end{array} } \Bigg], \quad  R_1=\Bigg[ {\begin{array}{cc}
   0 & 1 \\[2 pt]
   1 & 0 \\
  \end{array} } \Bigg], \quad  R_2=\Bigg[ {\begin{array}{cc}
   0 & -i \\[2 pt]
   i & 0 \\
  \end{array} } \Bigg], \quad  R_3=\Bigg[ {\begin{array}{cc}
   1 & 0 \\[2 pt]
   0 & -1 \\
  \end{array} }  \Bigg]. \notag
 \end{align}
 We may take $\gamma_0=I$.  
The multi-qubit Pauli operators are Hermitian and form an orthogonal operator basis for the real vector space $\cH_d$ with respect to the Hilbert-Schmidt inner product. Consequently, any multi-qubit density operator $\rho$ can be written as
\begin{align}
   \rho=\frac 1d \left(I+\sum_{j=1}^{d^2-1} s_{j}(\rho) \gamma_j\right), \label{eq:exppaulibasis} 
\end{align}
 with $s_{j}(\rho)=\tr{\rho \gamma_j}$. Note that  $\gamma_j$, for $1 \leq j \leq d^2-1$, are traceless  and have eigenvalues $\pm 1$. Moreover, for any $\bm{s}=(s_1,\ldots,s_{d^2-1}) \in \RR^{d^2-1}$, the operator  $\frac 1d \left(I+\sum_{j=1}^{d^2-1} s_{j} \gamma_j\right)$ is Hermitian with unit trace; hence, is a valid density operator provided $\bm{s}$ is such that  positive semi-definiteness holds.   
 Let $P_{j}^+$ and  $P_{j}^-$ denote the projections onto the eigenspace of $\gamma_j$ corresponding to the eigenvalue $+1$  and  $-1$ , respectively. 
 Then
 \begin{align}
   s_{j}(\rho)=s_{j}^+(\rho)-s_{j}^-(\rho), \notag  
 \end{align}
 where $s_{j}^+(\rho):=\tr{\rho P_{j}^+}$ and $s_{j}^-(\rho):=\tr{\rho P_{j}^-}=1-s_{j}^+(\rho)$.

Assume that identical copies of $\rho$ and $\sigma$ are available as desired, on which measurements using Pauli operators can be performed and the outcomes recorded. Let $O_{k}(j,\rho)$ denote the $k^{th}$ measurement outcome using $\gamma_j$ on $\rho$. 
Denoting by $\ind_{\cA}$ the indicator of the event $\cA$ and by $\Pi_{\cS_d}$ the  projection (in the sense of Hilbert projection theorem) onto the closed convex set $\cS_d$, a  tomographic estimator of $\rho$ and $\sigma$  is then given by
\begin{subequations}
 \begin{align}
    \hat \rho_n= \ind_{\bar \rho_n \geq 0} \mspace{2 mu} \bar \rho_n + \ind_{\bar \rho_n \ngeq 0 } \mspace{2 mu} \Pi_{\cS_d} (\bar \rho_n) , \label{eq:tomogestrho} 
        \end{align}
 \mbox{and }  \\      
    \begin{align}
    \hat \sigma_n= \frac{I}{nd} +\left(1-\frac 1n \right) \left(\ind_{\bar \sigma_n \geq 0} \mspace{2 mu} \bar \sigma_n +  \ind_{\bar \sigma_n \ngeq 0 } \mspace{2 mu} \Pi_{\cS_d} (\bar \sigma_n)\right), \label{eq:tomogestsigma}
\end{align}   
\end{subequations}
respectively, where 
\begin{align}
  \bar \rho_n &:=  \frac {1}{d} \left(I+ \sum_{j=1}^{d^2-1} \hat s_{j}^{(n)}(\rho) \gamma_j\right), \notag \\
  \hat s^{(n)}_{j}(\rho) &:= \frac 1n \sum_{k=1}^n \ind_{O_{k}(j,\rho) =+1}-\ind_{O_{k}(j,\rho) =-1 },  \quad j \neq 0, \notag \\
  \hat{\mathbf{s}}^{(n)}(\rho)&:=\big(\hat s^{(n)}_1(\rho),\ldots, \hat s^{(n)}_{d^2-1}(\rho)\big), \notag
\end{align}
and $\bar \sigma_n$, $ \hat s^{(n)}_{j}(\sigma)$, and $\hat{\mathbf{s}}^{(n)}(\sigma)$ are defined analogously with $\rho$ replaced by $\sigma$ in the above expressions.  
 It follows from the above discussion that $\hat \rho_n,\hat \sigma_n \in \cS_d$ for all $n \in \NN$. Note that the extra term (negligible asymptotically) $ I/nd$ ensures that $\hat \sigma_n>0$ so that   $\hat \rho_n \ll \hat \sigma_n$  and $\qrel{\hat \rho_n}{\hat \sigma_n}$ is finite. Also, observe that to construct $\hat \rho_n$ and $\hat \sigma_n$, we need $n (d^2-1)$ independent copies, each of $\rho$ and $\sigma$, available for measurement.

Let $N(c,v^2)$ denote the one-dimensional normal distribution with mean $c$ and variance $v^2$. The following result shows that the limit distribution for estimators of quantum relative entropy based on Pauli tomography is Gaussian.
\begin{prop}[Limit distribution for tomographic estimator] \label{prop:limdisttomest}
Let $\rho,\sigma>0$. 
Then
\begin{subequations}
 \begin{align}
   & \sqrt{n} \big(\qrel{\hat \rho_n}{\sigma}-\qrel{\rho}{\sigma}\big) \trightarrow{w}  W_1 \sim N\big(0,v_1^2(\rho,\sigma)\big) , \label{eq:qreltom-onesample-alt} \\
   &  \sqrt{n} \big(\qrel{\hat \rho_n}{\hat \sigma_n}-\qrel{\rho}{\sigma}\big) \trightarrow{w}  W_2 \sim N\big(0,v_2^2(\rho, \sigma)\big), \label{eq:qreltom-twosample-alt}
    \end{align}
    \end{subequations}
    where 
        \begin{align}
    v_1^2(\rho,\sigma)&:=  \sum_{j=1}^{d^2-1}\frac{4s_{j}^+(\rho)s_{j}^-(\rho)}{d^2}\tr{\gamma_j(\log \rho-\log \sigma )}^2, \notag \\
v_2^2(\rho,\sigma)&:=v_1^2(\rho,\sigma)+\sum_{j=1}^{d^2-1} \frac{4s_{j}^+(\sigma)s_{j}^-(\sigma)}{d^2}\tr{\rho D[\log \sigma](\gamma_j) }^2. \notag
    \end{align}
\end{prop}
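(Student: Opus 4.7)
The plan is to reduce the claim to Theorem \ref{Thm:quantrelent-limdist}(i) by first establishing a joint central limit theorem for the underlying Pauli tomographic estimators. Specifically, I would show that $\big(\sqrt{n}(\hat\rho_n - \rho), \sqrt{n}(\hat\sigma_n - \sigma)\big) \trightarrow{w} (L_1, L_2)$, where $L_1$ and $L_2$ are independent centered Gaussian random Hermitian operators of the explicit form $L_1 = d^{-1}\sum_{j=1}^{d^2-1} Z_j(\rho)\gamma_j$ and $L_2 = d^{-1}\sum_{j=1}^{d^2-1} Z_j(\sigma)\gamma_j$, with $\{Z_j(\rho), Z_j(\sigma)\}_{j}$ mutually independent and $Z_j(\tau) \sim N\big(0, 4 s_j^+(\tau) s_j^-(\tau)\big)$.

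The first step is the observation that each $\hat s_j^{(n)}(\rho)$ is an empirical mean of $n$ i.i.d.\ $\{\pm 1\}$-valued random variables with mean $s_j(\rho)$ and variance $1 - s_j(\rho)^2 = 4 s_j^+(\rho) s_j^-(\rho)$, so the univariate CLT gives $\sqrt{n}\big(\hat s_j^{(n)}(\rho) - s_j(\rho)\big) \trightarrow{w} Z_j(\rho)$. Independence across $j$ and across the two states follows because measurements for different Pauli operators and for $\rho$ versus $\sigma$ are performed on disjoint sets of independent copies. Since the expansion \eqref{eq:exppaulibasis} is a linear (hence continuous) map from the coefficient vector to $\cH_d$, the continuous mapping theorem transfers the joint CLT to $\sqrt{n}(\bar\rho_n - \rho)$ and $\sqrt{n}(\bar\sigma_n - \sigma)$. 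To move from $(\bar\rho_n,\bar\sigma_n)$ to $(\hat\rho_n,\hat\sigma_n)$, I would argue that since $\rho, \sigma > 0$, the events $\{\bar\rho_n \ngeq 0\}$ and $\{\bar\sigma_n \ngeq 0\}$ have vanishing probability (the eigenvalues of $\bar\rho_n$ and $\bar\sigma_n$ converge to those of $\rho$ and $\sigma$, which are bounded below), so the projection step is asymptotically inactive; the $I/(nd)$ regularizer and $(1-1/n)$ rescaling in \eqref{eq:tomogestsigma} each contribute $o_P(n^{-1/2})$, and Slutsky's lemma then delivers the same weak limit for $\sqrt{n}(\hat\sigma_n - \sigma)$.

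With joint convergence in hand, the support inclusions required by Theorem \ref{Thm:quantrelent-limdist} are immediate from $\rho, \sigma > 0$ together with the strict positivity of $\hat\sigma_n$ guaranteed by the $I/(nd)$ term. Applying part (i) of that theorem, with $L_2 = 0$ in the one-sample setting, yields
\begin{align}
\sqrt{n}\big(\qrel{\hat\rho_n}{\sigma} - \qrel{\rho}{\sigma}\big) \trightarrow{w} \tr{L_1(\log \rho - \log \sigma)} = \frac{1}{d}\sum_{j=1}^{d^2-1} Z_j(\rho) \tr{\gamma_j(\log\rho - \log\sigma)}, \notag
\end{align}
which is a centered Gaussian of variance $v_1^2(\rho,\sigma)$ by independence of the $Z_j(\rho)$'s. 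For the two-sample case, the theorem gives the limit $\tr{L_1(\log\rho-\log\sigma) - \rho D[\log\sigma](L_2)}$, which, by linearity of $D[\log\sigma]$ and the trace plus independence of $L_1$ and $L_2$, is a centered Gaussian whose variance decouples as $v_1^2(\rho,\sigma) + d^{-2}\sum_{j} 4 s_j^+(\sigma) s_j^-(\sigma) \tr{\rho D[\log\sigma](\gamma_j)}^2 = v_2^2(\rho,\sigma)$. The main subtlety I anticipate is controlling the nonlinear PSD-projection step, but this is resolved cleanly by the strict positivity assumption $\rho, \sigma > 0$; the rest is CLT plus linearity of the limiting trace functional.
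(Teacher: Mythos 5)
Your proposal is correct and follows essentially the same route as the paper's proof: a joint CLT for the Pauli coefficient estimators transferred by linearity to $(\bar\rho_n,\bar\sigma_n)$, an argument that the PSD-projection and the $I/(nd)$ regularizer are asymptotically negligible (the paper makes the "projection is inactive w.h.p." step rigorous via continuity of the roots of the characteristic polynomial plus Hoeffding's inequality), followed by an application of Theorem \ref{Thm:quantrelent-limdist}(i) and the variance computation using independence of the $Z_j$'s. No gaps of substance.
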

The proof of Proposition \ref{prop:limdisttomest} is given in Section \ref{Sec:prop:limdisttomest-proof} and follows by an application of Theorem \ref{Thm:quantrelent-limdist}. The main ingredient of the proof is to show that $\big(\sqrt{n}(\hat \rho_n-\rho),\sqrt{n}(\hat \sigma_n-\sigma)\big) \trightarrow{w} (L_{\rho},L_{\sigma}) $, where  $L_{\rho}:=\sum_{j=1}^{d^2-1}  \gamma_j Z_{j}(\rho)$ and $ Z_{j}(\rho) \sim N\big(0,4s_{j}^+(\rho)s_{j}^-(\rho)/d^2\big) $. The claim then follows from \eqref{eq:qrel-twosample-alt} by  noting that all relevant regularity conditions are satisfied. 
\subsection{Performance Guarantees for  Multi-hypothesis Testing}
 For simplicity of presentation, we will 
assume that $\sigma_i=\sigma$ for all $i \in \cI$ with $\sigma$ known for the test in \eqref{eq:hyptest}. Such a scenario arises, for instance, 
when testing for  the mixedness of an unknown state $\rho$ with respect to the maximally mixed state, $\sigma=\pi_d$. 
Also, for $\tau \in [0,1]$, let 
\begin{align}
    Q^{-1}(\tau)=\inf\left\{z\in \RR:(2 \pi)^{-1/2} \int_{z}^{\infty} e^{-u^2/2}du \leq \tau\right\}, \notag
\end{align}
be the inverse complimentary cumulative distribution function of the standard normal distribution $N(0,1)$. 
The following proposition provides a test statistic for the multi-hypothesis testing problem in \eqref{eq:hyptest} by utilizing the limit distribution for quantum relative entropy and characterizes its error probabilities. 

\begin{prop}[Performance of multi-hypothesis testing]\label{prop:HTperf} 
 Let $\tau \in (0,1]$, and $\rho_i,\sigma>0$ for $i\in\cI$ 
 satisfy the hypotheses in \eqref{eq:hyptest} for $\sigma_i=\sigma$ therein. Let $\hat D_n=\qrel{\hat \rho_n}{\sigma}$ with $\hat \rho_n$ given in \eqref{eq:tomogestrho}. Then, 
 the test statistic 
 \begin{align}
   T_n=\sum_{i \in \cI} i\ind_{\hat D_n \in \cL_{i,n}(c)} \mbox{ with } \cL_{i,n}(c):=(\epsilon_i+c n^{-1/2}, \epsilon_{i+1}+cn^{-1/2}),  \notag
 \end{align}
   asymptotically achieves a level $\tau$ provided 
   \begin{align}
        c \geq 2d \, Q^{-1}(\tau)|\log b|, \notag
   \end{align}
where $b$ denotes the minimum of the 
eigenvalues of $\rho_i$ and $\sigma$ over all $i\in\cI$.
\end{prop}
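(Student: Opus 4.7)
The plan is to reduce each error probability under hypothesis $H_i$ to a one-sided tail of $\sqrt{n}(\hat D_n-D_i)$ with $D_i:=\qrel{\rho_i}{\sigma}$, apply Proposition~\ref{prop:limdisttomest} together with the Portmanteau theorem, and then control $v_1(\rho_i,\sigma)$ uniformly in $i$ through a Pauli-expansion estimate. Fix $i\in\cI$; the event $\{T_n\neq i\}$ decomposes as
\begin{align}
\alpha_{i,n}(\cT_n^{\mathrm{tom}},\rho_i^{\otimes n},\sigma^{\otimes n}) = \PP\big(\hat D_n\leq\epsilon_i+cn^{-1/2}\big)+\PP\big(\hat D_n>\epsilon_{i+1}+cn^{-1/2}\big). \notag
\end{align}
Centering both events at $D_i$ and rescaling by $\sqrt{n}$, the left-tail threshold becomes $c+\sqrt{n}(\epsilon_i-D_i)\to -\infty$ under the strict inequality $\epsilon_i<D_i$, so the left probability vanishes by the tightness implied by Proposition~\ref{prop:limdisttomest}. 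The right-tail threshold is $c+\sqrt{n}(\epsilon_{i+1}-D_i)\geq c$ since $D_i\leq\epsilon_{i+1}$; hence the right probability is dominated by $\PP\big(\sqrt{n}(\hat D_n-D_i)>c\big)$, and Proposition~\ref{prop:limdisttomest} combined with the Portmanteau theorem (noting that $W_1$ is absolutely continuous) gives
\begin{align}
\limsup_n \PP\big(\sqrt{n}(\hat D_n-D_i)>c\big)\leq \PP(W_1\geq c)=Q\big(c/v_1(\rho_i,\sigma)\big). \notag
\end{align}

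Next, I would bound $v_1(\rho_i,\sigma)$ uniformly in $i$ by $2d|\log b|$. Writing $A:=\log\rho_i-\log\sigma$ and using $4 s_j^+(\rho_i)s_j^-(\rho_i)\leq 1$ (from $s_j^++s_j^-=1$), Cauchy-Schwarz for the Hilbert-Schmidt inner product with $\|\gamma_j\|_2^2=\tr{\gamma_j^2}=d$ yielding $\tr{\gamma_j A}^2\leq d\|A\|_2^2$, and the crude bound $\|A\|_2^2\leq d\|A\|_\infty^2$ with $\|A\|_\infty\leq \|\log\rho_i\|_\infty+\|\log\sigma\|_\infty\leq 2|\log b|$ (since all eigenvalues of $\rho_i$ and $\sigma$ lie in $[b,1]$), I obtain
\begin{align}
v_1^2(\rho_i,\sigma)\leq \frac{d^2-1}{d}\cdot 4d|\log b|^2\leq 4d^2|\log b|^2. \notag
\end{align}
Combining with the prescribed $c\geq 2d\,Q^{-1}(\tau)|\log b|$ and the monotonicity of $Q$, one has $c/v_1(\rho_i,\sigma)\geq Q^{-1}(\tau)$, so $Q\big(c/v_1(\rho_i,\sigma)\big)\leq\tau$, which together with the vanishing left tail yields $\limsup_n\alpha_{i,n}\leq\tau$ for every $i\in\cI$, as desired.

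The main obstacle is the boundary configuration $D_i=\epsilon_{i+1}$ permitted by the non-strict right inequality in \eqref{eq:hyptest}: were $D_i$ strictly smaller than $\epsilon_{i+1}$ uniformly in $i$, both tails would decay to zero and no $cn^{-1/2}$ offset would be required. It is precisely this endpoint where the right-tail probability does not vanish but tends to $Q(c/v_1)$, which forces calibrating $c$ to $Q^{-1}(\tau)$ and to a worst-case upper bound on $v_1$; all of the proof's quantitative work is located in this variance bound. A minor technical point worth confirming is the applicability of Proposition~\ref{prop:limdisttomest} at each hypothesis, which is immediate since $\rho_i,\sigma>0$ by assumption.
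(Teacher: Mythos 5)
Your proposal is correct and follows essentially the same route as the paper: reduce to the event $\sqrt{n}(\hat D_n-\qrel{\rho_i}{\sigma})\notin(\sqrt{n}(\epsilon_i-\qrel{\rho_i}{\sigma})+c,\,c)$, invoke Proposition \ref{prop:limdisttomest} with the Portmanteau theorem so that only the right tail $Q(c/v_1(\rho_i,\sigma))$ survives, and close with a uniform bound $v_1(\rho_i,\sigma)\le 2d|\log b|$ via Cauchy--Schwarz and $\|\gamma_j\|_2^2= d$. Your variance bound is obtained slightly differently (through $\|\log\rho_i-\log\sigma\|_\infty\le 2|\log b|$ rather than the paper's $(a-b)^2\le 2(a^2+b^2)$ split), but it yields the identical constant, and your explicit treatment of the vanishing left tail is a correct elaboration of what the paper leaves implicit.
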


The proof of Proposition \ref{prop:HTperf} (see  Section \ref{prop:HTperf-proof}) follows by an application of Proposition \ref{prop:limdisttomest} and Portmanteau theorem \cite[Theorem 2.1]{Billingsley-99}. The  threshold $c$ achieving a desired asymptotic level $\tau$ is determined by utilizing the knowledge that $\sqrt{n}\big(\hat D_n-\qrel{\rho_i}{\sigma}) $ converges in distribution to a centered normal under hypothesis $i$, whose   variance $v_1^2(\rho_i,\sigma)$ can be uniformly bounded for $\rho_i,\sigma $ with $i \in \cI$.
\begin{remark}[Growing number of hypotheses]
An inspection of the  proof of Proposition \ref{prop:HTperf} reveals that it continues to hold even when the number of hypotheses scales with $n$, given the new hypotheses boundaries are chosen consistent with the previous ones and are well-separated,  i.e., $\min_{i \in \cI_n}\qrel{\rho_i}{\sigma}-\epsilon_i =\omega(n^{-1/2})$, where $\omega(\cdot)$ denotes the asymptotic little omega notation and $\cI_n$ is the index set of hypotheses that grows with $n$ ($\cI_{n} \subseteq \cI_{n+1}$ for every $n \in \NN$).   
\end{remark}

    \section{Concluding Remarks} \label{Sec:conclusion}
    This paper studied limit distributions for a certain class of estimators of important quantum divergences such as quantum relative entropy,  its R\'{e}nyi generalizations, and measured relative entropy. Taking recourse to an operator version of Taylor's theorem, the limit distributions are characterized in terms of trace functionals of first or second-order Fr\'{e}chet derivatives of elementary functions. These functions simplify in the commuting case and coincide with previously known expressions in the classical case. We  employed the derived results to show that the asymptotic distribution of an estimator of quantum relative entropy based on Pauli tomography of states is normal. We  then utilized this knowledge to propose a test statistic for a multi-hypothesis testing problem and characterized its asymptotic performance.

   Looking forward, 
    several  open questions remain. One pertinent question concerns the rate of convergence of the empirical distribution of the divergence to its limit in the flavor of classical  Berry-Esseen theorem. Any progress in this direction would be extremely useful to understand the non-asymptotic behavior of such estimators.  Also, accounting  for the case of infinite dimensional quantum systems would be a natural extension. In Section \ref{Sec:infdim}, we  treated the case of quantum relative entropy for density operators  on a separable Hilbert space. We believe that our approach can be extended to more general scenarios with appropriate technical modifications to ensure uniform integrability of  terms that appear in a Taylor's expansion. However, this is beyond the scope of the current article.   Of interest further is  to understand the asymptotic and non-asymptotic behavior of other classes of estimators such as those based on variational methods, for which the techniques used here may not be directly applicable.  Lastly, it would also be beneficial to study the statistical behaviour of estimators of other quantum divergences not considered here such as quantum $\chi^2$ divergence \cite{Temme-2010} and geometric R\'{e}nyi divergence \cite{Matsu-2018,Fang2019Geometric}.  

\section*{Acknowledgement}  
SS and MB acknowledges support from the Excellence Cluster - Matter and Light for Quantum Computing (ML4Q).  MB acknowledges funding from the European Research Council (ERC Grant Agreement No. 948139).
\section{Proofs}\label{Sec:Proof}
The following technical lemma will be handy for our purposes. Its proof is  given in Appendix \ref{sec:app-lem:multprojtr-proof}.
\begin{lemma}[Properties of trace-class self-adjoint operators] \label{lem:multprojtr}
Let $\HH$ be a separable Hilbert space. Then, the following hold:
\begin{enumerate}[(i)]
\item Suppose $A$ and $B$ are self-adjoint (Hermitian) operators such that  $AB$ is  trace-class. Let $P$ be an orthogonal projection (i.e., $0 \leq P=P^2$)  satisfying $A \ll P$. Then, $\tr{AB}=\tr{PAPBP}$.  
    \item Let $A,B,C$ be trace-class self-adjoint operators such that $B \leq A \leq C$. Then, $\norm{A}_1 \leq \norm{B}_1+\norm{C}_1$.
\end{enumerate}
\end{lemma}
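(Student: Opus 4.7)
My plan is to exploit the support inclusion $A \ll P$ together with the self-adjointness of $A$ to collapse the conjugation, and then invoke cyclicity of the trace. First, $A \ll P$ gives $PA = A$; combining this with $A^{\ast} = A$ yields $AP = (PA)^{\ast} = A$, so $PAP = A$ and hence $PAPBP = ABP$. The claim therefore reduces to $\tr{ABP} = \tr{AB}$, equivalently $\tr{AB(I-P)} = 0$. Since $AB$ is trace-class by hypothesis and $I-P$ is bounded, the cyclic identity $\tr{AB(I-P)} = \tr{(I-P)AB}$ is legitimate, and the right-hand side vanishes because $(I-P)A = A - PA = 0$.

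\textbf{Part (ii).} Here I would use the Jordan decomposition of $A$ together with its spectral projections. Write $A = A_+ - A_-$ with $A_{\pm} \geq 0$ supported on orthogonal subspaces, so $\|A\|_1 = \tr{A_+} + \tr{A_-}$, and let $P_{\pm}$ be the associated spectral projections of $A$, so that $A_+ = P_+ A P_+$ and $A_- = -P_- A P_-$. Conjugating $A \leq C$ by $P_+$ gives $A_+ \leq P_+ C P_+$, and then monotonicity of the trace together with the H\"{o}lder-type bound $|\tr{C P_+}| \leq \|C\|_1 \|P_+\|_{\infty} = \|C\|_1$ yields $\tr{A_+} \leq \|C\|_1$. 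Symmetrically, $B \leq A$ conjugated by $P_-$ gives $P_- B P_- \leq P_- A P_-$, hence $\tr{A_-} = \tr{-P_- A P_-} \leq \tr{-P_- B P_-} \leq \|B\|_1$ by the analogous bound. Adding the two estimates yields $\|A\|_1 \leq \|B\|_1 + \|C\|_1$.

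\textbf{Main obstacle.} Neither part requires deep machinery, but there are two places where one must be careful in the infinite-dimensional setting. In (i) we are not told that $A$ or $B$ is individually trace-class, only that $AB$ is; the cyclicity step must therefore be invoked in its ``trace-class times bounded'' form and not by naive rearrangement. In (ii) the monotonicity step $X \leq Y \Rightarrow \tr{X} \leq \tr{Y}$ requires both sides to be trace-class self-adjoint; this is guaranteed because all of $A, B, C$ are trace-class by hypothesis and compressions by a projection preserve trace-class membership. Beyond these standard operator-theoretic hygiene points, the argument is essentially a one-line manipulation in each case.
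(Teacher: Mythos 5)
Your proof of part (i) is correct and essentially identical to the paper's: both arguments reduce to $PAP=A$ (which follows from $A \ll P$ together with self-adjointness of $A$) and then invoke cyclicity of the trace in its trace-class-times-bounded form, which is exactly the hygiene point you flag. For part (ii) your argument is correct but takes a genuinely different route. The paper works first in finite dimensions, expanding $A,B,C$ in their eigenbases, bounding each $\abs{\lambda_{j,A}}$ by the maximum of $\sum_i \abs{\lambda_{i,B}}\abs{\langle e_{j,A},e_{i,B}\rangle}^2$ and the analogous sum for $C$, summing over $j$ using orthonormality, and then passes to a separable Hilbert space by compressing with an increasing sequence of finite-rank projections $P_n$ and using $P_nAP_n \to A$ in trace norm. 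You instead argue directly in infinite dimensions: Jordan-decompose $A=A_+-A_-$, compress $A \leq C$ by the positive spectral projection to obtain $\tr{A_+} \leq \tr{CP_+} \leq \norm{C}_1$, and symmetrically obtain $\tr{A_-} \leq \norm{B}_1$ from $B \leq A$ compressed by $P_-$. This avoids both the finite-dimensional reduction and the limiting argument, at the cost of invoking the Jordan decomposition and monotonicity of the order under compression, both of which are standard; the identities $\tr{P_+CP_+}=\tr{CP_+}$ and $\abs{\tr{CP_+}} \leq \norm{C}_1\norm{P_+}_{\infty}$ are the only places where the trace-class hypothesis on $C$ (resp.\ $B$) is needed, and you use it exactly there. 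Both proofs are sound; yours is shorter and arguably more transparent in the infinite-dimensional setting.
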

\noindent We next proceed with the proofs of the main results.
\subsection{Proof of Theorem \ref{Thm:quantrelent-limdist}} \label{Thm:quantrelent-limdist-proof}
For $A_1,A_2 >0$, let 
\begin{align}
  f(A_1,A_2)=A_1(\log A_1-\log A_2). \notag   
\end{align}
 Consider the integral representation 
\begin{align}
   \log A&=\int_{0}^{\infty} \left( \frac{1}{(\tau+1)I}-\frac{1}{\tau I+A}\right) d\tau \notag \\
   &=\int_{0}^{1} \left( \frac{1}{(\tau+1)I}-\frac{1}{\tau I+A}\right) d\tau  +\int_{1}^{\infty} \left( \frac{1}{(\tau+1)I}-\frac{1}{\tau I+A}\right) d\tau, \label{eq:logint}
\end{align}
for $A > 0$. We have $D\big[(\tau I+A)^{-1}\big](H)=-(\tau I+A)^{-1}H(\tau I+A)^{-1}$ by \eqref{eq:derinvop}  
and
\begin{align}
    D[\log A](H)= \int_{0}^{\infty} (\tau I+A)^{-1}H(\tau I+A)^{-1}  d\tau.  \notag
\end{align}
Applying the above, we obtain via the chain rule and product rule for Fr\'{e}chet derivatives that
\begin{align*}
&D^{(1,0)}[f(A_1,A_2)](H)=A_1 \int_{0}^{\infty} \left(\tau I+A_1\right)^{-1} H \left(\tau I+A_1\right)^{-1} d\tau+ H\left(\log A_1-\log A_2\right),  \\
&D^{(0,1)}[f(A_1,A_2)](H)=-A_1 \int_{0}^{\infty} \left(\tau I+A_2\right)^{-1} H \left(\tau I+A_2\right)^{-1} d\tau, \\
&D^{(2,0)}[f(A_1,A_2)](H_1,H_2)= H_1 \int_{0}^{\infty} \left(\tau I+A_1\right)^{-1} H_2 \left(\tau I+A_1\right)^{-1} d\tau  \\
& \qquad \qquad\qquad \qquad \qquad\qquad \quad +H_2 \int_{0}^{\infty} \left(\tau I+A_1\right)^{-1} H_1 \left(\tau I+A_1\right)^{-1} d\tau  \\
& \qquad \qquad\qquad \qquad \qquad\qquad \quad-A_1 \int_{0}^{\infty} \left(\tau I+A_1\right)^{-1} H_1 \left(\tau I+A_1\right)^{-1} H_2  \left(\tau I+A_1\right)^{-1} d\tau  \\
& \qquad \qquad\qquad \qquad \qquad\qquad \quad-A_1 \int_{0}^{\infty} \left(\tau I+A_1\right)^{-1} H_2\left(\tau I+A_1\right)^{-1} H_1  \left(\tau I+A_1\right)^{-1}   d\tau,  \\
&D^{(0,2)}[f(A_1,A_2)](H_1,H_2)= A_1 \int_{0}^{\infty} \left(\tau I+A_2\right)^{-1} H_1 \left(\tau I+A_2\right)^{-1} H_2 \left(\tau I+A_2\right)^{-1}d\tau  \\
& \qquad \qquad\qquad \qquad \qquad\qquad \quad  +A_1 \int_{0}^{\infty} \left(\tau I+A_2\right)^{-1} H_2 \left(\tau I+A_2\right)^{-1} H_1 \left(\tau I+A_2\right)^{-1}d\tau,  \\
&D^{(1,1_+)}[f(A_1,A_2)](H_1,H_2)=- H_1 \int_{0}^{\infty} \left(\tau I+A_2\right)^{-1} H_2 \left(\tau I+A_2\right)^{-1} d\tau,  \\
&D^{(1_+,1)}[f(A_1,A_2)](H_1,H_2)=- H_2 \int_{0}^{\infty} \left(\tau I+A_2\right)^{-1} H_1 \left(\tau I+A_2\right)^{-1} d\tau, 
\end{align*}
    where the notation $D^{(1,1_+)}$ means that the order of differentiation is with respect to first coordinate followed by the second, and vice versa for $D^{(1_+,1)}$.

 Note that $f(A_1,A_2)$ is continuously twice differentiable function from $\cP^+_d \times \cP^+_d$ to $\cL(\HH_d)$. Hence, applying the operator version of multivariate Taylor's theorem (see e.g. \cite{Bhatia-book}), we obtain for $A_1,A_2,$ $B_1,B_2>0$ that
\begin{align}
        f(B_1,B_2)&=f(A_1,A_2)+D^{(1,0)}[f(A_1,A_2)](B_1-A_1)+D^{(0,1)}[f(A_1,A_2)](B_2-A_2)\notag \\
       & \quad  +\int_{0}^1 (1-t) D^{(2,0)}[f((1-t)A_1+tB_1,(1-t)A_2+tB_2)](B_1-A_1,B_1-A_1)\, dt \notag \\
       & \quad  +\int_{0}^1 (1-t) D^{(0,2)}[f((1-t)A_1+tB_1,(1-t)A_2+tB_2)](B_2-A_2,B_2-A_2)\, dt \notag \\
       & \quad  +\int_{0}^1 (1-t) D^{(1,1_+)}[f((1-t)A_1+tB_1,(1-t)A_2+tB_2)](B_1-A_1,B_2-A_2)\, dt \notag \\
       & \quad  +\int_{0}^1 (1-t) D^{(1_+,1)}[f((1-t)A_1+tB_1,(1-t)A_2+tB_2)](B_2-A_2,B_1-A_1)\, dt.\label{eq:taylor-op}
    \end{align}
 \noindent
\textbf{Two-sample null:} Assume first that $\rho_n,\sigma_n,\rho>0$. Setting $B_1=\rho_n$, $B_2=\sigma_n$ and $A_1=A_2=\rho$ in   \eqref{eq:taylor-op}, and defining 
\begin{align}
v(\rho_n,\rho,\tau,t):=\big(\tau I+(1-t)\rho+t\rho_n\big)^{-1}, \notag 
\end{align}
 we have
 \begin{align}
 &    f(\rho_n,\sigma_n) \notag \\
 &=\rho \int_{0}^{\infty} \left(\tau I+\rho\right)^{-1} (\rho_n-\rho) \left(\tau I+\rho\right)^{-1} d\tau -\rho \int_{0}^{\infty} \left(\tau I+\rho\right)^{-1} (\sigma_n-\rho) \left(\tau I+\rho\right)^{-1} d\tau \notag \\
     & \quad +2\int_{0}^1 (1-t)  (\rho_n-\rho) \int_{0}^{\infty} v(\rho_n,\rho,\tau,t) (\rho_n-\rho) v(\rho_n,\rho,\tau,t) d\tau dt  \notag \\
&  \quad  -2\int_{0}^1 (1-t) \left((1-t) \rho+t\rho_n \right) \int_{0}^{\infty} v(\rho_n,\rho,\tau,t) (\rho_n-\rho)  v(\rho_n,\rho,\tau,t) (\rho_n-\rho)  v(\rho_n,\rho,\tau,t) d\tau  dt \notag \\
&  \quad +2\int_{0}^1 (1-t)\big((1-t)\rho+t\rho_n\big) \int_{0}^{\infty} v(\sigma_n,\rho,\tau,t) (\sigma_n-\rho)  v(\sigma_n,\rho,\tau,t) (\sigma_n-\rho) v(\sigma_n,\rho,\tau,t)d\tau dt \notag \\
&\quad-2 \int_{0}^1 (1-t) (\rho_n-\rho) \int_{0}^{\infty} v(\sigma_n,\rho,\tau,t) (\sigma_n-\rho)  v(\sigma_n,\rho,\tau,t) d\tau dt. \label{eq:Tayexp-twosamp-null}
 \end{align}
 To extend the validity of the above equation to $0 \leq \rho_n \ll \sigma_n \geq 0$, we consider $f\big(\tilde \rho_n(\epsilon),\tilde \sigma_n(\epsilon)\big)$ with 
 \begin{align}
   &  \tilde \rho_n(\epsilon):=\rho_n+\epsilon \pi_d, \notag \\
    & \tilde \sigma_n(\epsilon):=\sigma_n+\epsilon \pi_d, \notag 
 \end{align}
 for
 $\epsilon>0$,  and   take limit $\epsilon \rightarrow 0$. Then, the desired expression  follows since $f\big(\tilde \rho_n(\epsilon),\tilde \sigma_n(\epsilon)\big)$ is continuous in $\epsilon$ for $\rho_n \ll \sigma_n$, and  uniform integrability conditions which allows for interchange of limits and integral hold for $0 <\epsilon \leq 1$. We illustrate the latter condition for some of the terms above. 
By using H\"{o}lder's inequality for Schatten-norms  and $\norm{A}_p \geq \norm{A}_q$ for a linear operator $A$ and $1 \leq p \leq q \leq \infty$, we have 
 \begin{align}
 \norm{\left(\tau I+\rho\right)^{-1} (\tilde \rho_n(\epsilon)-\rho) \left(\tau I+\rho\right)^{-1} }_1 &\leq \norm{\left(\tau I+\rho\right)^{-1}}_{\infty} \norm{(\tilde \rho_n(\epsilon)-\rho)\left(\tau I+\rho\right)^{-1}}_1 \notag \\
 & \leq \norm{\left(\tau I+\rho\right)^{-1}}_{\infty}^2 \norm{(\tilde \rho_n(\epsilon)-\rho)}_1. \label{eq:applyHolderineq}
 \end{align}  
Hence, for all $0 <\epsilon \leq 1 $,
\begin{align}
  &\int_{0}^{\infty} \norm{\left(\tau I+\rho\right)^{-1} (\tilde \rho_n(\epsilon)-\rho) \left(\tau I+\rho\right)^{-1} }_1 d\tau \leq \norm{\tilde \rho_n(\epsilon)-\rho}_1 \int_{0}^{\infty} \norm{\left(\tau I+\rho\right)^{-1}}_{\infty}^2  d\tau \notag \\
    & \qquad \qquad\qquad \qquad\qquad \qquad\qquad \qquad\qquad ~\quad \lesssim_{\rho}  \norm{\rho_n-\rho}_1+1, \notag     
\end{align}
where the final inequality follows because of the finiteness of the integral on account of $\rho >0$. 
Then, using \eqref{eq:Bochnerintegcond} and  \eqref{eq:Bochnerintegcor}, we obtain  
\begin{align}
  & \norm{\int_{0}^{\infty} \left(\tau I+\rho\right)^{-1} (\tilde \rho_n(\epsilon)-\rho) \left(\tau I+\rho\right)^{-1} }_1 d\tau \leq \norm{\tilde \rho_n(\epsilon)-\rho}_1 \int_{0}^{\infty} \norm{\left(\tau I+\rho\right)^{-1}d\tau}_{\infty}^2   \notag \\
    & \qquad \qquad\qquad \qquad\qquad \qquad\qquad \qquad\qquad ~\quad \lesssim_{\rho}  \norm{\rho_n-\rho}_1+1. \notag    
\end{align}
Thus, the LHS is uniformly integrable independent of $\epsilon$.

Similarly, 
\begin{subequations}
    \begin{align}
   &\norm{(1-t) (\tilde \rho_n(\epsilon)-\rho) \int_{0}^{\infty} v\big(\tilde \rho_n(\epsilon),\rho,\tau,t\big) (\tilde \rho_n(\epsilon) -\rho) v\big(\tilde \rho_n(\epsilon),\rho,\tau,t\big) d\tau}_1 \notag \\
& \qquad \leq (1-t)\norm{  \tilde \rho_n(\epsilon) -\rho}_1^2 \int_{0}^{\infty} \norm{(\tau I+(1-t)\rho)^{-1}}_{\infty}^2 d\tau \\
&\qquad \lesssim_{\rho}(1-t)\norm{ \tilde \rho_n(\epsilon) -\rho}_1^2 (1-t)^{-1}
\notag \\
&\qquad \lesssim  \norm{ \rho_n-\rho}_1^2+1, \label{eq:one-samp-null-extpsd2} \\
&\big\|(1-t) \big((1-t) \rho+t\tilde \rho_n(\epsilon) \big) \int_{0}^{\infty} v\big(\tilde \rho_n(\epsilon),\rho,\tau,t\big) (\tilde \rho_n(\epsilon)-\rho) v\big(\tilde \rho_n(\epsilon),\rho,\tau,t\big) \notag \\
 & \qquad \qquad  \qquad \qquad  \qquad \qquad \qquad \qquad \qquad \qquad \qquad  (\tilde \rho_n(\epsilon)-\rho)v\big(\tilde \rho_n(\epsilon),\rho,\tau,t\big) d\tau\big\|_1 \notag \\
 & \qquad \leq  (1-t)\norm{\tilde \rho_n(\epsilon) -\rho }_1^2 \int_{0}^{\infty} \norm{(\tau I+(1-t)\rho)^{-1}}_{\infty}^2 d\tau \notag \\
 &\qquad 
 \lesssim_{\rho} \norm{ \rho_n-\rho}_1^2+1. \label{eq:one-samp-null-extpsd3}
\end{align}
\end{subequations}
The uniform integrability for the remaining terms also follows  via analogous steps.

 Let $g_n:=   r_n^2 \qrel{\rho_n}{\sigma_n}$. 
 Multiplying  by $r_n^2$ and taking trace in \eqref{eq:Tayexp-twosamp-null}, we obtain
\begin{align}
g_n& =  2\int_{0}^1 \mspace{-2 mu} (1-t)  \operatorname{Tr} \bigg[r_n(\rho_n-\rho) \int_{0}^{\infty} \mspace{-4 mu}v(\rho_n,\rho,\tau,t) r_n(\rho_n-\rho) v(\rho_n,\rho,\tau,t)  d\tau\bigg] dt \notag \\
& \quad    -2\int_{0}^1 (1-t) \operatorname{Tr} \bigg[\left((1-t) \rho+t\rho_n \right) \int_{0}^{\infty} v(\rho_n,\rho,\tau,t) r_n(\rho_n-\rho) \notag \\
&\qquad  \qquad \qquad \qquad \qquad  \qquad \qquad \qquad \qquad v(\rho_n,\rho,\tau,t) r_n(\rho_n-\rho)  v(\rho_n,\rho,\tau,t) d\tau \bigg]  dt \notag \\
&  \quad +2\int_{0}^1 (1-t)\operatorname{Tr} \bigg[\big((1-t)\rho+t\rho_n\big) \int_{0}^{\infty} v(\sigma_n,\rho,\tau,t) r_n(\sigma_n-\rho) \notag \\
&\qquad  \qquad \qquad \qquad \qquad  \qquad \qquad \qquad \qquad   v(\sigma_n,\rho,\tau,t) r_n(\sigma_n-\rho)  v(\sigma_n,\rho,\tau,t)d\tau \bigg]dt \notag \\
& \quad -2 \int_{0}^1 (1-t)\operatorname{Tr} \bigg[r_n(\rho_n-\rho)\int_{0}^{\infty} v(\sigma_n,\rho,\tau,t) r_n(\sigma_n-\rho)  v(\sigma_n,\rho,\tau,t) d\tau \bigg]dt, \label{eq:limittwosampnull}
\end{align}
where we used that the first two terms above vanish. To see this for the first term,  note that
\begin{align}
   &\int_{0}^{\infty} \norm{\rho\left(\tau I+\rho\right)^{-1} (\rho_n-\rho) \left(\tau I+\rho\right)^{-1}}_1 d\tau \notag \\
   &=\int_{0}^{1} \norm{\rho\left(\tau I+\rho\right)^{-1} (\rho_n-\rho) \left(\tau I+\rho\right)^{-1}}_1 d\tau +\int_{1}^{\infty} \norm{\rho\left(\tau I+\rho\right)^{-1} (\rho_n-\rho) \left(\tau I+\rho\right)^{-1}}_1 d\tau\notag \\
 & \leq  \norm{ (\rho_n-\rho) }_1 \norm{\rho^{-1}}_1+\norm{\rho (\rho_n-\rho) }_1\int_{1}^{\infty} \tau^{-2} d\tau\notag \\
 & \lesssim  \norm{ \rho^{-1}}_{\infty} +\int_{1}^{\infty} \tau^{-2} d\tau\notag \\
   &<\infty. \label{eq:trnormfin1}
\end{align}
Similarly, 
\begin{align}
   \int_{0}^{\infty} \norm{\rho(\rho_n-\rho)\left(\tau I+\rho\right)^{-2}}_1 d\tau <\infty. \label{eq:trnormfin2}
\end{align}
Hence, $\rho\left(\tau I+\rho\right)^{-1} (\rho_n-\rho) \left(\tau I+\rho\right)^{-1}$ and $\rho(\rho_n-\rho)\left(\tau I+\rho\right)^{-2}$ are  integrable functions (with respect to Lebesgue measure on $(0,\infty))$. Then, we have
\begin{align}
\tr{ \int_{0}^{\infty} \rho\left(\tau I+\rho\right)^{-1} (\rho_n-\rho) \left(\tau I+\rho\right)^{-1} d\tau } &\stackrel{(a)}{=}\int_{0}^{\infty}\tr{  \rho\left(\tau I+\rho\right)^{-1} (\rho_n-\rho) \left(\tau I+\rho\right)^{-1}  } d\tau\notag \\
&\stackrel{(b)}{=}\int_{0}^{\infty} \tr{   \rho(\rho_n-\rho)\left(\tau I+\rho\right)^{-2}  } d\tau\notag \\
&\stackrel{(c)}{=} \tr{   \rho(\rho_n-\rho)\int_{0}^{\infty}\left(\tau I+\rho\right)^{-2} d\tau } \notag \\
&=\tr{   \rho(\rho_n-\rho) \rho^{-1}} \notag \\
&=\tr{  \rho_n-\rho }\stackrel{(d)}{=}0, \label{eq:firstordtmzero}
\end{align}
where
    \begin{enumerate}[(a)]
        \item follows from \eqref{eq:trnormfin1} by the fact that  $\tr{\cdot}$ is continuous linear (hence bounded) functional on the normed space $\cL(\HH_d)$ (with $\norm{\cdot}_1$ norm); 
        \item uses that $[\rho,\left(\tau I+\rho\right)^{-1}]=0$ and the cylic property of $\tr{\cdot}$;
        \item follows via the same argument as in $(a)$ using   \eqref{eq:trnormfin2};
        \item is because $\tr{\rho_n}=\tr{\rho}=1$ on account of $\rho_n$ and $\rho$ being density operators. 
    \end{enumerate}
Likewise, it can be shown that 
\begin{align}
    \tr{ \int_{0}^{\infty} \rho\left(\tau I+\rho\right)^{-1} (\sigma_n-\rho) \left(\tau I+\rho\right)^{-1} d\tau }=0. \label{eq:tracevanishqrel}
\end{align}

We next analyze the limit of the expression in \eqref{eq:limittwosampnull}. To prove the  weak convergence of $g_n$ to its desired limit, it suffices to show that for every subsequence of $(g_n)_{n \in \NN}$, there exists a further subsequence along which the sequence converges to a  unique weak limit (see e.g., \cite[Theorem 2.6]{Billingsley-99}).  We refer to this as the \textit{subsequence argument}.
Let  
\begin{subequations}\label{eq:unifbochinttermqrel}
\begin{align}
   &p_n(r_n,t):=(1-t) r_n(\rho_n-\rho) \int_{0}^{\infty} v(\rho_n,\rho,\tau,t) r_n(\rho_n-\rho) v(\rho_n,\rho,\tau,t) d\tau,  \\
 &  q_n(r_n,t):=(1-t) \big((1-t) \rho+t\rho_n \big) \int_{0}^{\infty} v(\rho_n,\rho,\tau,t) r_n (\rho_n-\rho) v(\rho_n,\rho,\tau,t) r_n(\rho_n-\rho)v(\rho_n,\rho,\tau,t) d\tau,  \\
  &  \tilde p_n(r_n,t):=(1-t)\big((1-t)\rho+t\rho_n\big) \int_{0}^{\infty} \mspace{-5mu}v(\sigma_n,\rho,\tau,t) r_n(\sigma_n-\rho) v(\sigma_n,\rho,\tau,t) r_n(\sigma_n-\rho)  v(\sigma_n,\rho,\tau,t) d\tau, \\
    &\tilde q_n(r_n,t):=(1-t)r_n(\rho_n-\rho)  \int_{0}^{\infty} v(\sigma_n,\rho,\tau,t) r_n(\sigma_n-\rho)  v(\sigma_n,\rho,\tau,t) d\tau. 
\end{align} 
\end{subequations}
Then, the following bounds hold by using H\"{o}lder's inequality for Schatten norms:
\begin{subequations}\label{eq:boundsunifint}
\begin{align}
   &\norm{ p_n(r_n,t)}_1 
      \leq \norm{  r_n(\rho_n -\rho)}_1^2 \int_{0}^{\infty} \norm{v(\rho_n,\rho,\tau,t)}_{\infty}^2 d\tau \label{eq:boundsunifinteq1}, \\
     & \norm{q_n(r_n,t)}_1 
\leq \norm{  r_n(\rho_n -\rho )}_1^2 \int_{0}^{\infty} \norm{v(\rho_n,\rho,\tau,t)}_{\infty}^3 d\tau,  \label{eq:boundsunifinteq2}\\
&\norm{\tilde p_n(r_n,t)}_1 \leq   \norm{r_n(\sigma_n-\rho)}_1^2 \int_{0}^{\infty} \norm{v(\sigma_n,\rho,\tau,t)}_{\infty}^3 d\tau,   \\
&\norm{\tilde q_n(r_n,t)}_1  \leq\norm{r_n(\rho_n-\rho)}_1 \norm{r_n(\sigma_n-\rho)}_1 \int_{0}^{\infty} \norm{v(\sigma_n,\rho,\tau,t)}_{\infty}^2 d\tau.  
\end{align}
\end{subequations}

To show the aforementioned claim of unique weak limit, consider any subsequence $(n_k)_{k \in \NN}$. Then, $\big((r_{n_{k}}(\rho_{n_{k}}-\rho),r_{n_{k}}(\sigma_{n_{k}}-\rho)\big) \trightarrow{w} (L_1,L_2)$ in $\norm{\cdot}_1$ since every subsequence of weakly convergent sequence has the same weak limit. Hence, due to  separability of $\cL(\HH_d)$ (for finite $d$), by Skorokhods representation theorem (see e.g. \cite{AVDV-book}), there exists a further subsequence  $(n_{k_j})_{j \in \NN}$ such that 
$\big((r_{n_{k_j}}(\rho_{n_{k_j}}-\rho),r_{n_{k_j}}(\sigma_{n_{k_j}}-\rho)\big) \rightarrow (L_1,L_2)$ in $\norm{\cdot}_1$ almost surely (a.s.).     
Then, since $\sigma_{n_{k_j}} \rightarrow \rho$,  we have  that $(1-t)\rho+t\sigma_{n_{k_j}} \geq c \rho$  for a constant $0<c<1$ (which depends on the realization $\sigma_{n_{k_j}}$) and  sufficiently large $j$. This implies that the integrals in \eqref{eq:boundsunifint} are finite. For instance, the integral in \eqref{eq:boundsunifinteq1} is $O(1)$ as $\tau$ approaches zero   and $O(\tau^{-2})$ as $\tau$ tends to $\infty$. Hence,   we obtain
\begin{subequations} \label{eq:boundsunifint2}
\begin{align}
 &\norm{ p_{n_{k_j}}(r_{n_{k_j}},t)}_1 
      \lesssim_{\rho} \norm{  r_{n_{k_j}}(\rho_{n_{k_j}} -\rho)}_1^2 \label{eq:boundsunifint2-1} , \\
     & \norm{q_{n_{k_j}}(r_{n_{k_j}},t)}_1 
\lesssim_{\rho} \norm{  r_{n_{k_j}}(\rho_{n_{k_j}} -\rho )}_1^2, \label{eq:boundsunifint2-2}\\
&\norm{\tilde p_{n_{k_j}}(r_{n_{k_j}},t)}_1 \lesssim_{\rho}   \norm{r_{n_{k_j}}(\sigma_{n_{k_j}}-\rho)}_1^2 , \label{eq:boundsunifint2-3}  \\
&\norm{\tilde q_{n_{k_j}}(r_{n_{k_j}},t)}_1  \lesssim_{\rho} \norm{r_{n_{k_j}}(\rho_{n_{k_j}}-\rho)}_1 \norm{r_{n_{k_j}}(\sigma_{n_{k_j}}-\rho)}_1. \label{eq:boundsunifint2-4} 
\end{align}
\end{subequations}
Next, recall that $\big((r_{n_{k_j}}(\rho_{n_{k_j}}-\rho),r_{n_{k_j}}(\sigma_{n_{k_j}}-\rho)\big) \rightarrow (L_1,L_2)$ (a.s.) in the space of linear operators with bounded trace norm implies that $\big(r_{n_{k_j}}(\rho_{n_{k_j}}-\rho)\big)_{j \in \NN}$ and $\big(r_{n_{k_j}}(\sigma_{n_{k_j}}-\rho)\big)_{j \in \NN}$ are uniformly integrable sequences  (a.s.). This combined with \eqref{eq:boundsunifint} and \eqref{eq:boundsunifint2} 
then  shows that $\big(p_{n_{k_j}}(r_{n_{k_j}},t)\big)_{j \in \NN}$, $\big(q_{n_{k_j}}(r_{n_{k_j}},t)\big)_{j \in \NN}$, $(\tilde p_{n_{k_j}}(r_{n_{k_j}},t))_{j \in \NN}$ and $(\tilde q_{n_{k_j}}(r_{n_{k_j}},t))_{j \in \NN}$ are uniformly integrable sequences. 
Taking limits $j \rightarrow \infty $, interchanging limits and integral in \eqref{eq:limittwosampnull}, and noting that $\rho_{n_{k_j}} \rightarrow \rho$, $\sigma_{n_{k_j}} \rightarrow \rho$ a.s. and $\norm{\cdot}_1$,  yields
 \begin{align}
    g_{n_{k_j}} &\rightarrow    \operatorname{Tr} \bigg[L_1 \int_{0}^{\infty} \left(\tau I+\rho\right)^{-1} L_1\left(\tau I+\rho\right)^{-1} d\tau-\rho \int_{0}^{\infty} \left(\tau I+\rho\right)^{-1} L_1  \left(\tau I+\rho\right)^{-1} L_1 \left(\tau I+\rho\right)^{-1} d\tau \bigg]  \notag \\
    &   \qquad+\operatorname{Tr} \bigg[\rho \int_{0}^{\infty} \left(\tau I+\rho\right)^{-1} L_2  \left(\tau I+\rho\right)^{-1} L_2 \left(\tau I+\rho\right)^{-1}d\tau-L_1 \int_{0}^{\infty} \left(\tau I+\rho\right)^{-1} L_2 \left(\tau I+\rho\right)^{-1} d\tau \bigg] \notag \\
&\quad=\tr{L_1 D[\log \rho](L_1-L_2)}  +\tr{\frac{\rho}{2}D^2[\log \rho](L_1-L_2,L_1-L_2)}. \notag
 \end{align}
     Hence, every  subsequence $(g_{n_k})_{k \in \NN}$ has a further subsequence $(g_{n_{k_j}})_{j \in \NN}$ with the same unique limit which implies \eqref{eq:qrel-twosample-null}.

\medskip

    If  $[\rho_n,\rho]=0$, then $[r_n(\rho_n-\rho), \rho]=0$. Consider a subsequence  $r_{n_j}(\rho_{n_j}-\rho) \rightarrow L_1$ a.s. Since the commutator is a continuous linear functional of its individual arguments, we have 
    \begin{align}
        [L_1, \rho]=\lim_{j \rightarrow \infty}[r_{n_j}(\rho_{n_j}-\rho), \rho]=0, \mbox{ a.s.} \label{eq:prfcommutlim}
    \end{align}
 Hence, $L_1$ and $\rho$ commutes. The proof of  $[L_1,L_2]=[L_2,\rho]=0$ under the conditions $[\rho_n,\rho]=[\sigma_n,\rho]=[\sigma_n,\rho_n]=0$ follow similarly. Under this scenario, the expression in the RHS of \eqref{eq:qrel-twosample-null} simplifies to 
\begin{align}
&\operatorname{Tr} \bigg[L_1^2 \int_{0}^{\infty} \left(\tau I+\rho\right)^{-2}  d\tau\bigg]   - \operatorname{Tr} \bigg[L_1^2\rho \int_{0}^{\infty} \left(\tau I+\rho\right)^{-3}  d\tau \bigg] +\operatorname{Tr} \bigg[L_2^2\rho \int_{0}^{\infty} \left(\tau I+\rho\right)^{-3} d\tau \bigg]  \notag \\
& \quad  -\operatorname{Tr} \bigg[L_1 L_2\int_{0}^{\infty} \left(\tau I+\rho\right)^{-2}  d\tau \bigg] \notag \\
&=\frac 12 \operatorname{Tr} \big[L_1^2 \rho^{-1}\big] +\frac 12 \operatorname{Tr} \big[L_2^2 \rho^{-1}\big]-\operatorname{Tr} \big[L_1L_2 \rho^{-1}\big] \notag \\
&=\frac 12 \operatorname{Tr} \big[(L_1-L_2)^2 \rho^{-1}\big]. \notag 
\end{align}

Finally, consider the case $ \rho \geq 0$. Note that  the left-hand side (LHS) and RHS of \eqref{eq:qrel-twosample-null} is invariant to restricting the space to the support of $\rho$. To see this,   let $P_{\rho}=\sum_{i=1}^r \ket{e_i}\bra{e_i}$ be the projector onto the eigenspace pertaining to the non-zero eigenvalues of $\rho$, where $r$ is the rank of $\rho$ and $(\ket{e_i})_{i=1}^r$ are the corresponding orthonormal eigenvectors. 
Setting $\tilde \rho=P_{\rho} \rho P_{\rho}$, $\tilde \rho_n=P_{\rho} \rho_n P_{\rho}$, $\tilde \sigma_n=P_{\rho} \sigma_n P_{\rho}$,   
 and noting that $\rho_n \ll \sigma_n \ll \rho \ll \gg P_{\rho}$, it follows from Lemma \ref{lem:multprojtr}$(i)$ that 
\begin{align}
    \qrel{\rho_n}{\sigma_n}&=\tr{\rho_n\log \rho_n}-\tr{\rho_n \log \sigma_n} \notag \\
    &=\tr{ P_{\rho}\rho_n P_{\rho}\log  \rho_n P_{\rho}}-\tr{ P_{\rho}\rho_n P_{\rho} \log  \sigma_n P_{\rho}} \notag \\
     &=\tr{\tilde \rho_n\log \tilde \rho_n}-\tr{\tilde \rho_n  \log \tilde \sigma_n }. \notag
\end{align}
Note that $\tilde \rho >0$ and $\tilde \rho_n, \tilde \sigma_n \geq 0$ are density operators.

Next, to see that the RHS of \eqref{eq:qrel-twosample-null} is invariant to restricting to support of $\rho$, we first note that the  support of $L_1$ and $L_2$ is contained in that of $\rho$. To show this, notice that for every $n \in \NN$,  $\rho_n-\rho \ll \rho$ and $\sigma_n-\rho \ll \rho$  because $\rho_n,\sigma_n \ll \rho$ by assumption. By Portmanteau's theorem \cite[Theorem 1.3.4 $(vii)$]{AVDV-book}, since $r_n(\rho_n-\rho) \trightarrow{w} L_1$, we have 
\begin{align}
    \liminf \EE[f\big(r_n(\rho_n-\rho)\big)] \geq f(L_1), \label{eq:portmantthm}
\end{align}
for every bounded Lipschitz continuous (w.r.t. to trace norm) non-negative $f$. Let $P_{\rho}^{\perp}$ denote the projector onto the orthogonal complement of the support of $\rho$.  Applying \eqref{eq:portmantthm} to the bounded Lipschitz continuous function $f_{M}(L)=\norm{P_{\rho}^{\perp} LP_{\rho}^{\perp}}_1 \wedge M$ on the space of trace-class operators, where $M >0$, we obtain
\begin{align}
0= \liminf r_n\EE\left[\norm{P_{\rho}^{\perp}(\rho_n-\rho)P_{\rho}^{\perp}}_1\right] \geq r_n \norm{P_{\rho}^{\perp}L_1P_{\rho}^{\perp}}_1 \wedge M \geq 0. \notag 
\end{align}
Since $r_n $ is positive and the above equation has to hold for every $M$, taking limit $M \rightarrow \infty$ implies that  $\norm{P_{\rho}^{\perp}L_1 P_{\rho}^{\perp}}_1=0$. Hence, the support of $L_1$ is contained in that of $\rho$. Similar claim also holds for $L_2$. 
 Thus, the RHS is also invariant to replacing all operators by their sandwiched versions obtained by left and right multiplying with $P_{\rho}$. Finally, observe that  
 $\big(r_n(\rho_n-\rho),r_n(\sigma_n-\rho)\big) \trightarrow{w} (L_1,L_2)$ implies $\big(r_n(\tilde \rho_n-\tilde \rho),r_n(\tilde \sigma_n-\tilde \rho)\big) \trightarrow{w} (P_{\rho} L_1 P_{\rho},P_{\rho} L_2 P_{\rho})$ by an application of Slutsky's theorem \cite{AVDV-book}. Hence, the previous proof applies  and the claim follows. 
\medskip

\noindent
\textbf{Two-sample alternative:} Assume first that $\rho_n,\sigma_n,\rho, \sigma >0$. Setting $B_1=\rho_n$, $B_2=\sigma_n$, $A_1=\rho$ and $A_2=\sigma \neq \rho$ in   \eqref{eq:taylor-op}, we have
 \begin{align}
&f(\rho_n,\sigma_n)\notag \\
&=f(\rho,\sigma)+\rho \int_{0}^{\infty} \left(\tau I+\rho\right)^{-1} (\rho_n-\rho) \left(\tau I+\rho\right)^{-1} d\tau  -\rho \int_{0}^{\infty} \mspace{-8 mu} \left(\tau I+\sigma\right)^{-1} \mspace{-4 mu}(\sigma_n-\sigma) \mspace{-3 mu}\left(\tau I+\sigma\right)^{-1} d\tau  \notag \\
     & \quad  + (\rho_n-\rho)\left(\log \rho-\log \sigma\right) + 2\int_{0}^1  (1-t) \mspace{-1 mu} (\rho_n-\rho) \int_{0}^{\infty}  v(\rho_n,\rho,\tau,t)  (\rho_n-\rho) v(\rho_n,\rho,\tau,t) d\tau dt   \notag \\
& \quad -2\int_{0}^1 (1-t) \left((1-t) \rho+t\rho_n \right)\int_{0}^{\infty} v(\rho_n,\rho,\tau,t) (\rho_n-\rho) v(\rho_n,\rho,\tau,t) (\rho_n-\rho) v(\rho_n,\rho,\tau,t) d\tau  dt  \notag \\
& \quad  + \mspace{-2 mu} 2\int_{0}^1 \mspace{-2 mu} (1-t)\big((1-t)\rho+t\rho_n\big) \mspace{-2 mu}\int_{0}^{\infty}\mspace{-4 mu}v(\sigma_n,\sigma,\tau,t)   (\sigma_n-\sigma) v(\sigma_n,\sigma,\tau,t)  (\sigma_n-\sigma)  v(\sigma_n,\sigma,\tau,t) d\tau dt\notag \\
&\quad-2 \int_{0}^1  (1-t)(\rho_n-\rho) \int_{0}^{\infty} v(\sigma_n,\sigma,\tau,t)  (\sigma_n-\sigma)  v(\sigma_n,\sigma,\tau,t)  d\tau dt. \label{eq:taylexpqrelalt}
 \end{align}
 This equation extends to $0 \leq \rho_n \ll \sigma_n \geq 0$ via similar arguments in Part $(i)$.  Multiplying by $r_n$ and taking trace, we obtain
 \begin{align}
g_n&:=r_n\big(\qrel{\rho_n}{\sigma_n}-\qrel{\rho}{\sigma}\big) \notag \\
 &=\tr{r_n(\rho_n-\rho)\left(\log \rho-\log \sigma\right)-\rho \mspace{-2 mu}\int_{0}^{\infty} \mspace{-6 mu}\left(\tau I+\sigma\right)^{-1} r_n(\sigma_n-\sigma) \left(\tau I+\sigma\right)^{-1} d\tau   }\mspace{-4 mu}\notag \\
 &\quad +\mspace{-4 mu}\operatorname{Tr}\bigg[2\int_{0}^1 (1-t)   r_n^{\frac 12}(\rho_n-\rho) \int_{0}^{\infty} v(\rho_n,\rho,\tau,t)  r_n^{\frac 12}(\rho_n-\rho) v(\rho_n,\rho,\tau,t) d\tau \bigg] dt  \notag \\
     & \quad -2\int_{0}^1(1-t)\operatorname{Tr}\bigg[ \left((1-t) \rho+t\rho_n \right) \int_{0}^{\infty}v(\rho_n,\rho,\tau,t) r_n^{\frac 12}(\rho_n-\rho) v(\rho_n,\rho,\tau,t)  r_n^{\frac 12}(\rho_n-\rho)\notag \\
     & \qquad \qquad \qquad \qquad \qquad \qquad \qquad \qquad \qquad \qquad \qquad \qquad \qquad \qquad \qquad \qquad v(\rho_n,\rho,\tau,t) d\tau \bigg]  dt \notag \\  
& \quad~~   +2\int_{0}^1 (1-t)\operatorname{Tr}\bigg[\big((1-t)\rho+t\rho_n\big)  \int_{0}^{\infty} v(\sigma_n,\sigma,\tau,t) r_n^{\frac 12}(\sigma_n-\sigma) v(\sigma_n,\sigma,\tau,t) r_n^{\frac 12}(\sigma_n-\sigma)  \notag \\
&\qquad \qquad \qquad \qquad \qquad \qquad \qquad \qquad \qquad \qquad \qquad \qquad \qquad \qquad \qquad \qquad v(\sigma_n,\sigma,\tau,t)d\tau \bigg] dt \notag \\
&\quad~~-2 \int_{0}^1 (1-t)\operatorname{Tr}\bigg[r_n^{\frac 12}(\rho_n-\rho) \int_{0}^{\infty} v(\sigma_n,\sigma,\tau,t)r_n^{\frac 12}(\sigma_n-\sigma) v(\sigma_n,\sigma,\tau,t) d\tau \bigg] dt, \label{eq:bndtrqrelalt}
 \end{align}
 where we used \eqref{eq:firstordtmzero}. 
 
 Let
 \begin{subequations}\label{eq:bochintaltqrel}
  \begin{align}
  \bar p_n(r_n,t)&:=(1-t)\big((1-t)\rho+t\rho_n\big) \mspace{-5 mu}\int_{0}^{\infty} \mspace{-7 mu}v(\sigma_n,\sigma,\tau,t)  r_n^{\frac 12}(\sigma_n-\sigma)v(\sigma_n,\sigma,\tau,t)   r_n^{\frac 12}(\sigma_n-\sigma)  v(\sigma_n,\sigma,\tau,t) d\tau,  \label{eq:bochintaltqrel1}  \\    
\bar q_n(r_n,t)&:=(1-t) r_n^{\frac 12}(\rho_n-\rho) \int_{0}^{\infty} v(\sigma_n,\sigma,\tau,t) r_n^{\frac 12}(\sigma_n-\sigma)  v(\sigma_n,\sigma,\tau,t) d\tau. \label{eq:bochintaltqrel2} 
 \end{align}    
 \end{subequations}
 Then, via steps akin to \eqref{eq:boundsunifint}, we have
    \begin{align}
\norm{\bar p_n(r_n,t)}_1& \leq \norm{r_n^{\frac 12}(\sigma_n-\sigma)}_1^2 \int_{0}^{\infty} \norm{v(\sigma_n,\sigma,\tau,t)}_{\infty}^3 d\tau, \notag \\
 \norm{\bar q_n(r_n,t)}_1& \leq \norm{r_n^{\frac 12}(\rho_n-\rho)}_1\norm{r_n^{\frac 12}(\sigma_n-\sigma)}_1  \int_{0}^{\infty} \norm{v(\sigma_n,\sigma,\tau,t)}_{\infty}^2 d\tau.\notag   
\end{align}
As in Part $(i)$, for any subsequence $(n_k)_{k \in \NN}$, consider a further subsequence  $(n_{k_j})_{j \in \NN}$ such that  
$\big((r_{n_{k_j}}(\rho_{n_{k_j}}-\rho),r_{n_{k_j}}(\sigma_{n_{k_j}}-\sigma)\big) \rightarrow (L_1,L_2)$  in $\norm{\cdot}_1$ a.s. Noting that there exists a constant $0 <c<1$ such that  $(1-t)\sigma+t\sigma_{n_{k_j}} \geq c \sigma$  for  sufficiently large $j$, we have  
\begin{subequations}\label{eq:unifbint}
\begin{align}
\norm{\bar p_{n_{k_j}}(r_{n_{k_j}},t)}_1& \lesssim_{\sigma} \norm{r_{n_{k_j}}^{\frac 12}(\sigma_{n_{k_j}}-\sigma)}_1^2,  \\
 \norm{\bar q_{n_{k_j}}(r_{n_{k_j}},t)}_1& \lesssim_{\sigma}  \norm{r_{n_{k_j}}^{\frac 12}(\rho_{n_{k_j}}-\rho)}_1\norm{r_{n_{k_j}}^{\frac 12}(\sigma_{n_{k_j}}-\sigma)}_1 . 
\end{align}
\end{subequations}
The  above equations and \eqref{eq:boundsunifint2} subsequently implies that  $\big(p_{n_{k_j}}(r_{n_{k_j}},t)\big)_{j \in \NN}$, $\big(q_{n_{k_j}}(r_{n_{k_j}},t)\big)_{j \in \NN}$, $\big(\bar p_{n_{k_j}}(r_{n_{k_j}},t)\big)_{j \in \NN}$ and $\big(\bar q_{n_{k_j}}(r_{n_{k_j}},t)\big)_{j \in \NN}$ are uniformly integrable. Moreover, $\big((r_{n_{k_j}}^{1/2}(\rho_{n_{k_j}}-\rho),$ $r_{n_{k_j}}^{1/2}(\sigma_{n_{k_j}}-\sigma)\big) \rightarrow (0,0)$. Taking limits $j \rightarrow \infty $ and interchanging limits and integral  yields
\begin{align}
g_{n_{k_j}}&\rightarrow \tr{L_1\left(\log \rho-\log \sigma\right) -\rho \int_{0}^{\infty} \left(\tau I+\sigma\right)^{-1} L_2 \left(\tau I+\sigma\right)^{-1} d\tau} \notag \\
&=\tr{L_1\left(\log \rho-\log \sigma\right)} -\tr{\rho D[\log \sigma](L_2)}. \notag
\end{align}
This implies \eqref{eq:qrel-twosample-alt-comm} via the subsequence argument mentioned in Part $(i)$.

\medskip
In the commutative case, we observe similar to \eqref{eq:prfcommutlim} that  $[L_1,L_2]=[L_1,\rho]=[L_1,\sigma]=[L_2,\rho]=[L_2,\sigma]=0$ when $[\rho_n,\rho]=[\sigma_n,\sigma]=[\rho,\sigma]=[\sigma_n,\rho_n]=[\sigma_n,\rho]=[\sigma,\rho_n]=0$. In this case, the above limit simplifies as
\begin{align}
  &\tr{\rho \int_{0}^{\infty} \left(\tau I+\sigma\right)^{-1} L_2 \left(\tau I+\sigma\right)^{-1} d\tau} =  \tr{L_2 \rho \sigma^{-1}}. \notag 
\end{align}

Finally, consider the case $0 \leq \rho \ll \sigma \geq 0$ 
and $P_{\rho}$ be the projector onto support of $\rho$ as defined in Part $(i)$. Since $\rho_n  \ll  \rho \ll \gg  P_{\rho} $,
we have $P_{\rho} \rho_n=\rho_n P_{\rho}=\rho_n $ and by cylicity of trace
\begin{align}
\qrel{\rho_n}{\sigma_n}-\qrel{\rho}{\sigma}&=\tr{\rho_n (\log \rho_n-\log \sigma_n)}-\tr{\rho (\log \rho-\log \sigma)}  \notag \\
&=\tr{P_{\rho}\rho_n (\log \rho_n-\log \sigma_n)P_{\rho}}-\tr{P_{\rho}\rho (\log \rho-\log \sigma)P_{\rho}}.\notag
\end{align}
Since $\rho_n,\sigma_n,\rho \ll \sigma$, we may assume without loss of generality that $\sigma>0$  by restricting the underlying Hilbert space to the support of $\sigma$. Note that $P_{\rho} L_1=L_1$ and $P_{\rho} \rho=\rho$  due to $\rho_n,L_1 \ll \rho$. Consider the function $f(A_1,A_2)=P_{\rho}A_1(\log A_1-\log A_2)P_{\rho}$. Note that $f(A_1,A_2)$ is continuously twice differentiable at  $(A_1,A_2)$ such that $0 \leq A_1 \ll \rho$ and $A_2>0$. Then,  applying the operator version of Taylor's theorem at $(\rho,\sigma)$   and following similar steps as above (for the case $\rho,\sigma>0$) yields 
\begin{align}
    r_n\big(\qrel{\rho_n}{\sigma_n}-\qrel{\rho}{\sigma}\big)  \trightarrow{w} &\tr{P_{\rho}L_1\left(\log \rho-\log \sigma\right) P_{\rho}} -\tr{P_{\rho}\rho D[\log \sigma](L_2)P_{\rho}} \notag \\
    &=\tr{L_1\left(\log \rho-\log \sigma\right) } -\tr{\rho D[\log \sigma](L_2)}, \notag
\end{align}
provided $(P_{\rho} p_n(r_n,t)P_{\rho})_{n \in \NN}$, $(P_{\rho} q_n(r_n,t)P_{\rho})_{n \in \NN}$, $(P_{\rho}\bar p_n(r_n,t)P_{\rho})_{n \in \NN}$ and $(P_{\rho}\bar q_n(r_n,t)P_{\rho})_{n \in \NN}$ (see \eqref{eq:boundsunifint2} and \eqref{eq:bochintaltqrel}) are uniformly integrable along the subsequence $(n_{k_j})_{j \in \NN}$ such that  
$\big((r_{n_{k_j}}(\rho_{n_{k_j}}-\rho),r_{n_{k_j}}(\sigma_{n_{k_j}}-\sigma)\big) \rightarrow (L_1,L_2)$  in $\norm{\cdot}_1$ a.s.  The uniform integrability of the last two subsequences follow via analogous arguments leading to \eqref{eq:unifbint} since $\sigma>0$. Moreover, we have
\begin{align}
 P_{\rho} p_n(r_n,t) P_{\rho}&:=  P_{\rho}  (1-t) r_n(\rho_n-\rho)\int_{0}^{\infty}  v(\rho_n,\rho,\tau,t)   r_n(\rho_n-\rho) v(\rho_n,\rho,\tau,t)  d\tau  P_{\rho} \notag \\
 &=(1-t) r_n(\rho_n-\rho)\int_{0}^{\infty}  \bar v(\rho_n,\rho,\tau,t)   r_n(\rho_n-\rho) \bar v(\rho_n,\rho,\tau,t)  d\tau, \notag
\end{align}
where $\bar v(\rho_n,\rho,\tau,t):=P_{\rho}\big(\tau I+(1-t)\rho+t\rho_n\big)^{-1}P_{\rho} \ll \rho $. The last equality follows since the operators coincide on the support $\rho$ and, otherwise act as zero operator  on the kernel of $\rho$. 
Then, via similar arguments leading to \eqref{eq:boundsunifint2-1} and \eqref{eq:boundsunifint2-1}, we have
\begin{align}
   &\norm{ P_{\rho} p_{n_{k_j}}(r_{n_{k_j}},t) P_{\rho}}_1 
      \leq \norm{  r_{n_{k_j}}(\rho_{n_{k_j}} -\rho)}_1^2 \int_{0}^{\infty} \norm{\bar v(\rho_{n_{k_j}},\rho,\tau,t)}_{\infty}^2 d\tau \lesssim_{\rho} \norm{  r_{n_{k_j}}(\rho_{n_{k_j}} -\rho)}_1^2, \notag \\
     & \norm{ P_{\rho} q_{n_{k_j}}(r_{n_{k_j}},t) P_{\rho}}_1 
\leq \norm{  r_{n_{k_j}}(\rho_{n_{k_j}} -\rho )}_1^2 \int_{0}^{\infty} \norm{\bar v(\rho_{n_{k_j}},\rho,\tau,t)}_{\infty}^3 d\tau \lesssim_{\rho} \norm{  r_{n_{k_j}}(\rho_{n_{k_j}} -\rho)}_1^2, \notag
\end{align}
from which the uniform integrability of the first two sequences follow. This completes the proof.

\subsection{Proof of Theorem \ref{Thm:Petzrelent-limdist}} \label{Thm:Petzrelent-limdist-proof}
\noindent
In the following, we will assume without loss of generality that $\rho,\sigma>0$. The proofs for extending  to  the general case $\rho \ll \sigma$ follows via similar arguments as given in the proof of Theorem \ref{Thm:quantrelent-limdist}.  
We first prove Part $(ii)$.

\noindent
\textbf{Two-sample alternative:} 
Consider the case $\alpha \in (0,1)$,   $\bar \alpha:=1-\alpha $,  and $Q_{\alpha}(\rho,\sigma):=\tr{\rho^{\alpha}\sigma^{\bar \alpha}}$. We will initially show that
\begin{align}
  r_n\big(Q_{\alpha}(\rho_n,\sigma_n)-Q_{\alpha}(\rho,\sigma)\big) \trightarrow{w} & \tr{\sigma^{\bar \alpha} D[\rho^{\alpha}](L_1)}+\tr{\rho^{\alpha}D[\sigma^{\bar \alpha}](L_2)} \notag \\
&= c_{\alpha} \tr{\int_0^{\infty} \tau^{\alpha} (\tau I+\rho)^{-1}L_1 (\tau I+\rho)^{-1} d\tau \sigma^{\bar \alpha}} \notag \\
& \qquad + c_{\bar \alpha} \tr{\rho^{\alpha} \int_0^{\infty} \tau^{\bar \alpha} (\tau I+\sigma)^{-1}L_2 (\tau I+\sigma)^{-1} d\tau}, \notag
\end{align}
where $c_{\alpha}=\pi/\sin(\pi \alpha)$. Then, applying the functional delta method \cite[Theorem 3.9.4]{AVDV-book} with $\phi(x)= \log x/(\alpha-1)$ at $x=Q_{\alpha}(\rho,\sigma)$ leads to 
\begin{align}
   r_n \big(\petzdiv{\rho_n}{\sigma_n}{\alpha}-\petzdiv{\rho}{\sigma}{\alpha}\big) \trightarrow{w} \frac{\tr{\sigma^{\bar \alpha} D[\rho^{\alpha}](L_1)}+\tr{\rho^{\alpha}D[\sigma^{1-\alpha}](L_2)}}{(\alpha-1)\tr{\rho^{\alpha}\sigma^{\bar \alpha}}}, \label{eq:qalphlim}
\end{align}
as claimed in \eqref{eq:petzrelent-twosample-alt}.

To show the above, we  compute the Fr\'{e}chet derivatives of the operator-valued function $f(A_1,A_2)=A_1^{\alpha}A_2^{1-\alpha}$. Using the integral representation \cite[Lemma 2.8]{Carlen-2010} 
\begin{align}
A^{\alpha}=c_{\alpha}\int_{0}^{\infty} \tau ^{\alpha} \left(\frac{1}{\tau I}-\frac{1}{\tau I+A}\right), ~\alpha \in (0,1),\notag
\end{align}
 we have via chain and product rule for Fr\'{e}chet derivatives that
\begin{align*}
&D^{(1,0)}[f(A_1,A_2)](H)=c_{\alpha} \int_{0}^{\infty} \tau^{\alpha} (\tau I+A_1)^{-1} H (\tau I+A_1)^{-1}  d\tau A_2^{\bar \alpha},  \\
&D^{(0,1)}[f(A_1,A_2)](H)=c_{\bar \alpha} A_1^{\alpha} \int_{0}^{\infty} \tau^{\bar \alpha} (\tau I+A_2)^{-1} H (\tau I+A_2)^{-1}  d\tau,  \\
&D^{(2,0)}[f(A_1,A_2)](H_1,H_2)= -c_{\alpha} \int_0^{\infty} \tau^{\alpha}  (\tau I+A_1)^{-1}H_1  (\tau I+A_1)^{-1}H_2  (\tau I+A_1)^{-1} d\tau A_2^{\bar \alpha} \\
& \qquad \qquad\qquad \qquad \qquad\qquad -c_{\alpha} \int_0^{\infty} \tau^{\alpha}  (\tau I+A_1)^{-1}H_2  (\tau I+A_1)^{-1}H_1  (\tau I+A_1)^{-1} d\tau A_2^{\bar \alpha}, \\
&D^{(0,2)}[f(A_1,A_2)](H_1,H_2)= -c_{\bar \alpha} A_1^{\alpha} \int_0^{\infty} \tau^{\bar \alpha}  (\tau I+A_2)^{-1}H_1  (\tau I+A_2)^{-1}H_2  (\tau I+A_2)^{-1} d\tau  \\
& \qquad \qquad\qquad \qquad \qquad\qquad -c_{\bar \alpha} A_1^{\alpha} \int_0^{\infty} \tau^{\bar \alpha}  (\tau I+A_2)^{-1}H_2  (\tau I+A_2)^{-1}H_1  (\tau I+A_2)^{-1} d\tau,   \\
&D^{(1,1_+)}[f(A_1,A_2)](H_1,H_2)=c_{\alpha}c_{\bar \alpha} \int_{0}^{\infty} \tau^{\alpha}  (\tau I+A_1)^{-1}H_1  (\tau I+A_1)^{-1} d\tau \notag \\
&\qquad \qquad \qquad \qquad\qquad \qquad\qquad \qquad\int_{0}^{\infty} \tau^{\bar \alpha}  (\tau I+A_2)^{-1}H_2  (\tau I+A_2)^{-1} d\tau,  \\
&D^{(1_+,1)}[f(A_1,A_2)](H_1,H_2)=c_{\alpha}c_{\bar \alpha} \int_{0}^{\infty} \tau^{\alpha}  (\tau I+A_1)^{-1}H_2  (\tau I+A_1)^{-1} d\tau \notag \\
&\qquad \qquad \qquad \qquad\qquad \qquad\qquad \qquad\int_{0}^{\infty} \tau^{\bar \alpha}  (\tau I+A_2)^{-1}H_1  (\tau I+A_2)^{-1} d\tau.
\end{align*}
Then, from \eqref{eq:taylor-op} with $B_1=\rho_n$, $B_2=\sigma_n$, $A_1=\rho$, and $A_2=\sigma$ and $v(\rho_n,\rho,\tau,t):=\big(\tau I+(1-t)\rho+t\rho_n\big)^{-1}$, we obtain
\begin{align}
\rho_n^{\alpha}\sigma_n^{\bar \alpha} &=\rho^{\alpha}\sigma^{\bar \alpha}+ c_{\alpha} \int_0^{\infty} \tau^{\alpha} (\tau I+\rho)^{-1}(\rho_n-\rho) (\tau I+\rho)^{-1} d\tau \,\sigma^{\bar \alpha}  \notag \\
&\quad + c_{\bar \alpha} \rho^{\alpha} \int_0^{\infty} \tau^{\bar \alpha} (\tau I+\sigma)^{-1}(\sigma_n-\sigma)   (\tau I+\sigma)^{-1} d\tau \notag \\
& \quad -2c_{\alpha} \mspace{-4 mu} \int_0^1  \mspace{-4 mu}(1-t) \bigg[\int_0^{\infty} \tau^{\alpha} v(\rho_n,\rho,\tau,t) (\rho_n-\rho) v(\rho_n,\rho,\tau,t) (\rho_n-\rho) v(\rho_n,\rho,\tau,t) d\tau \bigg] \notag \\
& \qquad \qquad \qquad \qquad\qquad \qquad\qquad \qquad \qquad\qquad \qquad \qquad\qquad \qquad\big((1-t)\sigma+t\sigma_n\big)^{\bar \alpha} dt \notag \\
&\quad  -2c_{\bar \alpha} \int_0^1 (1-t) \big((1-t)\rho+t\rho_n\big)^{ \alpha}\bigg[\int_0^{\infty} \tau^{\bar \alpha} v(\sigma_n,\sigma,\tau,t) (\sigma_n-\sigma) v(\sigma_n,\sigma,\tau,t) (\sigma_n-\sigma)  \notag \\
& \qquad \qquad \qquad \qquad\qquad \qquad\qquad \qquad \qquad\qquad \qquad \qquad\qquad \qquad  v(\sigma_n,\sigma,\tau,t) d\tau \bigg] dt \notag \\
&\quad +2c_{\alpha}c_{\bar \alpha} \int_0^1 (1-t)\bigg[\int_0^{\infty} \tau^{\alpha} v(\rho_n,\rho,\tau,t) (\rho_n-\rho)v(\rho_n,\rho,\tau,t) d\tau \bigg]  \notag \\
&\qquad \qquad \qquad\qquad \qquad\qquad \qquad \qquad\bigg[\int_0^{\infty} \tau^{\bar \alpha} v(\sigma_n,\sigma,\tau,t) (\sigma_n-\sigma) v(\sigma_n,\sigma,\tau,t) d\tau \bigg]  dt. \notag
\end{align}
Multiplying by $r_n$, taking trace, and subsequent limits leads to \eqref{eq:qalphlim} using similar arguments as in Theorem \ref{Thm:quantrelent-limdist}, provided 
 \begin{align}
  \bar p_n(r_n,t)&:=(1-t)  \bigg[\int_0^{\infty} \tau^{\alpha} v(\rho_n,\rho,\tau,t) r_n^{\frac 12}(\rho_n-\rho) v(\rho_n,\rho,\tau,t) r_n^{\frac 12}(\rho_n-\rho) v(\rho_n,\rho,\tau,t) d\tau \bigg]  \notag \\
  & \qquad \qquad \qquad\qquad \qquad\qquad \qquad \qquad\qquad \qquad \qquad\qquad \qquad \qquad \qquad \big((1-t)\sigma+t\sigma_n\big)^{\bar \alpha}, \notag \\
\bar q_n(r_n,t)&:=(1-t) \big((1-t)\rho+t\rho_n\big)^{ \alpha}\bigg[\int_0^{\infty} \tau^{\bar \alpha} v(\sigma_n,\sigma,\tau,t) r_n^{\frac 12}(\sigma_n-\sigma) v(\sigma_n,\sigma,\tau,t) r_n^{\frac 12}(\sigma_n-\sigma) \notag \\
& \qquad \qquad \qquad\qquad \qquad\qquad \qquad \qquad\qquad \qquad \qquad\qquad \qquad \qquad   \qquad v(\sigma_n,\sigma,\tau,t)d\tau\bigg], \notag \\
\bar s_n(r_n,t)&:=(1-t)\bigg[\int_0^{\infty} \tau^{\alpha} v(\rho_n,\rho,\tau,t) r_n^{\frac 12}(\rho_n-\rho) v(\rho_n,\rho,\tau,t) d\tau \bigg] \bigg[\int_0^{\infty} \tau^{\bar \alpha} v(\sigma_n,\sigma,\tau,t) r_n^{\frac 12}(\sigma_n-\sigma) \notag \\
&\qquad \qquad \qquad\qquad \qquad\qquad \qquad \qquad\qquad \qquad \qquad\qquad \qquad \qquad \qquad v(\sigma_n,\sigma,\tau,t) d\tau \bigg], \notag
 \end{align}
are uniformly integrable sequences along a subsequence $(n_{k_j})_{j \in \NN}$ such that 
$\big((r_{n_{k_j}}(\rho_{n_{k_j}}-\rho),r_{n_{k_j}}(\sigma_{n_{k_j}}-\sigma)\big) \rightarrow (L_1,L_2)$ in $\norm{\cdot}_1$ a.s. To see the required uniform integrability, observe that 
\begin{subequations}
\begin{align}
    \norm{\bar p_n(r_n,t)}_1 &\leq \norm{r_n^{\frac 12}(\rho_n-\rho)}_1^2 \bigg[\int_{0}^{\infty} \tau^{\alpha}\norm{v(\rho_n,\rho,\tau,t)}_{\infty}^3 d\tau \bigg] \norm{\big((1-t)\sigma+t\sigma_n\big)^{\bar \alpha}}_1, \label{eq:firstintui} \\
 \norm{\bar q_n(r_n,t)}_1 \mspace{-2 mu} &\leq \mspace{-2 mu} \norm{r_n^{\frac 12}(\sigma_n-\sigma)}_1^2 \bigg[\int_{0}^{\infty} \mspace{-6 mu}\tau^{\bar \alpha}\norm{v(\sigma_n,\sigma,\tau,t)}_{\infty}^3 d\tau \bigg] \norm{\big((1-t)\rho+t\rho_n\big)^{\alpha}}_1, \\
  \norm{\bar s_n(r_n,t)}_1 &\leq\norm{r_n^{\frac 12}(\rho_n-\rho)}_1\norm{r_n^{\frac 12}(\sigma_n-\sigma)}_1 \bigg[\int_{0}^{\infty} \tau^{\alpha}\norm{v(\rho_n,\rho,\tau,t)}_{\infty}^2 d\tau \bigg] \bigg[\int_{0}^{\infty} \norm{v(\sigma_n,\sigma,\tau,t)}_{\infty}^2 d\tau \bigg].    
\end{align}\label{eq:bndintpetzui}
\end{subequations}
For $\alpha \in (0,1)$, we have by concavity of the function $x \mapsto x^{\alpha}$ for $x \geq 0$ that
  \begin{align}
    \norm{\big((1-t)\sigma+t\sigma_n\big)^{\bar \alpha}}_1=\sum_{i=1}^d \lambda_i^{\bar \alpha} \leq  d \left(\frac{1}{d}\sum_{i=1}^d \lambda_i\right)^{\bar \alpha} =d^{\alpha}  \norm{\big((1-t)\sigma+t\sigma_n\big)}_1^{\bar \alpha}  \leq d^{\alpha}.\notag 
\end{align}
Here,  $\{\lambda_i\}_{i=1}^d $ denotes the set of eigenvalues of $(1-t)\sigma+t\sigma_n$, and we used that for $A \geq 0$, the eigenvalues of $A^{\alpha}$ are  equal to the eigenvalues of $A$ raised to the power $\alpha$.   
Similarly, 
\begin{align}
     &   \norm{\big((1-t)\rho+t\rho_n\big)^{\alpha}}_1 \leq d^{\bar \alpha} \norm{(1-t)\rho+t\rho_n}_1^{\alpha} \leq  d^{\bar \alpha}.\label{eq:cocavityeig}
\end{align}
Next, since $\rho_{n_{k_j}} \rightarrow \rho$ and $\sigma_{n_{k_j}} \rightarrow \sigma$ in $\norm{\cdot}_1$ a.s., we have  that $(1-t)\rho+t\rho_{n_{k_j}} \geq c \rho$ and $(1-t)\sigma+t\sigma_{n_{k_j}} \geq c \sigma$ a.s. for a constant $0<c<1$  and  sufficiently large $j$.  Consequently,  we obtain  that the integrals in \eqref{eq:bndintpetzui} are finite  for $\alpha \in (0,1)$. For instance, the integrand   in \eqref{eq:firstintui} is $O(\tau^{\alpha})$ for $\tau$ close to zero and  $O(\tau^{\alpha-3})$ as $\tau \rightarrow \infty$ which implies its finiteness. Hence,
\begin{subequations}\label{eq:bochnerint}
   \begin{align} 
    &\norm{\bar p_{n_{k_j}}(r_{n_{k_j}},t)}_1 \lesssim_{d,\rho,\alpha}\norm{r_{n_{k_j}}^{\frac 12}(\rho_{n_{k_j}}-\rho)}_1^2,  \\
    &\norm{\bar q_{n_{k_j}}(r_{n_{k_j}},t)}_1 \lesssim_{d,\sigma,\alpha} \norm{r_{n_{k_j}}^{\frac 12}(\sigma_{n_{k_j}}-\sigma)}_1^2,  \\
    &\norm{\bar s_{n_{k_j}}(r_{n_{k_j}},t)}_1 \lesssim_{d,\rho,\sigma,\alpha} \norm{r_{n_{k_j}}^{\frac 12}(\rho_{n_{k_j}}-\rho)}_1 \norm{r_{n_{k_j}}^{\frac 12}(\sigma_{n_{k_j}}-\sigma)}_1.
\end{align} 
\end{subequations}
Then, the desired uniform integrability follows from those of the sequences $\big((r_{n_{k_j}}^{1/2}(\rho_{n_{k_j}}-\rho)\big)_{j \in \NN}$ and $\big(r_{n_{k_j}}^{1/2}(\sigma_{n_{k_j}}-\sigma)\big)_{j \in \NN}$. This completes the proof of the claim for $\alpha \in (0,1)$.

Next, consider the case $\alpha \in (1,2)$. Using the integral representation \cite[Lemma 2.8]{Carlen-2010}
\begin{subequations}
\begin{align}
   A^{\alpha}&=c_{\alpha-1} \int_0^{\infty} \left(\tau^{\alpha-2} A+\tau^{\alpha}(\tau I+A)^{-1}-\tau^{\alpha-1}I\right)d\tau, \label{intformfrac1} \\
   A^{\bar \alpha}&=c_{\bar \alpha+1} \int_0^{\infty}  \tau^{\bar \alpha}(\tau I+A)^{-1}d\tau, \label{intformfrac2}  
\end{align}
\end{subequations}
we have
\begin{align*}
&D^{(1,0)}[f(A_1,A_2)](H)=c_{\alpha-1} \int_{0}^{\infty}  \left(\tau^{\alpha-2}H-\tau^{\alpha}(\tau I+A_1)^{-1}H(\tau I+A_1)^{-1}\right)d\tau A_2^{\bar \alpha},  \\
&D^{(0,1)}[f(A_1,A_2)](H)=-c_{\bar \alpha+1} A_1^{\alpha} \int_{0}^{\infty} \tau^{\bar \alpha} (\tau I+A_2)^{-1} H (\tau I+A_2)^{-1}  d\tau,  \\
&D^{(2,0)}[f(A_1,A_2)](H_1,H_2)= c_{\alpha-1} \int_0^{\infty} \tau^{\alpha}  (\tau I+A_1)^{-1}H_1  (\tau I+A_1)^{-1}H_2  (\tau I+A_1)^{-1} d\tau A_2^{\bar \alpha} \\
& \qquad \qquad\qquad \qquad \qquad\qquad~~ +c_{\alpha-1} \int_0^{\infty} \tau^{\alpha}  (\tau I+A_1)^{-1}H_2  (\tau I+A_1)^{-1}H_1  (\tau I+A_1)^{-1} d\tau A_2^{\bar \alpha}, \\
&D^{(0,2)}[f(A_1,A_2)](H_1,H_2)= c_{\bar \alpha+1} A_1^{\alpha} \int_0^{\infty} \tau^{\bar \alpha}  (\tau I+A_2)^{-1}H_1  (\tau I+A_2)^{-1}H_2  (\tau I+A_2)^{-1} d\tau  \\
& \qquad \qquad\qquad \qquad \qquad\qquad ~~+c_{\bar \alpha+1} A_1^{\alpha} \int_0^{\infty} \tau^{\bar \alpha}  (\tau I+A_2)^{-1}H_2  (\tau I+A_2)^{-1}H_1  (\tau I+A_2)^{-1} d\tau,   \\
&D^{(1,1_+)}[f(A_1,A_2)](H_1,H_2)=-c_{\alpha-1}c_{\bar \alpha+1} \bigg[\int_{0}^{\infty}   \left(\tau^{\alpha-2}H_1-\tau^{\alpha}(\tau I+A_1)^{-1}H_1  (\tau I+A_1)^{-1} \right)d\tau \bigg]\notag \\
&\qquad \qquad \qquad \qquad \qquad\qquad \qquad\qquad \qquad\bigg[\int_{0}^{\infty} \tau^{\bar \alpha}  (\tau I+A_2)^{-1}H_2  (\tau I+A_2)^{-1} d\tau \bigg],  \\
&D^{(1_+,1)}[f(A_1,A_2)](H_1,H_2)=-c_{\bar \alpha+1}c_{\alpha-1} \bigg[\int_{0}^{\infty}  \left(\tau^{\alpha-2}H_2- \tau^{\alpha}(\tau I+A_1)^{-1}H_2  (\tau I+A_1)^{-1} \right) d\tau \bigg] \notag \\
&\qquad \qquad \qquad \qquad\qquad \qquad\qquad  \qquad \qquad\bigg[\int_{0}^{\infty} \tau^{\bar \alpha}  (\tau I+A_2)^{-1}H_1  (\tau I+A_2)^{-1} d\tau\bigg].
\end{align*}
Substituting $B_1=\rho_n$, $B_2=\sigma_n$, $A_1=\rho$, and $A_2=\sigma$, multiplying by $r_n$, taking trace and following similar arguments as above leads to the claim provided 
 \begin{align}
 \tilde p_n(r_n,t)&:= (1-t)\bigg[\int_0^{\infty} \tau^{\alpha}  v(\rho_n,\rho,\tau,t) r_n^{\frac 12}(\rho_n-\rho) v(\rho_n,\rho,\tau,t) r_n^{\frac 12}(\rho_n-\rho)   v(\rho_n,\rho,\tau,t) d\tau \bigg] \notag \\
 & \qquad \qquad \qquad\qquad \qquad\qquad \qquad \qquad\qquad \qquad \qquad\qquad \qquad \qquad   \qquad \big((1-t)\sigma+t\sigma_n\big)^{\bar \alpha},\notag \\
 \tilde q_n(r_n,t)&:= (1-t)\big((1-t)\rho+t\rho_n\big)^{\alpha}\bigg[\int_0^{\infty} \tau^{\bar \alpha}  v(\sigma_n,\sigma,\tau,t) r_n^{\frac 12}(\sigma_n-\sigma) v(\sigma_n,\sigma,\tau,t) r_n^{\frac 12}(\sigma_n-\sigma)  \notag \\
& \qquad \qquad \qquad\qquad \qquad\qquad \qquad \qquad\qquad \qquad \qquad\qquad \qquad \qquad   \qquad  v(\sigma_n,\sigma,\tau,t) d\tau\bigg],\notag \\
 \tilde s_n(r_n,t)&:=(1-t)\bigg[\int_{0}^{\infty}   \left(\tau^{\alpha-2}r_n^{\frac 12}(\rho_n-\rho)-\tau^{\alpha}v(\rho_n,\rho,\tau,t)~r_n^{\frac 12}(\rho_n-\rho) ~v(\rho_n,\rho,\tau,t) \right)d\tau \bigg]\notag \\
&\qquad \qquad \qquad \qquad \qquad\qquad \qquad\bigg[\int_{0}^{\infty} \tau^{\bar \alpha}  v(\sigma_n,\sigma,\tau,t)r_n^{\frac 12}(\sigma_n-\sigma) v(\sigma_n,\sigma,\tau,t) d\tau \bigg], \notag
 \end{align}
 are uniformly integrable  along a subsequence $(n_{k_j})_{j \in \NN}$ satisfying 
 $\big((r_{n_{k_j}}(\rho_{n_{k_j}}-\rho),r_{n_{k_j}}(\sigma_{n_{k_j}}-\sigma)\big) \rightarrow (L_1,L_2)$ in $\norm{\cdot}_1$ a.s. Observe that 
 \begin{subequations} \label{eq:unifbochpetz}
    \begin{align}
  \norm{\tilde p_n(r_n,t)}_1 &\lesssim_{d,\alpha} \norm{r_n^{\frac 12}(\rho_n-\rho)}_1^2\bigg[\int_{0}^{\infty}  \mspace{-2 mu}\tau^{\alpha}\norm{v(\rho_n,\rho,\tau,t)}_{\infty}^3 d\tau \bigg], \label{eq:unifbochpetz1} \\
  \norm{\tilde q_n(r_n,t)}_1 &\lesssim_{d,\alpha} \norm{r_n^{\frac 12}(\sigma_n-\sigma)}_1^2\bigg[\int_{0}^{\infty}\tau^{\bar \alpha} \norm{v(\sigma_n,\sigma,\tau,t)}_{\infty}^3 d\tau \bigg], \label{eq:unifbochpetz2}\\
  \norm{\tilde
  s_n(r_n,t)}_1 &\lesssim_{d,\alpha}\norm{r_n^{\frac 12}(\sigma_n-\sigma)}_1 \int_{0}^{\infty} \tau^{\bar \alpha}\norm{v(\sigma_n,\sigma,\tau,t)}_{\infty}^2 d\tau \notag \\
  &\qquad \bigg[\int_{0}^{\infty} \big\|\tau^{\alpha-2}r_n^{\frac 12}(\rho_n-\rho)-\tau^{\alpha}v(\rho_n,\rho,\tau,t)~r_n^{\frac 12}(\rho_n-\rho)  v(\rho_n,\rho,\tau,t)\big\|_1 d\tau \bigg]  . \label{eq:unifbochpetz3} 
 \end{align}  
 \end{subequations}
It is straightforward to see  the finiteness of the integrals in \eqref{eq:unifbochpetz}, except perhaps the last integral in  \eqref{eq:unifbochpetz3}. For the latter, observe that along  a  subsequence $(n_{k_j})_{j \in \NN}$ such that $\rho_{n_{k_j}} \rightarrow \rho$, $\sigma_{n_{k_j}} \rightarrow \rho$ in $\norm{\cdot}_1$  (a.s.), there exists constants $c,c'>0$ $\big($depending on $\rho$ and the realizations $\rho_{n_{k_j}},\sigma_{n_{k_j}}\big)$ such that 
\begin{align}
     c I \leq (1-t)\rho+t\rho_{n_{k_j}},(1-t)\rho+t\sigma_{n_{k_j}} \leq c' I. \notag
\end{align}
Then, we have 
\begin{align}
\left(\tau^{\alpha-2}-\tau^{\alpha} (\tau +c')^{-2}\right)\big(\rho_{n_{k_j}}-\rho\big) 
 &\geq \tau^{\alpha-2}\big(\rho_{n_{k_j}}-\rho\big)-\tau^{\alpha}v\big(\rho_{n_{k_j}},\rho,\tau,t\big)\big(\rho_{n_{k_j}}-\rho\big) v\big(\rho_{n_{k_j}},\rho,\tau,t\big) \notag \\
 &\geq \left(\tau^{\alpha-2}-\tau^{\alpha} (\tau +c)^{-2}\right)\big(\rho_{n_{k_j}}-\rho\big). \notag 
\end{align}
From Lemma \ref{lem:multprojtr}$(ii)$, it then follows that 
 \begin{align}
&\int_{0}^{\infty}\norm{\tau^{\alpha-2}r_{n_{k_j}}^{\frac 12}\big(\rho_{n_{k_j}}-\rho\big)-\tau^{\alpha}v\big(\rho_{n_{k_j}},\rho,\tau,t\big)~r_{n_{k_j}}^{\frac 12}\big(\rho_{n_{k_j}}-\rho\big) v\big(\rho_{n_{k_j}},\rho,\tau,t\big)}_1 d\tau \notag \\
 & \leq \norm{r_{n_{k_j}}^{\frac 12}(\rho_{n_{k_j}}-\rho)}_1 \left(\int_{0}^{\infty}\left(\tau^{\alpha-2}-\tau^{\alpha} (\tau +c)^{-2}\right) d\tau +\int_{0}^{\infty}\left(\tau^{\alpha-2}-\tau^{\alpha} (\tau +c')^{-2}\right) d\tau \right). \notag 
 \end{align}
 The last two integrals are finite for $\alpha \in (1,2)$.  Thus, the RHS of the equations in \eqref{eq:unifbochpetz} are bounded similar to the RHS of \eqref{eq:bochnerint}, and   the desired integrability follows. This completes the proof of \eqref{eq:petzrelent-twosample-alt} for $\alpha \in (0,2)$. The case $\alpha=2$ is simpler as $f(A_1,A_2)=A_1^2A_2^{-1}$ and the relevant derivatives can be computed using the rules $D[A^2](H)=AH+HA$ and \eqref{eq:derinvop}. Since  rest of the proof is similar to above, we omit the details.  

\noindent
 \textbf{Two-sample null:} Consider $\alpha \in (0,1)$. We will show that 
\begin{align}
  r_n^2\big(Q_{\alpha}(\rho_n,\sigma_n)-Q_{\alpha}(\rho,\rho)\big) =r_n^2\big(Q_{\alpha}(\rho_n,\sigma_n)-1\big) \trightarrow{w} & \frac 12 \operatorname{Tr}\big[\rho^{\bar \alpha} D^2[\rho^{\alpha}](L_1,L_1)+\rho^{\alpha}D^2[\rho^{\bar \alpha}](L_2,L_2) \notag \\
  & \quad +2 D[\rho^{\alpha}](L_1)D[\rho^{\bar \alpha}](L_2)\big]. \label{eq:petzrenyi-null}
\end{align}
Then, an application of the functional  delta method yields the claim in \eqref{eq:petzrelent-twosample-null} by noting that $\log 1=0$. From \eqref{eq:taylor-op} with $B_1=\rho_n$, $B_2=\sigma_n$, $A_1=A_2=\rho$,  and $v(\rho_n,\rho,\tau,t):=\big(\tau I+(1-t)\rho+t\rho_n\big)^{-1}$, we obtain
\begin{align}
&\rho_n^{\alpha}\sigma_n^{\bar \alpha} \notag \\
&=\rho+ c_{\alpha} \int_0^{\infty} \mspace{-8 mu}\tau^{\alpha} (\tau I+\rho)^{-1}(\rho_n-\rho) (\tau I+\rho)^{-1} d\tau \rho^{\bar \alpha} + c_{\bar \alpha} \rho^{\alpha} \mspace{-5 mu}\int_0^{\infty} \mspace{-8 mu} \tau^{\bar \alpha} (\tau I+\rho)^{-1}(\sigma_n-\rho) (\tau I+\rho)^{-1} d\tau\notag \\
& \quad  \notag  -2c_{\alpha} \int_0^1 (1-t) \bigg[\int_0^{\infty} \tau^{\alpha} v(\rho_n,\rho,\tau,t) (\rho_n-\rho) v(\rho_n,\rho,\tau,t) (\rho_n-\rho) v(\rho_n,\rho,\tau,t) d\tau \bigg] \notag \\
&\qquad \qquad \qquad \qquad \qquad\qquad \qquad \qquad \qquad \qquad \qquad \qquad\qquad \qquad \big((1-t)\rho+t\sigma_n\big)^{\bar \alpha} dt\notag \\
& \quad    -2c_{\bar \alpha} \int_0^1 (1-t) \big((1-t)\rho+t\rho_n\big)^{ \alpha}\bigg[\int_0^{\infty} \tau^{\bar \alpha} v(\sigma_n,\rho,\tau,t)  (\sigma_n-\rho) v(\sigma_n,\rho,\tau,t) (\sigma_n-\rho)  \notag \\
&\qquad \qquad \qquad \qquad \qquad\qquad \qquad \qquad \qquad \qquad \qquad \qquad\qquad \qquad v(\sigma_n,\rho,\tau,t) d\tau \bigg] dt\notag \\
&+2c_{\alpha}c_{\bar \alpha} \int_0^1 (1-t)\bigg[\int_0^{\infty} \tau^{\alpha} v(\rho_n,\rho,\tau,t) (\rho_n-\rho) v(\rho_n,\rho,\tau,t) d\tau \bigg]\notag \\
& \qquad \qquad \qquad \qquad \qquad\qquad \qquad \qquad \qquad \bigg[\int_0^{\infty} \tau^{\bar \alpha} v(\sigma_n,\rho,\tau,t) (\sigma_n-\sigma) v(\sigma_n,\rho,\tau,t) d\tau \bigg]  dt. \notag
\end{align}
Multiplying by $r_n^2$, taking trace and following similar arguments as above leads to \eqref{eq:petzrenyi-null}, provided 
\begin{subequations} \label{eq:inttrzero}
    \begin{align}
&  \tr{ \int_0^{\infty} \tau^{\alpha} (\tau I+\rho)^{-1}(\rho_n-\rho) (\tau I+\rho)^{-1} d\tau \rho^{\bar \alpha}}=0, \label{eq:inttrzero1}\\
  & \tr{\rho^{\alpha} \int_0^{\infty} \tau^{\bar \alpha} (\tau I+\rho)^{-1}(\sigma_n-\rho) (\tau I+\rho)^{-1} d\tau}=0,\label{eq:inttrzero2}
\end{align}
\end{subequations}
and the sequences
 \begin{align}
 &\bar p_n(r_n,t)\notag \\
 &:= (1-t)\bigg[\int_0^{\infty} \mspace{-5 mu}\tau^{\alpha}  v(\rho_n,\rho,\tau,t) r_n(\rho_n-\rho) v(\rho_n,\rho,\tau,t) r_n(\rho_n-\rho)   v(\rho_n,\rho,\tau,t) d\tau \bigg]\mspace{-2 mu}\big((1-t)\rho+t\sigma_n\big)^{\bar \alpha},\notag \\
& \tilde q_n(r_n,t)\notag \\
&:= (1-t)\big((1-t)\rho+t\rho_n\big)^{\alpha}\mspace{-2 mu}\bigg[\int_0^{\infty} \mspace{-5 mu}\tau^{\bar \alpha}  v(\sigma_n,\rho,\tau,t) r_n(\sigma_n-\rho) v(\sigma_n,\rho,\tau,t) r_n(\sigma_n-\rho) v(\sigma_n,\rho,\tau,t) d\tau\bigg],\notag \\
 &\tilde s_n(r_n,t):=(1-t)\bigg[\int_{0}^{\infty}   \left(\tau^{\alpha-2}r_n(\rho_n-\rho)-\tau^{\alpha}v(\rho_n,\rho,\tau,t)~r_n(\rho_n-\rho) ~v(\rho_n,\rho,\tau,t) \right)d\tau \bigg]\notag \\
&\qquad \qquad \qquad \qquad \qquad\qquad \qquad \qquad \qquad \quad  \qquad\bigg[\int_{0}^{\infty} \tau^{\bar \alpha}  v(\sigma_n,\rho,\tau,t)r_n(\sigma_n-\rho) v(\sigma_n,\rho,\tau,t) d\tau \bigg],\notag
 \end{align}
 are uniformly integrable along a subsequence  $(n_{k_j})_{j \in \NN}$ satisfying 
  $\big((r_{n_{k_j}}(\rho_{n_{k_j}}-\rho),r_{n_{k_j}}(\sigma_{n_{k_j}}-\rho)\big) \rightarrow (L_1,L_2)$  in $\norm{\cdot}_1$ a.s. To show \eqref{eq:inttrzero1}, note that  we have
 \begin{align}
  \tr{ \int_0^{\infty} \tau^{\alpha} (\tau I+\rho)^{-1}(\rho_n-\rho) (\tau I+\rho)^{-1} d\tau \rho^{\bar \alpha}}&\stackrel{(a)}{=} \tr{(\rho_n-\rho)\rho^{\bar \alpha}\int_0^{\infty} \tau^{\alpha} (\tau I+\rho)^{-2} d\tau } \notag \\
  & \stackrel{(b)}{=}c_{\alpha}^{-1}\tr{\rho_n-\rho}=0, \notag
 \end{align}
 where $(a)$ follows by similar arguments  leading to \eqref{eq:firstordtmzero} and $(b)$ is due to 
 \begin{align}
     \int_0^{\infty} \tau^{\alpha} (\tau I+\rho)^{-2} d\tau &= -\tau^{\alpha}(\tau I+\rho)^{-1} \big|_0^{\infty}-\alpha    \int_0^{\infty} \tau^{\alpha-1}(\tau I+\rho)^{-1} d\tau = c_{\alpha}^{-1}\rho^{-\bar \alpha}.\notag 
 \end{align}
 In the above, the first equality uses integration by parts and  the second equality uses \eqref{intformfrac2}. 
 Similarly, \eqref{eq:inttrzero2} also holds. The desired integrability can be shown similar to \eqref{eq:bochnerint}. This completes the proof of \eqref{eq:petzrelent-twosample-null} for the case $\alpha \in (0,1)$. The proof when $\alpha \in (1,2]$ is similar and hence omitted. 

\medskip

 If $[\rho_n,\rho]=[\sigma_n,\rho]=[\sigma_n,\rho_n]=0$ for all $n$ sufficiently large, then $[L_1,L_2]=[L_1,\rho]=[L_2,\rho]=0$. Hence,
 \begin{align}
   \rho^{\bar \alpha}  D^2[\rho^{\alpha}](L_1,L_1)&=-2\rho^{\bar \alpha} L_1^2 c_{\alpha} \int_0^{\infty} \tau^{\alpha}  (\tau I+\rho)^{-3}   d\tau =\rho^{\bar \alpha} L_1^2 \rho^{\alpha-2}=L_1^2\rho^{-1}. \notag
 \end{align}
 Similarly, 
 \begin{align}
 \rho^{\alpha}D^2[\rho^{\bar \alpha}](L_2,L_2)=L_2^2\rho^{-1}, \notag \\
 2 D[\rho^{\alpha}](L_1)D[\rho^{\bar \alpha}](L_2)=2L_1L_2\rho^{-1}. \notag
 \end{align}
 Substituting this in \eqref{eq:petzrelent-twosample-null} leads to \eqref{eq:petzrelent-twosample-null-comm}.

 \subsection{Proof of Theorem \ref{Thm:sandrelent-limdist}} \label{Thm:sandrelent-limdist-proof} 
 We will prove the claim for $\rho, \sigma>0$. The general case $\rho \ll \sigma$ can be handled via similar arguments as in the proof of Theorem \ref{Thm:quantrelent-limdist}. 
 We  begin by noting that the following variational form 
holds for the sandwiched R\'{e}nyi  divergence as given in \cite{muller2013quantum} (see \cite{Berta-2018} for a further application of this form in generalizing sandwiched R\'{e}nyi  divergence to infinite dimensional quantum settings):
 \begin{align}
 \sanddiv{\rho}{\sigma}{\alpha}&:=\frac{\alpha}{\alpha-1} \log \norm{\rho^{\frac 12 }\sigma^{\frac{\bar \alpha}{\alpha}}\rho^{\frac 12 }}_{\alpha} =\max_{\substack{\eta \geq 0, \\ \norm{\eta}_{\frac{\alpha}{\alpha-1} \leq 1}} } \sanddivvar{\rho}{\sigma}{\alpha}{\eta}, \label{eq:varexpsr}
 \end{align}
 where 
 \begin{align}
     \sanddivvar{\rho}{\sigma}{\alpha}{\eta}:=\big(\alpha/(\alpha-1)\big) \log \tr{\rho^{\frac 12 }\sigma^{\frac{\bar \alpha}{\alpha}}\rho^{\frac 12 } \eta}.\notag
 \end{align}
  The maximum above is achieved by 
  \begin{align}
 \eta=\eta^{\star}(\rho,\sigma,\alpha):=\big(\rho^{\frac 12 }\sigma^{\frac{\bar \alpha}{\alpha}}\rho^{\frac 12 }\big)^{\alpha-1}/\big\|\rho^{\frac 12 }\sigma^{\frac{\bar \alpha}{\alpha}}\rho^{\frac 12 }\big\|_{\alpha}^{\alpha-1}. \label{eq:optvarform}     
  \end{align}  
Defining $\eta_n^{\star}:=\eta^{\star}(\rho_n,\sigma_n,\alpha)$, we have
\begin{subequations}\label{eq:sandrenybnd}
 \begin{align}
   r_n \big( \sanddiv{\rho_n}{\sigma_n}{\alpha} -\sanddiv{\rho}{\sigma}{\alpha}\big) &\leq r_n  \big( \sanddivvar{\rho_n}{\sigma_n}{\alpha}{\eta_n^{\star}} -\sanddivvar{\rho}{\sigma}{\alpha}{\eta_n^{\star}}\big), \label{eq:sandrenyub}\\
     r_n \big( \sanddiv{\rho_n}{\sigma_n}{\alpha} -\sanddiv{\rho}{\sigma}{\alpha}\big) &\geq r_n  \big( \sanddivvar{\rho_n}{\sigma_n}{\alpha}{\eta^{\star}} -\sanddivvar{\rho}{\sigma}{\alpha}{\eta^{\star}}\big).\label{eq:sandrenylb}
 \end{align}
 \end{subequations}
We will show that the limit on the RHS of \eqref{eq:sandrenyub} and \eqref{eq:sandrenylb}  coincides with the RHS of \eqref{eq:sandrelent-twosample-alt}, thus proving the desired claim. 

To establish that the former limit equals the RHS of \eqref{eq:sandrelent-twosample-alt}, it is sufficient to show that
 \begin{align}
   & r_n \left(\tr{\rho_n^{\frac 12}\sigma_n^{\frac{\bar \alpha}{\alpha}} \rho_n^{\frac 12} \eta_n^{\star}}-\tr{\rho^{\frac 12}\sigma^{\frac{\bar \alpha}{\alpha}} \rho^{\frac 12}\eta_n^{\star}}\right)  \notag \\
   & \qquad \trightarrow{w} \frac{\tr{\Big(D[\rho^{\frac 12}](L_1) \sigma^{\frac{\bar \alpha}{\alpha}} \rho^{\frac 12}+ \rho^{\frac 12} \sigma^{\frac{\bar \alpha}{\alpha}} D[\rho^{\frac 12}](L_1)  +\rho^{\frac 12} D[\sigma^{\frac{\bar \alpha}{\alpha}}](L_2) \rho^{\frac 12}\Big) \big(\rho^{\frac 12} \sigma^{\frac{\bar \alpha}{\alpha}}\rho^{\frac 12} \big)^{\alpha-1}}}{\big\|\rho^{\frac 12 }\sigma^{\frac{\bar \alpha}{\alpha}}\rho^{\frac 12 }\big\|_{\alpha}^{\alpha-1}}. \label{eq:sandwichreninterm} 
 \end{align}
 Then,  the functional delta-method (specifically \cite[Theorem 3.9.5 and Lemma 3.9.7]{AVDV-book})  applied to the continuously differentiable function $x \mapsto \alpha \log x/(\alpha-1)$ at $x=\norm{\rho^{\frac 12 }\sigma^{\frac{\bar \alpha}{\alpha}}\rho^{\frac 12 }}_{\alpha}>0$  yields 
\begin{align}
  &  r_n  \big( \sanddivvar{\rho_n}{\sigma_n}{\alpha}{\eta_n^{\star}} -\sanddivvar{\rho}{\sigma}{\alpha}{\eta_n^{\star}}\big) \notag \\
    &\trightarrow{w} \frac{\alpha}{\alpha-1} \frac{\tr{\Big(D[\rho^{\frac 12}](L_1) \sigma^{\frac{\bar \alpha}{\alpha}} \rho^{\frac 12}+ \rho^{\frac 12} \sigma^{\frac{\bar \alpha}{\alpha}} D[\rho^{\frac 12}](L_1)  +\rho^{\frac 12} D[\sigma^{\frac{\bar \alpha}{\alpha}}](L_2) \rho^{\frac 12}\Big) \big(\rho^{\frac 12} \sigma^{\frac{\bar \alpha}{\alpha}}\rho^{\frac 12} \big)^{\alpha-1}}}{\big\|\rho^{\frac 12 }\sigma^{\frac{\bar \alpha}{\alpha}}\rho^{\frac 12 }\big\|_{\alpha}^{\alpha}}, \notag 
\end{align}
as desired. Further, invoking the  subsequence argument, it  suffices to prove \eqref{eq:sandwichreninterm} along a subsequence $(n_{k_j})_{j \in \NN}$ such that $\big((r_{n_{k_j}}(\rho_{n_{k_j}}-\rho),r_{n_{k_j}}(\sigma_{n_{k_j}}-\sigma)\big) \rightarrow (L_1,L_2)$ in trace-norm  a.s. 

 To show \eqref{eq:sandwichreninterm}, we  compute the Fr\'{e}chet derivatives of 
 \begin{align}  f(A_1,A_2)=A_1^{\frac 12}A_2^{\frac{\bar \alpha}{\alpha}} A_1^{\frac 12}. \notag
 \end{align} Consider $\alpha \in (0.5,1) \cup (1,\infty)$. Since $\bar \alpha/\alpha \in (0,1) \cup (-1,0)$,  the following integral representations given in \cite[Lemma 2.8]{Carlen-2010} are relevant for our purpose:  
\begin{align}
A^{\alpha}&=c_{\alpha}\int_{0}^{\infty} \tau ^{\alpha} \left(\frac{1}{\tau I}-\frac{1}{\tau I+A}\right)d\tau,~~ \alpha \in (0,1),\notag \\
A^{\alpha}&=c_{\alpha+1}\int_{0}^{\infty}  \frac{\tau ^{\alpha}}{\tau I+A} d\tau, ~~\alpha \in (-1,0). \notag
\end{align}
 We have via the chain and product rule for Fr\'{e}chet derivatives that
\begin{align*}
&D^{(1,0)}[f(A_1,A_2)](H)=c_{\frac 12} \int_{0}^{\infty} \tau^{\frac 12} (\tau I+A_1)^{-1} H (\tau I+A_1)^{-1}  d\tau A_2^{\frac{\bar \alpha}{\alpha}}A_1^{\frac 12} \notag \\
&\qquad \qquad \qquad \qquad \qquad \qquad \qquad +c_{\frac 12} A_1^{\frac 12} A_2^{\frac{\bar \alpha}{\alpha}}
\int_{0}^{\infty} \tau^{\frac 12} (\tau I+A_1)^{-1} H (\tau I+A_1)^{-1}  d\tau
,  \\
&D^{(0,1)}[f(A_1,A_2)](H)=-c_{\frac{\bar \alpha}{\alpha}+1} A_1^{\frac 12} \left[\int_{0}^{\infty} \tau^{\frac{\bar \alpha}{\alpha}} (\tau I+A_2)^{-1} H (\tau I+A_2)^{-1}  d\tau \right] A_1^{\frac 12},  \\
&D^{(2,0)}[f(A_1,A_2)](H_1,H_2)= -c_{\frac 12} \bigg[\int_0^{\infty} \tau^{\frac 12}  (\tau I+A_1)^{-1}H_1  (\tau I+A_1)^{-1}H_2  (\tau I+A_1)^{-1} d\tau  \\
& \qquad \qquad\qquad \qquad \qquad\qquad \quad + \int_0^{\infty} \tau^{\frac 12}  (\tau I+A_1)^{-1}H_2  (\tau I+A_1)^{-1}H_1  (\tau I+A_1)^{-1} d\tau \bigg] A_2^{\frac{\bar \alpha}{\alpha}} A_1^{\frac 12}  \\
&\qquad \qquad\qquad \qquad \qquad\qquad \quad -c_{\frac 12} A_1^{\frac 12}A_2^{\frac{\bar \alpha}{\alpha}} \bigg[\int_0^{\infty} \tau^{\frac 12}  (\tau I+A_1)^{-1}H_1  (\tau I+A_1)^{-1}H_2  (\tau I+A_1)^{-1} d\tau  \notag \\
&\qquad \qquad\qquad \qquad \qquad\qquad \quad + \int_0^{\infty} \tau^{\frac 12}  (\tau I+A_1)^{-1}H_2  (\tau I+A_1)^{-1}H_1  (\tau I+A_1)^{-1} d\tau \bigg] \notag \\
&\qquad \qquad\qquad \qquad \qquad\qquad \quad  +c_{\frac 12}^2 \bigg[\int_0^{\infty} \tau^{\frac 12}  (\tau I+A_1)^{-1}H_1  (\tau I+A_1)^{-1} d\tau \bigg] A_2^{\frac{\bar \alpha}{\alpha}} \notag \\
& \qquad \qquad\qquad \qquad \qquad\qquad \qquad \quad  \bigg[ \int_0^{\infty} \tau^{\frac 12}  (\tau I+A_1)^{-1}H_2  (\tau I+A_1)^{-1} d\tau\bigg] \notag \\
&\qquad \qquad\qquad \qquad \qquad\qquad \quad  +c_{\frac 12}^2 \bigg[\int_0^{\infty} \tau^{\frac 12}  (\tau I+A_1)^{-1}H_2  (\tau I+A_1)^{-1} d\tau \bigg] A_2^{\frac{\bar \alpha}{\alpha}} \notag \\
& \qquad \qquad\qquad \qquad \qquad\qquad \qquad \quad  \bigg[ \int_0^{\infty} \tau^{\frac 12}  (\tau I+A_1)^{-1}H_1  (\tau I+A_1)^{-1} d\tau\bigg], \notag \\
&D^{(0,2)}[f(A_1,A_2)](H_1,H_2)= c_{\frac{\bar \alpha}{\alpha}+1} A_1^{\frac 12} \bigg[\int_0^{\infty} \tau^{\frac{\bar \alpha}{\alpha}}  (\tau I+A_2)^{-1}H_1  (\tau I+A_2)^{-1}H_2  (\tau I+A_2)^{-1} d\tau  \\
& \qquad \qquad\qquad \qquad \qquad\qquad \quad   + \int_0^{\infty} \tau^{\frac{\bar \alpha}{\alpha}}  (\tau I+A_2)^{-1}H_2  (\tau I+A_2)^{-1}H_1  (\tau I+A_2)^{-1} d\tau \bigg] A_1^{\frac 12} ,   \\
&D^{(1,1_+)}[f(A_1,A_2)](H_1,H_2)=-c_{\frac 12}  c_{\frac{\bar \alpha}{\alpha}+1}\bigg[\int_{0}^{\infty} \tau^{\frac 12}  (\tau I+A_1)^{-1}H_1  (\tau I+A_1)^{-1} d\tau \bigg]  \notag \\
&\qquad \qquad \qquad \qquad\qquad \qquad\qquad \qquad \bigg[\int_{0}^{\infty} \tau^{\frac{\bar \alpha}{\alpha}}  (\tau I+A_2)^{-1}H_2  (\tau I+A_2)^{-1} d\tau \bigg] A_1^{\frac 12}  \notag \\
&\qquad \qquad \qquad \qquad\qquad \qquad\qquad  -c_{\frac 12}  c_{\frac{\bar \alpha}{\alpha}+1} A_1^{\frac 12} \bigg[\int_{0}^{\infty} \tau^{\frac{\bar \alpha}{\alpha}}  (\tau I+A_2)^{-1}H_2  (\tau I+A_2)^{-1} d\tau \bigg]  \notag \\
&\qquad \qquad \qquad \qquad\qquad \qquad\qquad \qquad \bigg[\int_{0}^{\infty} \tau^{\frac 12}  (\tau I+A_1)^{-1}H_1  (\tau I+A_1)^{-1} d\tau \bigg] ,  \\
&D^{(1_+,1)}[f(A_1,A_2)](H_1,H_2)=-c_{\frac 12}  c_{\frac{\bar \alpha}{\alpha}+1}\bigg[\int_{0}^{\infty} \tau^{\frac 12}  (\tau I+A_1)^{-1}H_2  (\tau I+A_1)^{-1} d\tau \bigg]  \notag \\
&\qquad \qquad \qquad \qquad\qquad \qquad\qquad \qquad \bigg[\int_{0}^{\infty} \tau^{\frac{\bar \alpha}{\alpha}}  (\tau I+A_2)^{-1}H_1  (\tau I+A_2)^{-1} d\tau \bigg] A_1^{\frac 12}  \notag \\
&\qquad \qquad \qquad \qquad\qquad \qquad\qquad  -c_{\frac 12}  c_{\frac{\bar \alpha}{\alpha}+1} A_1^{\frac 12} \bigg[\int_{0}^{\infty} \tau^{\frac{\bar \alpha}{\alpha}}  (\tau I+A_2)^{-1}H_1  (\tau I+A_2)^{-1} d\tau \bigg]  \notag \\
&\qquad \qquad \qquad \qquad\qquad \qquad\qquad \qquad \bigg[\int_{0}^{\infty} \tau^{\frac 12}  (\tau I+A_1)^{-1}H_2  (\tau I+A_1)^{-1} d\tau \bigg] .
\end{align*}

\medskip
 
Then,  substituting the  expressions for Fr\'{e}chet derivatives derived above in \eqref{eq:taylor-op} with $B_1=\rho_n$, $B_2=\sigma_n$, $A_1=\rho$, and $A_2=\sigma$ leads to
\begin{align}
\rho_n^{\frac 12}\sigma_n^{\frac{\bar \alpha}{\alpha}} \rho_n^{\frac 12} \eta_n^{\star}&=\rho^{\frac 12}\sigma^{\frac{\bar \alpha}{\alpha}} \rho^{\frac 12}\eta_n^{\star} + c_{\frac 12} \bigg[\int_{0}^{\infty} \tau^{\frac 12} (\tau I+\rho)^{-1} (\rho_n-\rho) (\tau I+\rho)^{-1}  d\tau \bigg]\sigma^{\frac{\bar \alpha}{\alpha}}\rho^{\frac 12} \eta_n^{\star} \notag \\
&\quad+c_{\frac 12} \rho^{\frac 12} \sigma^{\frac{\bar \alpha}{\alpha}} \bigg[\int_{0}^{\infty} \tau^{\frac 12}  (\tau I+\rho)^{-1} (\rho_n-\rho) (\tau I+\rho)^{-1}  d\tau \bigg]  \eta_n^{\star} \notag \\
&\quad-c_{\frac{1}{\alpha}} \rho^{\frac 12} \bigg[\int_{0}^{\infty} \tau^{\frac{\bar \alpha}{\alpha}} (\tau I+\sigma)^{-1} (\sigma_n-\sigma)   (\tau I+\sigma)^{-1}  d\tau \bigg] \rho^{\frac 12} \eta_n^{\star} +R_n\eta_n^{\star}, \label{eq:taylorexpsandren}
\end{align}
where $R_n:=R_{1,n}+R_{2,n}+R_{3,n}$, and with $v(\rho_n,\rho,\tau,t):=\big(\tau I+(1-t)\rho+t\rho_n\big)^{-1}$, 
\begin{align}
  R_{1,n} 
 &:=-2c_{\frac 12} \int_0^1 (1-t)
   \int_0^{\infty} \tau^{\frac 12}  v(\rho_n,\rho,\tau,t)(\rho_n-\rho)  v(\rho_n,\rho,\tau,t)(\rho_n-\rho)  v(\rho_n,\rho,\tau,t) d\tau 
    \notag \\
&\qquad \qquad \qquad \qquad\qquad \qquad\qquad \qquad \big( (1-t)\sigma+t\sigma_n\big)^{\frac{\bar \alpha}{\alpha}} \big( (1-t)\rho+t\rho_n\big)^{\frac 12}   dt \notag \\
& -2c_{\frac 12} \mspace{-2 mu}\int_0^1 \mspace{-2 mu}(1-t)
   \big( (1-t)\rho+t\rho_n\big)^{\frac 12}\mspace{-2 mu} \big( (1-t)\sigma+t\sigma_n\big)^{\frac{\bar \alpha}{\alpha}}  \notag \\
   &\qquad \qquad\qquad \qquad\qquad \qquad\int_0^{\infty} \mspace{-2 mu}\tau^{\frac 12}  v(\rho_n,\rho,\tau,t)(\rho_n-\rho)  v(\rho_n,\rho,\tau,t)(\rho_n-\rho) v(\rho_n,\rho,\tau,t) d\tau dt  \notag \\
   & +4c_{\frac 12}^2 \int_0^1 (1-t) \bigg[\int_0^{\infty} \tau^{\frac 12}  v(\rho_n,\rho,\tau,t) (\rho_n-\rho) v(\rho_n,\rho,\tau,t) d\tau \bigg] \big( (1-t)\sigma+t\sigma_n\big)^{\frac{\bar \alpha}{\alpha}} \notag \\
   &\qquad \qquad\qquad \qquad\qquad \qquad \qquad \qquad\qquad \bigg[ \int_0^{\infty} \tau^{\frac 12}  v(\rho_n,\rho,\tau,t)(\rho_n-\rho) v(\rho_n,\rho,\tau,t) d\tau\bigg] dt,\notag \\
 R_{2,n}  &:=2c_{\frac{1}{\alpha}} \int_0^1 (1-t) \big( (1-t)\rho+t\rho_n\big)^{\frac 12}  \bigg[\int_0^{\infty} \tau^{\frac{\bar \alpha}{\alpha}}   v(\sigma_n,\sigma,\tau,t)(\sigma_n-\sigma)  v(\sigma_n,\sigma,\tau,t)(\sigma_n-\sigma)    \notag \\
 & \qquad \qquad\qquad \qquad\qquad \qquad  \qquad \qquad\qquad \qquad\qquad \qquad v(\sigma_n,\sigma,\tau,t) d\tau \bigg] \big( (1-t)\rho+t\rho_n\big)^{\frac 12} dt, \notag \\
  R_{3,n}  &:=-2c_{\frac{1}{\alpha}} c_{\frac 12} \int_0^1 (1-t) \bigg[\int_{0}^{\infty} \tau^{\frac 12}  v(\rho_n,\rho,\tau,t)(\rho_n-\rho)  v(\rho_n,\rho,\tau,t) d\tau \bigg] \notag \\
  &\qquad \qquad\qquad \qquad\qquad \quad\bigg[\int_{0}^{\infty} \tau^{\frac{\bar \alpha}{\alpha}} v(\sigma_n,\sigma,\tau,t)  (\sigma_n-\sigma) v(\sigma_n,\sigma,\tau,t) d\tau \bigg] \big( (1-t)\rho+t\rho_n\big)^{\frac 12} dt \notag \\
& \quad -2c_{\frac{1}{\alpha}} c_{\frac 12} \int_0^1 \mspace{-4 mu}(1-t) (1-t)\rho+t\rho_n\big)^{\frac 12} \bigg[\int_{0}^{\infty} \mspace{-4 mu}\tau^{\frac 12}  v(\rho_n,\rho,\tau,t)(\rho_n-\rho)  v(\rho_n,\rho,\tau,t) d\tau \bigg] \notag \\
& \qquad \qquad\qquad \qquad\qquad \qquad \qquad\qquad \quad\qquad\bigg[\int_{0}^{\infty} \tau^{\frac{\bar \alpha}{\alpha}} v(\sigma_n,\sigma,\tau,t)  (\sigma_n-\sigma) v(\sigma_n,\sigma,\tau,t) d\tau \bigg] dt.  \notag 
\end{align} 
Multiplying by $r_n$ and taking limits, the desired claim will follow provided $\tr{r_nR_n \eta_n^{\star}} \rightarrow 0$ along the subsequence $(n_{k_j})_{j \in \NN}$ mentioned above. To show this, we require an interchange of limits and integral which can be justified via the uniform integrability conditions stated below:
\begin{align}
&\norm{r_nR_{1,n} \eta_n^{\star}}_1 \notag \\
&\lesssim_{d,\alpha}  \norm{r_n^{\frac 12}(\rho_n-\rho)}_1^2
   \bigg[\int_0^{\infty} \tau^{\frac 12}  \norm{v(\rho_n,\rho,\tau,t)}_{\infty}^3  d\tau \bigg]
  \norm{(1-t)\sigma+t\sigma_n}_1^{\frac{\bar \alpha}{\alpha}}  \norm{(1-t)\rho+t\rho_n}_1^{\frac 12} \norm{\eta_n^{\star}}_1 \notag \\
  &\qquad \qquad\qquad\qquad+ \norm{r_n^{\frac 12}(\rho_n-\rho)}_1^2 \norm{(1-t)\sigma+t\sigma_n}_1^{\frac{\bar \alpha}{\alpha}}    \bigg[\int_0^{\infty} \tau^{\frac 12}  \norm{v(\rho_n,\rho,\tau,t)}_{\infty}^2  d\tau \bigg]^2 \norm{\eta_n^{\star}}_1, \notag \\
 & \norm{r_nR_{2,n} \eta_n^{\star}}_1 \lesssim_{d,\alpha}\mspace{-4 mu} \norm{(1-t)\rho+t\rho_n}_1  \norm{r_n^{\frac 12}(\sigma_n-\sigma)}_1^2  \bigg[\int_0^{\infty} \mspace{-4 mu}\tau^{^{\frac{\bar \alpha}{\alpha}}}  \norm{v(\sigma_n,\sigma,\tau,t)}_{\infty}^3  d\tau \bigg] \norm{\eta_n^{\star}}_1,\notag \\
 &\norm{r_nR_{3,n} \eta_n^{\star}}_1 \lesssim_{d,\alpha} \norm{r_n^{\frac 12}(\rho_n-\rho)}_1 \norm{r_n^{\frac 12}(\sigma_n-\sigma)}_1 \norm{(1-t)\rho+t\rho_n}_1^{\frac 12}  \bigg[\int_0^{\infty} \tau^{\frac 12}  \norm{v(\rho_n,\rho,\tau,t)}_{\infty}^2  d\tau \bigg] \notag \\
 & \qquad \qquad \qquad\qquad\qquad\qquad \qquad\qquad\qquad\qquad \qquad\qquad\qquad\bigg[\int_0^{\infty} \tau^{\frac{\bar \alpha}{\alpha}}  \norm{v(\sigma_n,\sigma,\tau,t)}_{\infty}^2  d\tau \bigg] \norm{\eta_n^{\star}}_1.\notag
\end{align}
To obtain these bounds, we used H\"{o}lders inequality similar to \eqref{eq:applyHolderineq} along with \eqref{eq:cocavityeig}, which applies due to concavity of the map $x \mapsto x^{\bar \alpha/\alpha}$ for  $\bar \alpha/\alpha  \in (-1,0) \cup (0,1) $. 
Hence, analogous to \eqref{eq:bochnerint}, the integrals above are finite, and 
\begin{align}
\norm{r_{n_{k_j}}R_{1,n_{k_j}} \eta_{n_{k_j}}^{\star}}_1 &\lesssim_{d, \rho,\sigma,\alpha} \norm{r_{n_{k_j}}^{\frac 12}(\rho_{n_{k_j}}-\rho)}_1^2, \notag \\
\norm{r_{n_{k_j}}R_{2,n_{k_j}} \eta_{n_{k_j}}^{\star}}_1 &\lesssim_{d, \rho,\sigma,\alpha} \norm{r_{n_{k_j}}^{\frac 12}(\rho_{n_{k_j}}-\rho)}_1^2, \notag \\
\norm{r_{n_{k_j}}R_{3,n_{k_j}} \eta_{n_{k_j}}^{\star}}_1 &\lesssim_{d, \rho,\sigma,\alpha} \norm{r_{n_{k_j}}^{\frac 12}(\rho_{n_{k_j}}-\rho)}_1 \norm{r_{n_{k_j}}^{\frac 12}(\sigma_{n_{k_j}}-\sigma)}_1, \notag
\end{align}
holds a.s. along the subsequence $(n_{k_j})_{j \in \NN}$. From this,  the desired integrability follows from those of the terms in the RHS of the equations above, and moreover, $\operatorname{Tr}\big[r_{n_{k_j}}R_{n_{k_j}}\eta_{n_{k_j}}^{\star}\big] \rightarrow 0$. Also, note that since   $\rho_{n_{k_j}} \rightarrow \rho$ and $\sigma_{n_{k_j}} \rightarrow \sigma$ a.s. in trace norm, 
$ \eta^{\star}_{k_j}$ converges a.s. in trace norm  to $ \eta^{\star}$ due to $\rho,\sigma>0$. This is because for all sufficiently large $j$ (depending on the realizations), the eigenvalues (and eigenvectors) of both $\rho_{n_{k_j}}$ and $\sigma_{n_{k_j}}$ are  arbitrarily close to that of $\rho$ and $\sigma$, respectively, which are all bounded away from zero. Consequently, 
$r_n \big(\operatorname{Tr}\big[\rho_n^{\frac 12}\sigma_n^{\frac{\bar \alpha}{\alpha}} \rho_n^{\frac 12} \eta_n^{\star}\big]-\operatorname{Tr}\big[\rho^{\frac 12}\sigma^{\frac{\bar \alpha}{\alpha}} \rho^{\frac 12} \eta_n^{\star}\big]\big)$ converges to the RHS of \eqref{eq:sandwichreninterm} along  $(n_{k_j})_{j \in \NN}$. In a similar vein,  $r_n \big(\operatorname{Tr}\big[\rho_n^{\frac 12}\sigma_n^{\frac{\bar \alpha}{\alpha}} \rho_n^{\frac 12} \eta^{\star}\big]-\operatorname{Tr}\big[\rho^{\frac 12}\sigma^{\frac{\bar \alpha}{\alpha}} \rho^{\frac 12}\eta^{\star}\big]\big) $ converges a.s. to RHS of \eqref{eq:sandwichreninterm} along the same subsequence, thus proving \eqref{eq:sandwichreninterm} and completing the proof of \eqref{eq:sandrelent-twosample-alt} when $\alpha \in (0.5,1) \cup (1,\infty)$. The proof for $\alpha=0.5$ can be shown similar to above by applying Taylor's theorem to $f(A_1,A_2)=A_1^{\frac 12}A_2A_1^{\frac 12}$, for which the Fr\'{e}chet derivatives are relatively easier to compute.

\medskip

To see  that \eqref{eq:sandrelent-twosample-alt} simplifies to \eqref{eq:petzrelent-twosample-alt-comm} when all relevant operators commute, observe that $D[\rho^{\frac 12}](L_1)=L_1 \rho^{\frac 12}/2$ and $D[\sigma^{\frac{\bar \alpha}{\alpha}}](L_2)=(\bar \alpha/\alpha) L_2 \sigma^{\frac{\bar \alpha}{\alpha}-1}$, and hence
\begin{align}
    \Big(D[\rho^{\frac 12}](L_1) \sigma^{\frac{\bar \alpha}{\alpha}} \rho^{\frac 12}+ \rho^{\frac 12} \sigma^{\frac{\bar \alpha}{\alpha}} D[\rho^{\frac 12}](L_1)  +\rho^{\frac 12} D[\sigma^{\frac{\bar \alpha}{\alpha}}](L_2) \rho^{\frac 12}\Big) \big(\rho^{\frac 12} \sigma^{\frac{\bar \alpha}{\alpha}}\rho^{\frac 12} \big)^{\alpha-1}=L_1 \sigma^{\bar \alpha} \rho^{-\bar \alpha}+L_2 \frac{\bar \alpha}{\alpha} \sigma^{-\alpha} \rho^{\alpha}. \notag 
\end{align}
Substituting this in \eqref{eq:sandrelent-twosample-alt}  and noting that $\big\|\rho^{\frac 12 }\sigma^{\frac{\bar \alpha}{\alpha}}\rho^{\frac 12 }\big\|_{\alpha}^{\alpha}=\tr{\rho^{\alpha}\sigma^{\bar \alpha}}$ proves the claim. 
\subsection{Proof of Corollary \ref{cor:limdistfidel}}
By setting $\alpha=0.5$ in \eqref{eq:sandrelent-twosample-alt}, we obtain
\begin{align}
   & r_n \Big(\sanddiv{\rho_n}{\sigma_n}{\frac 12}-\sanddiv{\rho}{\sigma}{\frac 12}\Big)\trightarrow{w}  \frac{-\tr{\Big(D[\rho^{\frac 12}](L_1) \sigma \rho^{\frac 12}+ \rho^{\frac 12} \sigma D[\rho^{\frac 12}](L_1)  +\rho^{\frac 12} L_2 \rho^{\frac 12}\Big) \big(\rho^{\frac 12} \sigma\rho^{\frac 12} \big)^{-\frac 12}}}{\sqrt{F(\rho,\sigma)}}.\notag
\end{align}
The claim for fidelity then follows by noting that $F(\rho,\sigma)=e^{-\sanddiv{\rho}{\sigma}{1/2}}$, and applying the functional delta method to the above equation for the map $x \mapsto e^{-x}$ at $x=\sanddiv{\rho}{\sigma}{1/2}$.

Next, consider  max-divergence given in \eqref{eq:rendivinft} which corresponds to infinite-order sandwiched R\'{e}nyi divergence. The variational form in \eqref{eq:varexpsr} with $\alpha=\infty$ becomes
\begin{align}
 \mathsf{D}_{\max}(\rho\|\sigma)&=\max_{\eta \geq 0: \mspace{1 mu}\norm{\eta}_1 \leq 1}  \mathsf{D}_{\max}(\rho\|\sigma;\eta), \notag 
 \end{align}
 where 
 \begin{align}
     \mathsf{D}_{\max}(\rho\|\sigma;\eta):= \log \tr{\rho^{\frac 12 }\sigma^{-1}\rho^{\frac 12 } \eta}.\notag
 \end{align}
  The maximum above is achieved by $\eta^{\star}=\Pi_{\max}$, where $\Pi_{\max}$ is the eigenprojection corresponding to the maximal eigenvalue of $\rho^{1/2 }\sigma^{-1}\rho^{1/2 }$. The rest of the proof is similar to that of Theorem \ref{Thm:sandrelent-limdist} and proceeds by arguing that 
  \begin{align}
\eta_n^{\star}=\Pi_{n,\max} \rightarrow \Pi_{\max},\notag
  \end{align}
  a.s. in trace norm,  
 \begin{align}
   & r_n \left(\tr{\rho_n^{\frac 12}\sigma_n^{-1} \rho_n^{\frac 12} \Pi_{n,\max}}-\tr{\rho^{\frac 12}\sigma^{-1} \rho^{\frac 12}\Pi_{n,\max}}\right)  \notag \\
   & \qquad \trightarrow{w} \tr{\Big(D[\rho^{\frac 12}](L_1) \sigma^{-1} \rho^{\frac 12}+ \rho^{\frac 12} \sigma^{-1} D[\rho^{\frac 12}](L_1)  +\rho^{\frac 12} D[\sigma^{-1}](L_2) \rho^{\frac 12}\Big) \Pi_{\max}},\notag \\
   &r_n \left(\tr{\rho_n^{\frac 12}\sigma_n^{-1} \rho_n^{\frac 12} \Pi_{\max}}-\tr{\rho^{\frac 12}\sigma^{-1} \rho^{\frac 12}\Pi_{\max}}\right)  \notag \\
   & \qquad \trightarrow{w} \tr{\Big(D[\rho^{\frac 12}](L_1) \sigma^{-1} \rho^{\frac 12}+ \rho^{\frac 12} \sigma^{-1} D[\rho^{\frac 12}](L_1)  +\rho^{\frac 12} D[\sigma^{-1}](L_2) \rho^{\frac 12}\Big) \Pi_{\max}}, \notag
 \end{align}
 where $\Pi_{n,\max}$ is the eigenprojection corresponding to the maximal eigenvalue of $\rho_n^{1/2 }\sigma_n^{-1}\rho_n^{1/2 }$.
 These together imply that 
 \begin{align}
       & r_n \left(e^{\mathsf{D}_{\max}(\rho_n\|\sigma_n)}-e^{\mathsf{D}_{\max}(\rho\|\sigma)}\right)  \notag \\
   & \qquad \trightarrow{w} \tr{\Big(D[\rho^{\frac 12}](L_1) \sigma^{-1} \rho^{\frac 12}+ \rho^{\frac 12} \sigma^{-1} D[\rho^{\frac 12}](L_1)  +\rho^{\frac 12} D[\sigma^{-1}](L_2) \rho^{\frac 12}\Big) \Pi_{\max}}.\notag
 \end{align}
 Then, applying the functional delta-method to $x \mapsto \log x$ at $x=e^{\mathsf{D}_{\max}(\rho\|\sigma)}$ yields the desired claim. 
\subsection{Proof of Theorem \ref{Thm:quantrelent-limdist-inf}}\label{Thm:quantrelent-limdist-inf-proof}
We will prove the claim assuming $\rho,\sigma>0$. The general case $\rho \ll \sigma$ follows using similar arguments as in the proof of Theorem \ref{Thm:quantrelent-limdist}. Following the proof of Theorem \ref{Thm:quantrelent-limdist}, it suffices to show that the terms in a Taylor's expansion of the quantum relative entropy are well-defined and the uniform integrability of the remainder terms are ensured.   Note that $\qrel{\rho_n}{\sigma_n}<\infty$ for all $n \in \NN$ and $\qrel{\rho}{\sigma}<\infty$ by assumption.

\noindent
 \textbf{Two-sample null:} 
Consider a  subsequence  $(n_{k_j})_{j \in \NN}$ such that 
$\big((r_{n_{k_j}}(\rho_{n_{k_j}}-\rho),r_{n_{k_j}}(\sigma_{n_{k_j}}-\rho)\big) \rightarrow (L_1,L_2)$ in $\norm{\cdot}_1$ a.s. This is again possible  by Skorokhods representation theorem  by separability of the space of trace-class operators (i.e., set of operators with finite trace norm). However, the key difference from the finite dimensional case is that the argument that there exists a  constant $0<c<1$ such that  $(1-t)\rho+t\sigma_{n_{k_j}} \geq c \rho$   for sufficiently large $j$ does not hold. Hence, we need a different argument to ensure uniform integrability of the terms which we provide next under the additional assumption that $\PP\big(\norm{\rho_n \sigma_n^{-1}}_{\infty}>c\big) \rightarrow 0$.

Since the high-level proof is similar to that of Theorem \ref{Thm:quantrelent-limdist}, we will only highlight the differences. To begin, we note that the steps from \eqref{eq:taylor-op} to \eqref{eq:tracevanishqrel} hold. Next, we show  uniform integrability of the terms  defined in \eqref{eq:unifbochinttermqrel}:
\begin{subequations}\label{}
\begin{align}
   &p_n(r_n,t):=(1-t) r_n(\rho_n-\rho) \int_{0}^{\infty} v(\rho_n,\rho,\tau,t) r_n(\rho_n-\rho) v(\rho_n,\rho,\tau,t) d\tau,  \\
 &  q_n(r_n,t):=(1-t) \big((1-t) \rho+t\rho_n \big) \int_{0}^{\infty} v(\rho_n,\rho,\tau,t) r_n (\rho_n-\rho) v(\rho_n,\rho,\tau,t) r_n(\rho_n-\rho)v(\rho_n,\rho,\tau,t) d\tau,  \\
  &  \tilde p_n(r_n,t):=(1-t)\big((1-t)\rho+t\rho_n\big) \int_{0}^{\infty} \mspace{-5mu}v(\sigma_n,\rho,\tau,t) r_n(\sigma_n-\rho) v(\sigma_n,\rho,\tau,t) r_n(\sigma_n-\rho)  v(\sigma_n,\rho,\tau,t) d\tau, \\
    &\tilde q_n(r_n,t):=(1-t)r_n(\rho_n-\rho)  \int_{0}^{\infty} v(\sigma_n,\rho,\tau,t) r_n(\sigma_n-\rho)  v(\sigma_n,\rho,\tau,t) d\tau, 
\end{align} 
\end{subequations}
where $v(\rho_n,\rho,\tau,t):=\big(\tau I+(1-t)\rho+t\rho_n\big)^{-1}$. 
The first and last term can be bounded as 
\begin{subequations}
\begin{align}
\norm{p_n(r_n,t)}_1 &\leq (1-t)\norm{r_n(\rho_n-\rho)}_1^2 \int_{0}^{\infty} \norm{v(\rho_n,\rho,\tau,t)}_{\infty}^2 d\tau \notag \\
& \lesssim_{\rho} (1-t)\norm{r_n(\rho_n-\rho)}_1^2 (1-t)^{-1} \notag \\
&= \norm{r_n(\rho_n-\rho)}_1^2,  \\
\norm{\tilde q_n(r_n,t)}_1 &\leq (1-t)\norm{r_n(\rho_n-\rho)}_1 \norm{r_n(\sigma_n-\rho)}_1 \int_{0}^{\infty} \norm{v(\sigma_n,\rho,\tau,t)}_{\infty}^2 d\tau \notag \\
& \lesssim_{\rho} \norm{r_n(\rho_n-\rho)}_1 \norm{r_n(\sigma_n-\rho)}_1. 
\end{align}
\end{subequations}
For the second and third terms, we have by using H\"{o}lder's inequality for Schatten-norms  and $\norm{A}_p \geq \norm{A}_q$ for any linear operator $A$ and $1 \leq p \leq q \leq \infty$,  that   
\begin{align}
 \norm{q_n(r_n,t)}_1 & \leq (1-t)\norm{r_n(\rho_n-\rho)}_1^2  \int_{0}^{\infty} \norm{\big((1-t) \rho+t\rho_n \big)v(\rho_n,\rho,\tau,t)}_{\infty}\norm{v(\rho_n,\rho,\tau,t)}_{\infty}^2 d\tau \notag \\
 & \stackrel{(a)}{\leq} (1-t)\norm{r_n(\rho_n-\rho)}_1^2  \int_{0}^{\infty} \norm{v(\rho_n,\rho,\tau,t)}_{\infty}^2 d\tau \notag \\
 &\lesssim_{\rho} (1-t)\norm{r_n(\rho_n-\rho)}_1^2 (1-t)^{-1} \notag \\
&= \norm{r_n(\rho_n-\rho)}_1^2, \notag \\
 \norm{\tilde p_n(r_n,t)}_1 & \leq (1-t)\norm{r_n(\sigma_n-\rho)}_1^2  \int_{0}^{\infty} \norm{\big((1-t) \rho+t\rho_n \big)v(\sigma_n,\rho,\tau,t)}_{\infty}\norm{v(\sigma_n,\rho,\tau,t)}_{\infty}^2 d\tau \notag \\
 &  \stackrel{(b)}{\leq} (1-t)\norm{r_n(\sigma_n-\rho)}_1^2 \norm{I+\rho_n \sigma_n^{-1}}_{\infty} \int_{0}^{\infty} \norm{v(\sigma_n,\rho,\tau,t)}_{\infty}^2 d\tau. \notag 
\end{align}
Here, we used  
\begin{align}
   v^{\frac 12}(\rho_n,\rho,\tau,t) \big((1-t) \rho+t\rho_n \big)v^{\frac 12}(\rho_n,\rho,\tau,t) &\leq I, \notag \\
   v^{\frac 12}(\sigma_n,\rho,\tau,t) \big((1-t) \rho+t\rho_n \big)v^{\frac 12}(\sigma_n,\rho,\tau,t) &\leq I+\rho_n \sigma_n^{-1}, \notag
\end{align}
 in $(a)$ and $(b)$, respectively. Since $\PP\big(\norm{\rho_n \sigma_n^{-1}}_{\infty}>c\big) \rightarrow 0$, by an application of Borel-Cantelli lemma, there exists  a subsequence  $(n_{k_j})_{j \in \NN}$ such that 
$\big((r_{n_{k_j}}(\rho_{n_{k_j}}-\rho),r_{n_{k_j}}(\sigma_{n_{k_j}}-\rho)\big) \rightarrow (L_1,L_2)$ in $\norm{\cdot}_1$ and $\big\|\rho_{n_{k_j}} \sigma_{n_{k_j}}^{-1}\big\|_{\infty} \leq c$ a.s. Hence, along this subsequence, we have
\begin{subequations}\label{unifbochintinf}
    \begin{align}
 &\norm{ p_{n_{k_j}}(r_{n_{k_j}},t)}_1 
      \lesssim_{\rho} \norm{  r_{n_{k_j}}(\rho_{n_{k_j}} -\rho)}_1^2 , \\
      &\norm{\tilde q_{n_{k_j}}(r_{n_{k_j}},t)}_1  \lesssim_{\rho} \norm{r_{n_{k_j}}(\rho_{n_{k_j}}-\rho)}_1 \norm{r_{n_{k_j}}(\sigma_{n_{k_j}}-\rho)}_1, \\
     & \norm{q_{n_{k_j}}(r_{n_{k_j}},t)}_1 
\lesssim_{\rho} \norm{  r_{n_{k_j}}(\rho_{n_{k_j}} -\rho )}_1^2, \\
&\norm{\tilde p_{n_{k_j}}(r_{n_{k_j}},t)}_1 \lesssim_{\rho,c}   \norm{r_{n_{k_j}}(\sigma_{n_{k_j}}-\rho)}_1^2 .     
\end{align}
\end{subequations}
 This verifies the required uniform integrability conditions. The rest of the proof is same as that of Theorem \ref{Thm:quantrelent-limdist}, and hence omitted.

\medskip

\noindent
 \textbf{Two-sample alternative:}
 Consider the expansions in \eqref{eq:taylexpqrelalt}. By taking trace, we obtain \eqref{eq:bndtrqrelalt}  using \eqref{eq:tracevanishqrel} as well as $\qrel{\rho_n}{\sigma_n}<\infty$ and $\qrel{\rho}{\sigma}<\infty$. We need to verify that the remaining terms in the expansion are well-defined and that the second-order terms satisfy a uniform integrability condition along a subsequence  $(n_{k_j})_{j \in \NN}$ such that 
$\big((r_{n_{k_j}}(\rho_{n_{k_j}}-\rho),r_{n_{k_j}}(\sigma_{n_{k_j}}-\rho)\big) \rightarrow (L_1,L_2)$ in $\norm{\cdot}_1$ and $\big\|\rho_{n_{k_j}} \sigma_{n_{k_j}}^{-1}\big\|_{\infty} \leq c$ a.s.  Note that since $\rho,\sigma>0$, we have
 \begin{align}
    & \norm{r_n(\rho_n-\rho)\left(\log \rho-\log \sigma\right)}_1 \leq \norm{r_n(\rho_n-\rho)}_1 \norm{\log \rho-\log \sigma}_{\infty}, \notag \\
   &  \norm{\rho \int_{0}^{\infty} \left(\tau I+\sigma\right)^{-1} r_n(\sigma_n-\sigma) \left(\tau I+\sigma\right)^{-1} d\tau }_1  \leq \norm{r_n(\sigma_n-\sigma)}_1\int_{0}^{\infty}\norm{(\tau I+\sigma)^{-1}}^2_{\infty} \lesssim_{\sigma} \norm{r_n(\sigma_n-\sigma)}_1. \notag
 \end{align}
 Hence, the first two terms in \eqref{eq:bndtrqrelalt} are well defined. Taking limits along the subsequence $(n_{k_j})_{j \in \NN}$ yields
 \begin{align}
  & \tr{r_{n_{k_j}}(\rho_{n_{k_j}}-\rho)\left(\log \rho-\log \sigma\right)-\rho \mspace{-2 mu}\int_{0}^{\infty} \mspace{-6 mu}\left(\tau I+\sigma\right)^{-1} r_{n_{k_j}}(\sigma_{n_{k_j}}-\sigma) \left(\tau I+\sigma\right)^{-1} d\tau   } \notag \\
   & \trightarrow{w} \tr{L_1\left(\log \rho-\log \sigma\right)-\rho \int_{0}^{\infty} \left(\tau I+\sigma\right)^{-1} L_2 \left(\tau I+\sigma\right)^{-1} d\tau}. \notag 
 \end{align}
It remains to show that $p_n(r_n,t),q_n(r_n,t),\bar p_n(r_n,t), \bar q_n(r_n,t)$ as defined in \eqref{eq:unifbochinttermqrel} and \eqref{eq:bochintaltqrel2} are uniformly integrable along the subsequence  $(n_{k_j})_{j \in \NN}$. But, this follows similar to \eqref{unifbochintinf}, completing the proof.   
\subsection{Proof of Proposition \ref{prop:limdisttomest}} \label{Sec:prop:limdisttomest-proof}
Recall that $\hat {\bm{s}}^{(n)}(\rho):=\big(\hat s^{(n)}_1(\rho),\ldots, \hat s^{(n)}_{d^2-1}(\rho)\big)$, where
 \begin{align}
      \hat s^{(n)}_{j}(\rho) &:= \frac 1n \sum_{k=1}^n \ind_{O_{k}(j,\rho) =+1}-\ind_{O_{k}(j,\rho) =-1 }, ~1 \leq j \leq d^2-1. \notag 
 \end{align}
 With $\gamma_0=I$ and $\hat s^{(n)}_{0}(\rho)=1$, we have 
 \begin{align}
  \sqrt{n}\big(\hat \rho_n-\rho\big) &=   \sqrt{n}\left(\hat \rho_n-\bar \rho_n\right)+\sqrt{n}\left( \bar \rho_n-\rho\right), \label{eq:trianglineqest}
 \end{align}
 where recall that
 \begin{align}
       \bar \rho_n &=  \frac {1}{d} \sum_{j=0}^{d^2-1} \hat s_{j}^{(n)}(\rho) \gamma_j. \notag 
 \end{align}
Note that the first term above can be written as
\begin{align}
    \sqrt{n}\left(\hat \rho_n-\bar \rho_n\right)= \ind_{\bar \rho_n \ngeq 0 }\sqrt{n} \left(\Pi_{\cS_d}(\bar \rho_n)-\frac {1}{d} \sum_{j=0}^{d^2-1} \hat s_{j}^{(n)}(\rho) \gamma_j\right). \notag
\end{align}
We have $\big|\hat s_{j}^{(n)}(\rho)\big| \leq 1$ for all $j,n$. Moreover,  the entries of $\gamma_j$ are bounded, and so is  $\Pi_{\cS_d}(\bar \rho_n)$ being a projection onto the space of density operators. From this,   it follows that  
\begin{align}
  \sqrt{n}\norm{\hat \rho_n-\bar \rho_n}_1 \lesssim_d \sqrt{n}. \notag
\end{align}
Consequently,
 \begin{align}
\EE\left[ \sqrt{n}\norm{\hat \rho_n-\frac {1}{d} \sum_{j=0}^{d^2-1} \hat s_{j}^{(n)} (\rho)\gamma_j}_1\right] \lesssim_d \PP\left(\bar \rho_n \ngeq 0 \right) \sqrt{n}. \label{eq:l1normexp}  
 \end{align} 
For $\bm{s}=(s_1,\ldots,s_{d^2-1}) \in \RR^{d^2-1}$, let $\omega(\bm{s}):=\frac 1d \left(I+\sum_{j=1}^{d^2-1} s_{j} \gamma_j\right)$ Note that the set $\cC:=\{\bm{s} \in \RR^{d^2-1}: \omega(\bm{s}) \geq 0\}$ is a convex set$\mspace{2 mu}$\footnote{This set has a simple characterization when $d=2$, given by  $\cC=\{\bm{s} \in \RR^{d^2-1}: \norm{\bm{s}}_2 \leq 1\}$. This is due to the fact that the eigenvalues of $\omega(\bm{s})$ are equal to $(1 \pm \norm{\bm{s}}_2)/2$. }. Recalling that $\bm{s}(\rho)=\big(s_1(\rho),\ldots, s_{d^2-1}(\rho)\big)$ with  $s_{j}(\rho)=\tr{\rho \gamma_j}$ and $\rho,\sigma >0$, we have  $\bm{s}(\rho), \bm{s}(\sigma) \in \mathrm{int}(\cC)$, where $\mathrm{int}(\cC)$ denotes the interior of $\cC$.  The set $\mathrm{int}(\cC)$ corresponds to $\bm{s}$ such that the eigenvalues $\omega(\bm{s})$  are strictly positive. Let $\{\lambda_i(\bm{s})\}_{i=1}^{d^2} $ denote the eigenvalues of $\omega(\bm{s})$ such that $\lambda_i(\bm{s}) \leq \lambda_j(\bm{s})$ when $j \leq i$. Note that these eigenvalues satisfy $\sum_{i=1}^{d^2}\lambda_i(\bm{s})=1$ and  are the roots of a characteristic polynomial of $\omega(\bm{s})$. 
Let 
\begin{align}
    \delta(\rho):=\min_{1 \leq i \leq d^2} \{\lambda_i(\bm{s}(\rho))/2\} \notag
\end{align}
and consider the event 
\begin{align}
    \cE_i:= \{\abs{\lambda_i\big(\hat {\bm{s}}^n(\rho)\big)- \lambda_i\big(\bm{s}(\rho)\big)} \geq \delta (\rho) \}. \notag
\end{align}
For $\rho,\sigma>0$, we have $ \delta(\rho) \wedge  \delta(\sigma)>0$. 
By the continuity of the roots of a polynomial as specified in \cite[Theorem 1.4]{Marden-1966}, there exists an $\epsilon_{\delta(\rho)} >0$ such that $\norm{\bm{s}-\tilde {\bm{s}}}_1 \leq \epsilon_{\delta(\rho)} $ implies that $\abs{\lambda_i(\bm{s})-\lambda_i(\tilde {\bm{s}})} \leq \delta(\rho) $ for all $1 \leq i \leq d^2$.     
Hence, 
\begin{align}
\{\bar \rho_n \ngeq 0\} \subseteq \cup_{i=1}^{d^2} \cE_i \subseteq \{\norm{\hat{ \bm{s}}^n(\rho)- \bm{s}(\rho)}_2 > \epsilon_{\delta(\rho)}\}. \notag    
\end{align}
Now, since $\EE\big[\hat {\bm{s}}^{(n)}(\rho)\big]=\bm{s}(\rho)$, we have  by an application of  Hoeffding's inequality that
 \begin{align}
\PP\left(\bar \rho_n \ngeq 0 \right) \leq  \PP\left( \norm{\hat{ \bm{s}}^n(\rho)- \bm{s}(\rho)}_2 \geq \epsilon_{\delta(\rho)} \right) \leq e^{-n c_i(\rho,d)}, \notag  
 \end{align}
where  $c_i(\rho,d)>0$ is some constant that depends on $\rho,d$. 
Consequently, the LHS of \eqref{eq:l1normexp} converges to zero which implies that  the first term in the RHS of \eqref{eq:trianglineqest} converges weakly to zero. 

Next, consider the second term in the RHS of \eqref{eq:trianglineqest}. Since the measurements for different Pauli operators are done on independent copies of $\rho$,   $\hat s^{(n)}_{j}(\rho)$ are independent across different $j$. Setting $X_k(j,\rho):= \ind_{O_{k}(j,\rho) =+1}-\ind_{O_{k}(j,\rho) =-1 }$ and  noting that $\EE[X_k(j,\rho)]=s_{j}(\rho)$,
we have  by the classical central limit theorem (CLT) that
\begin{align}
 \sqrt{n}\left( \frac {1}{d} \sum_{j=0}^{d^2-1} \hat s_{j}^{(n)}(\rho) \gamma_j-\rho\right)
  &=\sum_{j=1}^{d^2-1}   \frac{\gamma_j n^{-\frac 12}}{d}\sum_{k=1}^n \big( X_k(j,\rho)-s_{j}(\rho)  \big)  \trightarrow{w} \underbrace{\sum_{j=1}^{d^2-1}  \gamma_j Z_{j}(\rho)}_{L_{\rho}},\notag
 \end{align}
where $Z_{j}(\rho) \sim N\big(0,4s_{j}^+(\rho)s_{j}^-(\rho)/d^2\big) $ are independent for different $j$.  
From this and the fact that the first term in the RHS of \eqref{eq:trianglineqest} converges weakly to zero, an application of Slutsky's theorem  yields $\sqrt{n}(\hat \rho_n-\rho) \trightarrow{w} L_{\rho}$. Repeating the same arguments with $\rho,\bar \rho_n$ replaced by $\sigma,\bar \sigma_n$,  and noting that $\norm{I}_1/n \rightarrow 0$ leads to $  \sqrt{n}(\hat \sigma_n-\sigma) \trightarrow{w} L_{\sigma}$. Consequently,  $\big(\sqrt{n}(\hat \rho_n-\rho\big),\sqrt{n}(\hat \sigma_n-\sigma) \big)\trightarrow{w} (L_{\rho},L_{\sigma})$, where $L_{\rho}$ and $L_{\sigma}$ are independent  due to the independence of the measurements on $\rho$ and $\sigma$.  Moreover,  $\hat \rho_n \ll \hat \sigma_n  \ll \sigma$ and $\hat \rho_n \ll \rho \ll  \sigma$ since $\hat \rho_n \ll \hat \sigma_n$ and $\rho, \sigma >0$. 
Hence, Theorem \ref{Thm:quantrelent-limdist} applies and \eqref{eq:qrel-twosample-alt}  yields 
 \begin{align}
     \sqrt{n} \big(\qrel{\hat \rho_n}{\hat \sigma_n}-\qrel{\rho}{\sigma}\big) &\trightarrow{w}  \tr{L_{\rho} (\log \rho-\log \sigma )-\rho D[\log \sigma](L_{\sigma})} \notag \\
     &=\sum_{j=1}^{d^2-1}Z_{j}(\rho) \tr{\gamma_j(\log \rho-\log \sigma )} -Z_j(\sigma)\tr{\rho D[\log \sigma](\gamma_j) } \notag \\
     & \sim N(0, v_2^2(\rho,\sigma)), \notag
    \end{align}
    where $v_2^2(\rho,\sigma)$ is defined in Proposition \ref{prop:limdisttomest}. 
    This completes the proof.

    \subsection{Proof of Proposition \ref{prop:HTperf}}\label{prop:HTperf-proof}
  We will use Proposition \ref{prop:limdisttomest} to prove the claim. To that end, we first bound the relevant variances  $v_1^2(\rho_k,\sigma)$ for $k \in \cI$ given by 
  \begin{align}
    v_1^2(\rho_k,\sigma)&:=  \sum_{j=1}^{d^2-1}\frac{4s_{j}^+(\rho_k)s_{j}^-(\rho_k)}{d^2}\tr{\gamma_j(\log \rho_k -\log \sigma  )}^2.  \notag
  \end{align}
   Observe that $s_{j}^+(\rho_k)s_{j}^-(\rho_k) \leq 1/4$ since $0 \leq s_{j}^+(\rho_k)=1- s_{j}^-(\rho_k) \leq 1$. Hence, we have
    \begin{align}
     v_1^2(\rho_k,\sigma) &\leq    \frac{1}{d^2} \sum_{j=1}^{d^2-1}\tr{\gamma_j(\log\rho_k -\log \sigma)}^2 \notag \\
     & \leq     \frac{2}{d^2} \left(\sum_{j=1}^{d^2-1}\tr{\gamma_j\log \rho_k }^2+\tr{\gamma_j\log \sigma}^2\right)  \notag \\
    &\leq  \frac{2}{d^2} \left(\sum_{j=1}^{d^2-1}\norm{\gamma_j}_2^2\norm{\log\rho_k }_2^2+\norm{\gamma_j}_2^2\norm{\log \sigma}_2^2 \right)\notag \\
    &  \leq 4d^2 (\log b)^2 , \notag
    \end{align}
    where the second inequality uses $(a-b)^2 \leq 2(a^2+b^2)$ for $a,b \in \RR$, the penultimate inequality  follows by Cauchy-Schwarz, and the final inequality uses $\|\gamma_j\|_2^2 \leq d $ and $\|\log \sigma\|_2^2 \vee \|\log \rho_k \|_2^2 \leq d (\log b)^2$ for all $k \in \cI$. Then,
    \begin{align}
        \alpha_{i,n}\big(\cT_n^{\mathrm{tom}},\rho_i^{\otimes n}, \sigma_i^{\otimes n}\big)&=\PP\big(\hat D_n \notin (\epsilon_i+cn^{-\frac 12}, \epsilon_{i+1}+c n^{-\frac 12})|H=i\big) \notag \\
        &=\PP\big(\hat D_n-\qrel{\rho_i}{\sigma} \notin (\epsilon_i-\qrel{\rho_i}{\sigma}+c n^{-\frac 12}, \epsilon_{i+1}-\qrel{\rho_i}{\sigma}+c n^{-\frac 12})|H=i\big) \notag \\
        & \leq \PP\big(n^{\frac 12}\big(\hat D_n-\qrel{\rho_i}{\sigma}\big) \notin (n^{\frac 12}(\epsilon_i-\qrel{\rho_i}{\sigma})+c, c)|H=i\big), \notag
    \end{align}
    where the final inequality follows because $\qrel{\rho_i}{\sigma} \leq \epsilon_{i+1}$. 
Note that $n^{\frac 12}\big(\hat D_n-\qrel{\rho_i}{\sigma}\big) \trightarrow{w} W_1 \sim N(0,v_1^2(\rho_i,\sigma))$ given $H_i$ is the true hypothesis by \eqref{eq:qreltom-onesample-alt} in Proposition \ref{prop:limdisttomest}. Then, taking limits in the equation above and applying Portmanteaus theorem \cite[Theorem 2.1]{Billingsley-99}   yields
    \begin{align}
    \limsup_{n \rightarrow \infty} \alpha_{i,n}\big(\cT_n^{\mathrm{tom}},\rho_i^{\otimes n}, \sigma_i^{\otimes n}\big) &\leq \limsup_{n \rightarrow \infty} \PP\big(n^{\frac 12}\big(\hat D_n-\qrel{\rho_i}{\sigma}\big) \notin (n^{\frac 12}(\epsilon_i-\qrel{\rho_i}{\sigma})+c, c)|H=i\big) \notag \\
    & = \PP\big(W_1 \notin (-\infty, c)\big) \notag \\
    &=Q\left(c/v_1(\rho_i,\sigma)\right), \notag
    \end{align}
    where  $Q(x):=(2 \pi)^{-1/2}\int_{x}^{\infty}e^{-x^2} dx $ is the complementary error function.  
    Setting $c=2d Q^{-1}(\tau)|\log b|$, the RHS above is bounded by $\tau$ for all $i \in \cI$.

\begin{appendices}
\section{Proof of Lemma \ref{lem:multprojtr}} \label{sec:app-lem:multprojtr-proof}
To prove $(i)$, note that since $A \ll P$ with $P$ being a projection, we have $PAP=A$. Hence, 
\begin{align}
\tr{AB}=\tr{PAPB}=\tr{PAPBP},\notag
\end{align}
where the final inequality follows since $PAPB$ is trace-class and $P$ is bounded (being a projection), and using the fact that $\tr{MN}=\tr{NM}$ when $M$ and  $N$ are trace-class and bounded linear operators, respectively. 
\medskip

To prove $(ii)$, first consider the case that $\HH$ is finite dimensional, i.e., $\HH=\HH_d$. By the spectral theorem, we have $A=\sum_{i=1}^d \lambda_{i,A} \ket{e_{i,A}}\bra{e_{i,A}}$, where  $\{\lambda_{i,A}\}_{i=1}^d$ and $\{e_{i,A}\}_{i=1}^d$ are the set of eigenvalues and corresponding set of orthonormal eigenvectors of $A$, respectively. Similarly, let $B=\sum_{i=1}^d \lambda_{i,B} \ket{e_{i,B}}\bra{e_{i,B}}$ and $C=\sum_{i=1}^d \lambda_{i,C} \ket{e_{i,C}}\bra{e_{i,C}}$ be the spectral decomposition of $B$ and $C$, respectively.  Then, $B \leq A \leq C$ implies that for  all $ 1 \leq j \leq d$, 
\begin{align}
 \sum_{i=1}^d \lambda_{i,B} \abs{\langle e_{j,A}, e_{i,B}\rangle}^2 \leq  \bra{e_{j,A}} A \ket{e_{j,A}}= \lambda_{j,A}\leq \sum_{i=1}^d \lambda_{i,C} \abs{\langle e_{j,A}, e_{i,C}\rangle}^2. \notag 
\end{align}
Hence, we have
\begin{align}
  \norm{A}_1=  \sum_{j=1}^d \abs{\lambda_{j,A}} &\leq \sum_{j=1}^d \max \left\{\sum_{i=1}^d \abs{\lambda_{i,B}} \abs{\langle e_{j,A}, e_{i,B}\rangle}^2,\sum_{i=1}^d \abs{\lambda_{i,C}} \abs{\langle e_{j,A}, e_{i,C}\rangle}^2\right\} \notag \\
    & \leq \sum_{j=1}^d\sum_{i=1}^d \abs{\lambda_{i,B}} \abs{\langle e_{j,A}, e_{i,B}\rangle}^2 +\abs{\lambda_{i,C}}\abs{\langle e_{j,A}, e_{i,C}\rangle}^2 \notag \\
    &=\sum_{i=1}^d \abs{\lambda_{i,B}} +\abs{\lambda_{i,C}}=\norm{B}_1+\norm{C}_1, \label{eq:tracenormfinite}
\end{align}
where in the penultimate inequality, we upper bounded maximum of positive numbers by its sum, and in the penultimate equality, we  used that the  eigenvectors form an orthonormal basis implying  that 
\begin{align}
  \sum_{j=1}^d  \abs{\langle e_{j,A}, e_{i,B}\rangle}^2 =\sum_{j=1}^d \abs{\langle e_{j,A}, e_{i,C}\rangle}^2 =1,~ \forall~ 1 \leq i \leq d. \notag
\end{align}
For the case of separable $\HH$, let $(e_n)_{n \in \NN}$ be a sequence of orthonormal basis elements, and $(P_n)_{n \in \NN}$, with $P_n :=\sum_{i=1}^n \ket{e_i}\bra{e_i}$,  be an increasing sequence of orthogonal projections. Then, setting $A_n=P_nAP_n$, $B_n=P_nBP_n$,  $C_n=P_nCP_n$ and noting that $B_n \leq A_n \leq C_n$ , \eqref{eq:tracenormfinite} implies that $\norm{A_n}_1 \leq \norm{B_n}_1+\norm{C_n}_1$. Taking limits $n \rightarrow \infty$ and observing that $A_n \rightarrow A$, $B_n \rightarrow B$ and $C_n \rightarrow C$ in trace norm, completes the proof.    
\section{Limit distributions for Measured Relative Entropy} \label{Sec:limdist-mrel}
Here, we derive limit distributions for estimators of measured relative entropy with respect to a general class of measurements (see \cite{Donald1986,Hiai-Petz-1991,Bennett-1993,Rains-2001} for certain important classes).  Before stating our result, we need to introduce some terminology. Let $\cM$ denote a set of POVMs (see the books \cite{wilde2017quantum, Hayashi-book-2016, tomamichel2015quantum}), where a POVM here  refers to a set $M=\{M_i\}_{i \in \cI}$ of operators indexed by a discrete set  $\cI$  satisfying $0 \leq M_i \leq I$ and $\sum_{i \in \cI}M_i=I$. 
  The measured relative entropy between density operators $\rho$ and $\sigma$ with respect to  $\cM$ is 
\begin{align}
\mrel{\rho}{\sigma}:= \sup_{M \in \cM} \qrel{P_{\rho,M}}{P_{\sigma,M}}, \label{eq:mrelent}
\end{align}
where  $P_{\rho,M}$ denotes the probability measure defined via $P_{\rho,M}(i):=\tr{M_i\rho}$. 
In what follows, we will also use the notation 
$P_{A,M}(\cdot)$ for a general operator $A$ and POVM $M$, in which case it  is no longer necessarily a probability distribution. 

Each POVM  $M$ defines a measurement  (quantum to classical) channel given by $M(\rho):=\sum_{i \in \cI} \tr{M_i\rho} \ket{i}\bra{i}$, where  $\ket{i}$ denotes the $i^{th}$ computational basis element. We will use the same notation $M$ for the POVM and the measurement channel (linear superoperator)  induced by it, with the usage being evident from the context.  With this notation,  $\mrel{\rho}{\sigma}=\sup_{M \in \cM}\qrel{M(\rho)}{M(\sigma)}$. We  identify two POVMs $M$ and $\tilde M$ in $\cM$ if for every $\rho,\sigma \in \cS_d$, $P_{\rho,M}(\cdot)$ and $P_{\sigma,M}(\cdot)$ are equivalent (up to the same permutation) to $P_{\rho,\tilde M}(\cdot)$ and $P_{\sigma, \tilde M}(\cdot)$, respectively, since 
measured relative entropy remains the same with this identification. Also, we will restrict to $\cM$ such that $\abs{\cI} \leq m$ for all $M \in\M$ and some  $m \in \NN$. This is not a restriction when $\cM$ contains the set of all projective measurements (see \cite{Berta2015OnEntropies}). By adding zero matrices as necessary,  we may assume without loss of generality that $\cI=\{0,\ldots,m-1\}$. 
We will view $\cM$ as a subset of the space of linear superoperators equipped with operator norm topology.

Let $M^{\star}(\rho,\sigma,\cM)$ denote an optimizer achieving the supremum (if it exists) in \eqref{eq:mrelent}. The next result  characterizes  limit distribution for measured relative entropy estimation in terms of quantities induced by optimal measurement. 
\begin{theorem}
[Limit distribution for measured relative entropy]\label{Thm:measrelent-limdist}
 Let $\rho_n \ll \sigma_n \ll \sigma$, $\rho_n \ll \rho \ll \gg \sigma$, and $\cM$ be compact  such that $M^{\star}:=M^{\star}(\rho,\sigma,\cM)$ is unique with the identification of POVMs as mentioned above. 
 If $\big(r_n(\rho_n-\rho),r_n(\sigma_n-\sigma)\big) \trightarrow{w} (L_1,L_2)$, then
 \begin{align}
    r_n \big(\mrel{\rho_n}{\sigma_n}-\mrel{\rho}{\sigma}\big) \trightarrow{w}   \sum_{i \in \cI}P_{L_1,M^{\star}}(i) \log \frac{P_{\rho,M^{\star}}(i)}{P_{\sigma,M^{\star}}(i)}-\frac{P_{L_2,M^{\star}}(i)P_{\rho,M^{\star}}(i)}{P_{\sigma,M^{\star}}(i)}.\label{eq:mrel-twosample-alt}
\end{align}
\end{theorem}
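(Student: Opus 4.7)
The strategy is to sandwich $\mrel{\rho_n}{\sigma_n} - \mrel{\rho}{\sigma}$ between two quantities whose asymptotic behavior can be extracted from the classical KL-divergence limit result, namely the commutative case of Theorem \ref{Thm:quantrelent-limdist}. Let $F(M, \rho, \sigma) := \qrel{P_{\rho,M}}{P_{\sigma,M}}$, and for each $n$ let $M_n^{\star}$ attain the supremum in the definition of $\mrel{\rho_n}{\sigma_n}$; existence follows from compactness of $\cM$ and continuity of $F(\cdot, \rho_n, \sigma_n)$ on the support-compatible subset. By optimality of $M^{\star}$ and $M_n^{\star}$,
\begin{align*}
F(M^{\star}, \rho_n, \sigma_n) - F(M^{\star}, \rho, \sigma) \;\leq\; \mrel{\rho_n}{\sigma_n} - \mrel{\rho}{\sigma} \;\leq\; F(M_n^{\star}, \rho_n, \sigma_n) - F(M_n^{\star}, \rho, \sigma).
\end{align*}

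For the lower bound, $M^{\star}$ acts as a fixed quantum-to-classical channel, so the continuous mapping theorem yields $\big(r_n (P_{\rho_n, M^{\star}} - P_{\rho, M^{\star}}), r_n (P_{\sigma_n, M^{\star}} - P_{\sigma, M^{\star}})\big) \trightarrow{w} \big(P_{L_1, M^{\star}}, P_{L_2, M^{\star}}\big)$. The operator-level support assumptions transfer to the classical distributions (since $\rho_n \ll \sigma_n$ implies $P_{\rho_n, M^{\star}} \ll P_{\sigma_n, M^{\star}}$, and likewise $P_{\rho_n, M^{\star}} \ll P_{\rho, M^{\star}} \ll P_{\sigma, M^{\star}}$), so Corollary \ref{Cor:qrel-comm} applied to the discrete measures $P_{\rho_n, M^{\star}}, P_{\sigma_n, M^{\star}}$ gives
\begin{align*}
r_n \big(F(M^{\star}, \rho_n, \sigma_n) - F(M^{\star}, \rho, \sigma)\big) \trightarrow{w} \sum_{i \in \cI} P_{L_1, M^{\star}}(i) \log \frac{P_{\rho, M^{\star}}(i)}{P_{\sigma, M^{\star}}(i)} - \frac{P_{L_2, M^{\star}}(i) P_{\rho, M^{\star}}(i)}{P_{\sigma, M^{\star}}(i)},
\end{align*}
which is precisely the RHS of \eqref{eq:mrel-twosample-alt}.

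The main obstacle is the upper bound, for which I first establish the argmax continuity $M_n^{\star} \to M^{\star}$. Using the subsequence argument from the proof of Theorem \ref{Thm:quantrelent-limdist}, given any subsequence, Skorokhod's representation together with compactness of $\cM$ yields a further subsequence $(n_k)$ along which $(\rho_{n_k}, \sigma_{n_k}) \to (\rho, \sigma)$ almost surely and $M_{n_k}^{\star} \to \tilde M \in \cM$. Joint continuity of $F$ at $(M^{\star}, \rho, \sigma)$ applied to the optimality chain $F(M^{\star}, \rho_{n_k}, \sigma_{n_k}) \leq F(M_{n_k}^{\star}, \rho_{n_k}, \sigma_{n_k})$ yields $F(M^{\star}, \rho, \sigma) \leq F(\tilde M, \rho, \sigma)$, and combined with the reverse inequality from optimality of $M^{\star}$, uniqueness forces $\tilde M = M^{\star}$. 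With $M_n^{\star} \to M^{\star}$ in hand, the upper bound is analyzed by a first-order Taylor expansion of $F(M_n^{\star}, \cdot, \cdot)$ about $(\rho, \sigma)$: the linear terms, multiplied by $r_n$, converge to the same RHS as the lower bound by joint continuity in the $M$-argument (which turns $r_n(\rho_n - \rho), r_n(\sigma_n - \sigma)$ into $L_1, L_2$ and the coefficients depending on $M_n^\star$ into those depending on $M^\star$), while the second-order remainder vanishes upon multiplication by $r_n$ because $P_{\sigma, M^{\star}}(i)$ is bounded below on the support of $P_{\rho, M^{\star}}$ (a consequence of $\rho \ll\gg \sigma$) and the same bound persists uniformly for $M_n^{\star}$ close to $M^{\star}$, permitting the same uniform-integrability bookkeeping as in the proof of Theorem \ref{Thm:quantrelent-limdist}. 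A squeeze across the two bounds then yields \eqref{eq:mrel-twosample-alt}.
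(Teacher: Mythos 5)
Your proposal is correct and follows essentially the same route as the paper: the identical two-sided sandwich between evaluations at $M^{\star}$ and $M_n^{\star}$, argmax consistency $M_n^{\star}\to M^{\star}$ from compactness and uniqueness along an a.s.\ convergent Skorokhod subsequence, and a classical Taylor expansion of $x\log(x/y)$ whose linear terms give the limit and whose second-order remainder vanishes using the lower bound on $P_{\sigma,M}$ afforded by $\rho \ll\gg \sigma$. The one place you assert rather than prove is the joint continuity of $(M,\rho,\sigma)\mapsto \qrel{M(\rho)}{M(\sigma)}$ at the a priori unknown subsequential limit $\tilde M$; the paper closes this by deriving a Lipschitz-type estimate, uniform over $M\in\cM$, from the integral representation of relative entropy together with $c_1\sigma\leq\rho\leq c_2\sigma$, which is the step worth making explicit.
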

Before, we proceed with the proof of Proposition \ref{Thm:measrelent-limdist}, a few remarks are in order. 
Since the definition of $\mathsf{D}_{\cM}$ itself involves a supremum, a method of proof similar to that of  Theorem \ref{Thm:sandrelent-limdist} applies. However, a key technical difference arises due to the fact that without additional assumptions, the maximizer in \eqref{eq:mrelent} is not unique. Moreover, the maximizer does not have a closed form expression in general. Hence, to establish the above claim, our proof involves showing that the optimal measurement POVM of the empirical measured relative entropy, $\mrel{\rho_n}{\sigma_n}$, converges to  $M^{\star}$ in operator norm. This convergence necessitates the requirement of a  unique $M^{\star}$ as stated above.
 \begin{proof}
It suffices to show that for every subsequence of $\NN$, there exists a further subsequence  along which the convergence in \eqref{eq:mrel-twosample-alt} holds.  Let $n_{k_j}$ be a subsequence such that  $\big((r_{n_{k_j}}(\rho_{n_{k_j}}-\rho),r_{n_{k_j}}(\sigma_{n_{k_j}}-\sigma)\big) \rightarrow (L_1,L_2)$ 
 in trace-norm a.s.,  
which exists by the same reasons stated in the proof of Theorem \ref{Thm:quantrelent-limdist}.  
Let $M_n^{\star}$ be such that  
\begin{align}
    \mrel{\rho_n}{\sigma_n} = \qrel{P_{\rho_n,M_n^{\star}}}{P_{\sigma_n,M_n^{\star}}}.\notag
\end{align}
Such an $M_n^{\star}$ exists  since $\cM$ is compact and $\qrel{P_{\rho_n,M}}{P_{\sigma_n,M}}$ is a continuous functional of $M$ for $\rho_n \ll \sigma_n$.

We will show that $M_n^{\star} \rightarrow M^{\star}$ a.s. in operator norm. Note that since $\rho \ll \gg \sigma$, there exists constants $c_1,c_2>0$ such that  $c_1 \sigma \leq \rho \leq c_2 \sigma$. Since a quantum channel is a completely positive linear map, we also have  $c_1 M(\sigma) \leq M(\rho) \leq c_2 M(\sigma)$. Moreover, since $\rho_{n_{k_j}} \rightarrow \rho$ and $\sigma_{n_{k_j}} \rightarrow \sigma$ in trace norm a.s., there exists constants $\tilde c_1,\tilde c_2$ (depending on the realization $\rho_{n_{k_j}}$ and $\sigma_{n_{k_j}}$) such that for all $j$ sufficiently large, $0<\tilde c_1 \leq c_1$,  $0<c_2 \leq \tilde c_2$,  $\tilde c_1 \sigma_{n_{k_j}} \leq \rho_{n_{k_j}} \leq \tilde c_2 \sigma_{n_{k_j}} $ and $\tilde c_1 M\big(\sigma_{n_{k_j}}\big) \leq M \big(\rho_{n_{k_j}}\big) \leq \tilde c_2 M \big( \sigma_{n_{k_j}}\big) $.   
Then,  denoting by $A_-$, the negative part of $A=A_+-A_-$, we have
\begin{align}
\qrel{M(\rho)}{M(\sigma)}&\stackrel{(a)}{=}\int_{\tilde c_1}^{\tilde c_2} \frac{ds}{s} \tr{\big(M(\rho)-sM(\sigma)\big)_-}+ \log \tilde c_2+1-\tilde c_2 \notag \\
&\stackrel{(b)}{=}\int_{\tilde c_1}^{\tilde c_2} \frac{ds}{2s} \left(s-1+\norm{\big(M(\rho)-sM(\sigma)\big)}_1 \right)+ \log \tilde c_2+1-\tilde c_2, \notag
\end{align}
where $(a)$ follows from the integral representation of quantum relative entropy (see \cite[Theorem 6]{frenkel2023integral} and \cite[Corollary 1]{jencova-23}), while $(b)$ uses $\norm{A}_1=A_++A_-$ and $\tr{M(\rho)-sM(\sigma)}=1-s$. Using similar representation for $\qrel{M(\rho_{n_{k_j}})}{M(\sigma_{n_{k_j}})}$ and subtracting from previous equation, we have
\begin{align}
 & \abs{\qrel{M(\rho)}{M(\sigma)}-\qrel{M(\rho_{n_{k_j}})}{M(\sigma_{n_{k_j}})}}\notag \\
  &=\abs{\int_{\tilde c_1}^{\tilde c_2} \frac{ds}{2s} \norm{M(\rho)-sM(\sigma)}_1-\norm{M(\rho_{n_{k_j}})-sM(\sigma_{n_{k_j}})}_1} \notag \\
  & \leq \int_{\tilde c_1}^{\tilde c_2} \frac{ds}{2s} \abs{\big\|M(\rho)-sM(\sigma)\big\|_1-\big\|M(\rho_{n_{k_j}})-sM(\sigma_{n_{k_j}})\big\|_1} \notag \\
  & \stackrel{(a)}{\leq} \int_{\tilde c_1}^{\tilde c_2} \frac{ds}{2s} \big\|M(\rho)-M(\rho_{n_{k_j}})-sM(\sigma)+sM(\sigma_{n_{k_j}})\big\|_1 \notag \\
  & \stackrel{(b)}{\leq}  \int_{\tilde c_1}^{\tilde c_2} \frac{ds}{2s} \left(\big\|M(\rho)-M(\rho_{n_{k_j}})\big\|_1+s\big\|M(\sigma)-M(\sigma_{n_{k_j}})\big\|_1 \right) \notag \\
  & \stackrel{(c)}{\leq}  \int_{\tilde c_1}^{\tilde c_2} \frac{ds}{2s} \left(\big\|\rho-\rho_{n_{k_j}}\big\|_1+s\big\|\sigma-\sigma_{n_{k_j}}\big\|_1\right), \notag 
\end{align}
where $(a)$ and $(b)$ follows from triangle inequality, while $(c)$ is due to data processing inequality for trace norm. By setting $h_n(M)=\qrel{M(\rho_n)}{M(\sigma_n)}$, it follows from the above inequality that $h_{n_{k_j}}(M)$ converges uniformly to $\qrel{M(\rho)}{M(\sigma)}$  (as a function of $M$) given $\rho_{n_{k_j}} \rightarrow \rho$ and $\sigma_{n_{k_j}} \rightarrow \sigma$. Also, note that for any $M,\tilde M \in \cM$, we have via similar steps as above that
\begin{align}
  \abs{\qrel{M(\rho)}{M(\sigma)}-\mathsf{D}\big(\tilde M(\rho)\|\tilde M(\sigma)\big)}  
  & \leq  \int_{c_1}^{c_2} \frac{ds}{2s} \left(\big\|M(\rho)-\tilde M(\rho)\big\|_1+s\big\|M(\sigma)-\tilde M(\sigma)\big\|_1 \right) \notag \\
  & \leq  \int_{c_1}^{c_2} \frac{ds}{2s} \big\|M-\tilde M\big\| (s+1)  \notag \\
  &\leq \frac{\big\|M-\tilde M\big\|}{2} \left(\ln \frac{c_2}{c_1}+c_2-c_1\right), \notag
\end{align}
where $\norm{\cdot}$ in the last two inequalities denotes the operator norm. Hence, $\qrel{M(\rho)}{M(\sigma)}$ is a uniformly continuous functional of $M$ in operator norm.  Since $\cM$ is compact in the operator norm topology  and $M^{\star}=\argmax \mrel{\rho}{\sigma}$ is unique, the aforementioned uniform convergence implies that $M^{\star}_{n_{k_j}} \rightarrow M^{\star}$ a.s. in operator norm.

\medskip

Equipped with the above, we next prove \eqref{eq:mrel-twosample-alt}. Observe  that by definition of $M_n^{\star}$ and $M^{\star}$, 
we have 
\begin{subequations}\label{eq:measrelent-bnd}
\begin{align}
 &  r_n\left( \mrel{\rho_n}{\sigma_n} -\mrel{\rho}{\sigma} \right) \leq  r_n\left(\qrel{P_{\rho_n,M_n^{\star}}}{P_{\sigma_n,M_n^{\star}}}-\qrel{P_{\rho,M_n^{\star}}}{P_{\sigma,M_n^{\star}}}\right), \label{eq:measrelent-ub} \\
 &  r_n\left( \mrel{\rho_n}{\sigma_n} -\mrel{\rho}{\sigma} \right) \geq  r_n\left(\qrel{P_{\rho_n,M^{\star}}}{P_{\sigma_n,M^{\star}}}-\qrel{P_{\rho,M^{\star}}}{P_{\sigma,M^{\star}}}\right). \label{eq:measrelent-lb}
\end{align}
\end{subequations}
Denoting the LHS and RHS  of  \eqref{eq:mrel-twosample-alt} by $g_n(\rho_n,\sigma_n)$ and $g(L_1,L_2)$, respectively, we will show that along the subsequence $(n_{k_j})_{j \in \NN}$, the RHS of \eqref{eq:measrelent-ub} and \eqref{eq:measrelent-lb} converge a.s. to  $g(L_1,L_2)$. This then implies that  $g_{n_{k_j}} \rightarrow g(L_1,L_2)$ a.s. and  also weakly.

From Taylor expansion applied to the function $f(x,y)=x\log \frac{x}{y}$, we have\footnote{The conditions required for validity of Taylor's expansion requires $u,v,\bar u,\bar v >0$, but as noted in the proof of \cite[Proposition 1]{SGK-IT-2023}, this expansion is also valid  for all $u,v,\bar u,\bar v$ as mentioned above. } for $u,v,\bar u,\bar v$ such that $u \geq 0, v, \bar u,\bar v>0$ or $u=0, v \geq 0, \bar u,\bar v>0$ that
\begin{align}
  u\log \frac{u}{v}&=\bar u \log \frac{\bar u}{\bar v}+  \left(1+\log \frac{\bar u}{\bar v}\right)(u-\bar u)-\frac{\bar u}{\bar v}(v-\bar v)+\int_0^1 \frac{(1-\tau)(u-\bar u)^2 }{(1-\tau) \bar u+\tau u}d\tau \notag \\
  & +\int_0^1 \frac{(1-\tau)((1-\tau)\bar u+\tau u)(v-\bar v)^2 }{\big((1-\tau)\bar v+\tau v\big)^2}d\tau -2\int_0^1 \frac{(1-\tau)(u-\bar u)(v-\bar v) }{\big((1-\tau)\bar v+\tau v\big)^2}d\tau. \label{eq:taylorexp-mre}
\end{align}
Let  $u=P_{\rho_n,M_n^{\star}}(i)$, $v=P_{\sigma_n,M_n^{\star}}(i)$, $\bar u=P_{\rho,M_n^{\star}}(i)$, $\bar v=P_{\sigma, M_n^{\star}}(i)$ for  $i \in \cI$ and note that the aforementioned constraints on $u,v,\bar u,\bar v$ are satisfied since $P_{\rho_n,M_n^{\star}} \ll P_{\rho,M_n^{\star}} \ll P_{\sigma,M_n^{\star}}$ and $P_{\rho_n,M_n^{\star}} \ll P_{\sigma_n,M_n^{\star}} \ll P_{\sigma,M_n^{\star}}$  due to $\rho_n \ll \sigma_n \ll \sigma$ and  $\rho_n \ll \rho \ll  \sigma$.  
Substituting the above and summing the resulting expression over $i$  in the above equation leads to
\begin{align}
f_n&:=r_n\big(\qrel{M_n^{\star}(\rho_n)}{M_n^{\star}(\sigma_n)}-\qrel{M_n^{\star}(\rho)}{M_n^{\star}(\sigma)} \big)\notag \\
&= \sum_{i \in \cI} r_n (P_{\rho_n,M_n^{\star}}(i)-P_{\rho,M_n^{\star}}(i))\log \frac{P_{\rho,M_n^{\star}}(i)}{P_{\sigma, M_n^{\star}}(i)}-\sum_{i \in \cI} r_n(P_{\sigma_n,M_n^{\star}}(i)-P_{\sigma, M_n^{\star}}(i))\frac{P_{\rho,M_n^{\star}}(i)}{P_{\sigma, M_n^{\star}}(i)} +R_n,\notag
\end{align}
where $R_n=R_{n,1}+R_{n,2}-2R_{n,3}$,  and
\begin{align}
    R_{n,1}&:=\sum_{i \in \cI}\int_{0}^1 \frac{(1-\tau)\Big(r_n^{\frac 12}\big(P_{\rho_n,M_n^{\star}}(i)-P_{\rho,M_n^{\star}}(i)\big)\Big)^2 }{(1-\tau) P_{\rho,M_n^{\star}}(i)+\tau P_{\rho_n,M_n^{\star}}(i)}d\tau, \notag \\
    R_{n,2}&:=\sum_{i \in \cI}\int_0^1 \frac{(1-\tau)((1-\tau)P_{\rho,M_n^{\star}}(i)+\tau P_{\rho_n,M_n^{\star}}(i))\Big(r_n^{\frac 12}\big((P_{\sigma_n,M_n^{\star}}(i)-P_{\sigma, M_n^{\star}}(i)\big)\Big)^2 }{\big((1-\tau)P_{\sigma_n,M_n^{\star}}(i)+\tau P_{\sigma, M_n^{\star}}(i)\big)^2}d\tau, \notag \\ 
    R_{n,3}&:=\sum_{i \in \cI}\int_{0}^1 \frac{(1-\tau)r_n^{\frac 12}\big(P_{\rho_n,M_n^{\star}}(i)-P_{\rho,M_n^{\star}}(i)\big)r_n^{\frac 12}\big(P_{\sigma_n,M_n^{\star}}(i)-P_{\sigma, M_n^{\star}}(i)\big) }{\big((1-\tau)P_{\sigma, M_n^{\star}}(i)+\tau P_{\sigma_n,M_n^{\star}}(i)\big)^2}d\tau. \notag  
\end{align}
Note that since $M^{\star}_{n_{k_j}} \rightarrow M^{\star}$ in operator norm a.s. and $\big((r_{n_{k_j}}(\rho_{n_{k_j}}-\rho),r_{n_{k_j}}(\sigma_{n_{k_j}}-\rho)\big) \rightarrow (L_1,L_2)$  in trace-norm a.s., we have $P_{\rho_{n_{k_j}},M_{n_{k_j}}^*}(i) \rightarrow P_{\rho,M^{\star}}(i)$, $P_{\sigma_{n_{k_j}},M_{n_{k_j}}^*}(i) \rightarrow P_{\sigma,M^{\star}}(i)$, and 
\begin{align}
&r_{n_{k_j}}\Big(P_{\rho_{n_{k_j}},M_{n_{k_j}}^*}(i)-P_{\rho,M_{n_{k_j}}^*}(i)\Big)=\operatorname{Tr}\Big[M_{n_{k_j}}^*(i)r_{n_{k_j}}\big(\rho_{n_{k_j}}-\rho)\Big] \rightarrow \tr{M^{\star}(i)L_1}, \notag \\
&r_{n_{k_j}}\Big(P_{\sigma_{n_{k_j}},M_{n_{k_j}}^*}(i)-P_{\sigma,M_{n_{k_j}}^*}(i)\Big)=\tr{M_{n_{k_j}}^*(i)r_{n_{k_j}}\big(\sigma_{n_{k_j}}-\sigma)} \rightarrow \tr{M^{\star}(i)L_2}, \notag \\
&r_{n_{k_j}}^{\frac 12}\Big(P_{\rho_{n_{k_j}},M_{n_{k_j}}^*}(i)-P_{\rho,M_{n_{k_j}}^*}(i)\Big) \rightarrow 0, \quad \mbox{and} \quad r_{n_{k_j}}^{\frac 12}\Big(P_{\sigma_{n_{k_j}},M_{n_{k_j}}^*}(i)-P_{\sigma,M_{n_{k_j}}^*}(i)\Big)\rightarrow 0. \notag 
\end{align}
Hence, we have $R_{n_{k_j}} \rightarrow 0$ and $f_{n_{k_j}} \rightarrow g(L_1,L_2)$.  This implies that the RHS of \eqref{eq:measrelent-ub} converges to $g(L_1,L_2)$. Via analogous arguments, the RHS of \eqref{eq:measrelent-lb} also converges a.s. to $g(L_1,L_2)$ along the sequence $(n_{k_j})_{j \in \NN}$, which implies that $g_{n_{k_j}} \rightarrow g(L_1,L_2)$ a.s, and hence also weakly as claimed. This completes the proof of the proposition.
 \end{proof}

\end{appendices}

\bibliographystyle{IEEEtran}
\bibliography{ref,ref-quant}

\end{document}